\begin{document}
\title{
A Consumer-Theoretic Characterization of Fisher Market Equilibria
}
%
%
\author{Denizalp Goktas\inst{1}
\and
Enrique Areyan Viqueira\inst{1}
\and
Amy Greenwald\inst{1}
\\
\email{\{denizalp\_goktas, eareyan, amy\_greenwald\}@brown.edu}
}
\authorrunning{Goktas et al.}
%
\institute{Brown University, 115 Waterman st., Providence RI 02906, USA
}
\maketitle  
\begin{abstract}

In this paper, we bring consumer theory to bear in the analysis of Fisher markets whose buyers have arbitrary continuous, concave, homogeneous (CCH) utility functions representing locally non-satiated preferences.
The main tools we use are the dual concepts of expenditure minimization and indirect utility maximization. 
First, we use expenditure functions to construct a new convex program whose dual, like the dual of the Eisenberg-Gale program, characterizes the equilibrium prices of CCH Fisher markets. 
%
We then prove that the subdifferential of the dual of our convex program is equal to the negative excess demand in the associated market, which makes generalized gradient descent equivalent to computing equilibrium prices via t\^atonnement.
Finally, we run a series of experiments which suggest that t\^atonnement may converge at a rate of $O(\nicefrac{(1+E)}{t^2})$ in CCH Fisher markets that comprise buyers with elasticity of demand bounded by $E$.
Our novel characterization of equilibrium prices may provide a path to proving the convergence of t\^atonnement in Fisher markets beyond those in which buyers utilities exhibit constant elasticity of substitution.

\keywords{Market Equilibrium \and Market Dynamics \and Fisher Market.}


\end{abstract} 
%
\allowdisplaybreaks

\section{Introduction}
\label{sec:intro}

One of the seminal achievements in mathematical economics is the proof of existence of equilibrium prices in \mydef{Arrow-Debreu competitive economies} \cite{arrow1954existence}.
This result, while celebrated, is non-constructive, and thus provides little insight into the computation of equilibrium prices. 
The computational question dates back to L\'eon Walras, a French economist, who in 1874 conjectured that a decentralized
price-adjustment process he called \mydef{t\^atonnement},
which reflects market behavior, would converge to equilibrium prices \cite{walras}.
An early positive result in this vein was provided by Arrow, Block and Hurwicz, who showed that a continuous version of t\^atonnement converges in markets with an aggregate demand function that satisfies the \mydef{weak gross substitutes (WGS)} property \cite{arrow-hurwicz}.
Unfortunately, following this initial positive result, Herbert Scarf provided his eponymous example of an economy for which the t\^atonnement process does not converge, dashing all hopes of the t\^atonnement process justifying the concept of market equilibria in general \cite{scarf-eaves}. 
Nonetheless, further study of t\^atonnement in simpler models than a full-blown Arrow-Debreu competitive economy remains important, as some real-world markets are indeed simpler \cite{duffie1989simple}.

For market equilibria to be justified, not only should they be backed by a natural price-adjustment process such as t\^atonnement, as economists have long argued, they should also be computationally efficient.
As Kamal Jain put it, ``If your laptop cannot find it, neither can the market'' \cite{AGT-book}.
A detailed inquiry into the computational properties of market equilibria was initiated by \citeauthor{devanur2002market} \cite{devanur2002market,devanur2008market}, who studied a special case of the Arrow-Debreu competitive economy known as the \mydef{Fisher market} \cite{brainard2000compute}. This model, for which Irving Fisher computed equilibrium prices using a hydraulic machine in the 1890s, is essentially the Arrow-Debreu model of a competitive economy in which there are no firms, and buyers are endowed with an artificial currency \cite{AGT-book}.
\citeauthor{devanur2002market} \cite{devanur2002market} discovered a connection between the \mydef{Eisenberg-Gale convex program} and Fisher markets in which buyers have linear utility functions, thereby providing a (centralized) polynomial time algorithm for equilibrium computation in these markets~\cite{devanur2002market,devanur2008market}. 

Their work was built upon by \citeauthor{jain2005market} \cite{jain2005market}, who extended the Eisenberg-Gale program to all Fisher markets whose buyers have \mydef{continuous, concave, and homogeneous (CCH)} utility functions.
Further, they proved that the equilibrium of Fisher markets for buyers with CCH utility functions can be computed in polynomial time by interior point methods.%
\footnote{We refer to Fisher markets that comprise buyers with a certain utility function by the name of the utility function, e.g., we call a Fisher market that comprise buyers with CCH utility functions a CCH Fisher market.}
Even more recently, \citeauthor{gao2020first} \cite{gao2020first} go beyond interior point methods to develop algorithms that converge in \mydef{linear}, \mydef{quasilinear}, and \mydef{Leontief} Fisher markets.
However, unlike t\^atonnement, these methods provide little insight into how markets reach equilibria.

More recently, \citeauthor{cole2008fast} \cite{cole2008fast, cole2010discrete}, and \citeauthor{fisher-tatonnement} \cite{fisher-tatonnement} showed the fast convergence of t\^atonnement in Fisher markets where the buyers' utility functions satisfy weak gross substitutes with bounded elasticity of demand, and the \mydef{constant elasticity of substitution (CES)} properties respectively, the latter of which is a subset of the class of CCH utility functions \cite{fisher-tatonnement, cole2008fast, cole2010discrete}.
Aside from t\^atonnement being a plausible model of real-world price movements due to its decentralized nature, \citeauthor{cole2008fast} argue for the plausibility of t\^atonnement by proving that it is an abstraction for in-market processes in a real-world-like model called the ongoing market model \cite{cole2008fast, cole2010discrete}.
The plausibility of t\^atonnement as a natural price-adjustment process has been further supported by \citeauthor{gillen2020divergence} \cite{gillen2020divergence}, who demonstrated the predictive accuracy of t\^atonnement in off-equilibrium trade settings \cite{gillen2020divergence}.
This theoretical and empirical evidence for t\^atonnement makes it even more important to understand its convergence properties,
so that we can better characterize those markets for which we can predict price movements and, in turn, equilibria. 

\wine{Another price-adjustment process that has been shown to converge to market equilibria in Fisher markets is \mydef{proportional response dynamics}, first introduced by \citeauthor{first-prop-response} for linear utilities \cite{first-prop-response}; then expanded upon and shown to converge by \citeauthor{proportional-response} for all CES utilties \cite{proportional-response}; and very recently shown to converge in Arrow-Debreu exchange markets with linear utilities by \citeauthor{branzei2021proportional} \cite{branzei2021proportional}.
The study of the proportional response process was proven fundamental when \citeauthor{grad-prop-response} \cite{grad-prop-response} noticed its relationship to gradient descent.
This discovery opened up a new realm of possibilities in analyzing the convergence of market equilibrium processes.
For example, it allowed \citeauthor{cheung2018dynamics} \cite{cheung2018dynamics} to generalize the convergence results of proportional response dynamics to Fisher markets for buyers with mixed CES utilities.
This same idea was applied by \citeauthor{fisher-tatonnement} \cite{fisher-tatonnement} to prove the convergence of t\^atonnement in Leontief Fisher markets, using the equivalence between generalized gradient descent 
on the dual of the Eisenberg-Gale program 
and t\^atonnement, first observed by \citeauthor{devanur2008market} \cite{devanur2008market}}.

\paragraph{Our Approach and Findings}
In consumer theory \cite{mas-colell}, consumers/buyers are assumed to solve the \mydef{utility maximization problem} (\mydef{UMP}), in which each buyer maximizes its utility constrained by its budget, thereby discovering its optimal demand.
Dual to this problem is the \mydef{expenditure minimization problem} (\mydef{EMP}), in which each buyer minimizes its expenditure constrained by its desired utility level, an alternative means of discovering its optimal demand.
These two problems are intimately connected by a deep mathematical structure,
yet most existing approaches to computing market equilibria focus on UMP only.



In this paper, 
we exploit the relationship between EMP and equilibrium prices to provide a new convex program, which like the seminal Eisenberg-Gale program characterizes the equilibrium prices of Fisher markets
assuming buyers with arbitrary CCH utility functions.
Additionally, by exploiting the duality structure between UMP and EMP, we provide a straightforward interpretation of the dual of our program, which also sheds light on the dual of Eisenberg-Gale program.
In particular, while it is known that an equilibrium allocation that solves the Eisenberg-Gale program is one that maximizes the buyers' utilities given their budgets at equilibrium prices (UMP; the primal), we show that equilibrium prices are those that minimize the buyers' expenditures at the utility levels associated with their equilibrium allocations (EMP; the dual).

Our characterization of CCH Fisher market equilibria via UMP and EMP also allows us 
to prove that the subdifferential of the dual of our convex program is equal to the negative excess demand in the corresponding market \cite{fisher-tatonnement}.%
\footnote{Similarly, it is known that the subdifferential of the dual of the Eisenberg-Gale program is equal to the negative excess demand in the corresponding market \cite{fisher-tatonnement}.  Our result
also implies this known result, since the two programs' objective functions differ only by a constant.}
Consequently, solving the dual of our convex program via generalized gradient descent is equivalent to t\^atonnement (just as generalized gradient descent 
on the dual of the Eisenberg-Gale program is equivalent to t\^atonnement \cite{devanur2008market}).



Finally, we run a series of experiments which suggest \amy{conjecture!} that t\^atonnment may converge at a rate of $O(\nicefrac{(1 + E)}{t^2})$ in CCH markets where buyers have \mydef{bounded elasticity of demand (BED)} with elasticity parameter $E$, a class of markets that includes CES Fisher markets. 
Assuming bounded elasticity of demand, bounded changes in prices result in bounded changes in demand.
A summary of all known t\^atonnement convergence rate results, as well as this conjecture, appears in \Cref{fig:utility-functions}.

\paragraph{Roadmap}
In Section \ref{sec:prelim}, we introduce essential notation and definitions, and summarize our results.
In Section~\ref{sec:program}, we derive the dual of the Eisenberg-Gale program and propose a new convex program whose dual characterizes equilibrium prices in CCH Fisher markets via expenditure functions.
In \Cref{sec:equiv}, we show that the subdifferential of the dual of our new convex program is equivalent to the negative excess demand in the market, which implies an equivalence between generalized gradient descent and t\^atonnement.
In Section \ref{sec:convergence}, we include an empirical analysis of t\^atonnement in CCH Fisher markets.

\sidecaptionvpos{figure}{c}
\begin{SCfigure}
\centering
\begin{tikzpicture}[framed, scale=0.3, transform shape]
          \node [venn circle = cyan, minimum size=2cm, scale= 2] (CES) at (5,-2) {\begin{tabular}{c}
               {CES ($\rho < 1$)} \\
               {$(1-\Theta(1))^t$}
          \end{tabular} };
          
          \node [venn circle = cyan, minimum size=2cm, scale=1.7] (leontief) at (7,2) {\begin{tabular}{c}
               {Leontief} \\
               {$O(\nicefrac{1}{t})$}
          \end{tabular}};

          \node [venn circle = magenta, minimum size= 15cm] (CCH) at (5,0) {};
          \node [label, scale = 2] (CCHlabel) at (5,5) {\begin{tabular}{c}
               {CCH BED} \\
               {$O(\nicefrac{(1 + E)}{t^2})$}
          \end{tabular}};  
          \node [venn circle = blue, minimum size=10cm] (WGS) at (-0.5,0.5) {};
          \node [label, scale = 1.5] (WGSlabel) at (-3.9, 0.75) {\begin{tabular}{c}
               {WGS} \\
               {$\left(1-\Theta(1)\right)^t$}
          \end{tabular}};

          \node [label] (title) at (-2,8){          \begin{tabular}{c}
               {\Huge Space of Fisher} \\
                {\Huge markets}
          \end{tabular}};
          \node [label, scale = 2] (WGSCES) at (-3, -8) {CES with $\rho > 0$};
          \draw[->] (WGSCES) -- (3.25, -1.25);
\end{tikzpicture}
\caption{
    The convergence rates of t\^atonnement in different Fisher markets.
    We color previous contributions blue, and our conjecture in red.
    We note that the convergence rate for WGS markets does not apply to markets where the elasticity of demand is unbounded, e.g., linear Fisher markets; likewise, the convergence rate for CES Fisher markets does not apply to linear Fisher markets.}
    \label{fig:utility-functions}
\end{SCfigure}

\if 0
\begin{figure}[ht!]
    \centering
    \begin{tikzpicture}[framed, scale=0.3, transform shape]
          \node [venn circle = cyan, minimum size=2cm, scale= 2] (CES) at (5,-2) {\begin{tabular}{c}
               {CES ($\rho < 1$)} \\
               {$(1-\Theta(1))^t$}
          \end{tabular} };
          
          \node [venn circle = cyan, minimum size=2cm, scale=1.7] (leontief) at (7,2) {\begin{tabular}{c}
               {Leontief} \\
               {$O(\nicefrac{1}{t})$}
          \end{tabular}};

          \node [venn circle = magenta] (CCH) at (5,0) {};
          \node [label, scale = 2] (CCHlabel) at (4,3.5) {\begin{tabular}{c}
               {CSCH} \\
               {$O(\nicefrac{1}{t})$}
          \end{tabular}};  
          \node [venn circle = blue, minimum size=6.5cm] (WGS) at (0.5,0.5) {};
          \node [label, scale = 1.5] (WGSlabel) at (-1.35, 0.75) {\begin{tabular}{c}
               {WGS} \\
               {$\left(1-\Theta(1)\right)^t$}
          \end{tabular}};

          \node [label] (title) at (-2,5){\Huge Space of Fisher markets};
          \node [label, scale = 2] (WGSCES) at (-1, -5) {CES with $\rho > 0$};
          \draw[->] (WGSCES) -- (3.25, -1.25);
\end{tikzpicture}
    \caption{
    The convergence rates of t\^atonnement for different Fisher markets. We color previous contributions blue, and our contributions red, i.e., we study CSCH Fisher markets. We note that the convergence rate for WGS markets does not apply to markets where the elasticity of substitution is unbounded, e.g., linear Fisher markets; likewise, the convergence rate for CES Fisher markets does not apply to linear Fisher markets.}
    \label{fig:utility-functions}
\end{figure}

\fi

\section{Preliminaries and an Overview of Results}
\label{sec:prelim}

We use Roman uppercase letters to denote sets (e.g., $X$), bold uppercase letters to denote matrices (e.g., $\allocation$), bold lowercase letters to denote vectors (e.g., $\price$), and Roman lowercase letters to denote scalar quantities (e.g., $c$). We denote the $i^{th}$ row vector of any matrix (e.g., $\allocation$) by the equivalent bold lowercase letter with subscript $i$ (e.g., $\allocation[\buyer])$. Similarly, we denote the $j$th entry of a vector (e.g., $\price$ or $\allocation[\buyer]$) by the corresponding Roman lowercase letter with subscript $j$ (e.g., $\price[\good]$ or $\allocation[\buyer][\good]$).
We denote the set of numbers $\left\{1, \hdots, n\right\}$ by $[n]$, the set of natural numbers by $\N$, the set of real numbers by $\R$, the set of non-negative real numbers by $\R_+$ and the set of strictly positive real numbers by $\R_{++}$. We denote by $\project[X]$ the Euclidean projection operator onto the set $X \subset \R^{n}$: i.e., $\project[X](\x) = \argmin_{\z \in X} \left\| \x - \z \right\|_2$. 
We also define some set operations. Unless otherwise stated, the sum of a scalar by a set and of two sets is defined as the Minkowski sum, \wine{e.g., $c + A = \{c + a \mid a \in A\}$ and $A+B = \{a + b \mid a \in A, b \in B\}$,} and the product of a scalar by a set and two sets is defined as the Minkowski product\wine{, e.g., $cA = \{ca \mid a \in A \}$ and $AB = \{ab \mid a \in A, b \in B\}$}{.}

\subsection{Consumer Theory}

In this paper, we consider the general class of utility functions $\util[\buyer]: \R^\numgoods \to \R$ that are continuous, concave and homogeneous.
\wine{Recall that a set $U$ is open if for all $x \in U$ there exists an $\epsilon > 0$ such that the open ball $B_{\epsilon}(x)$ centered at $x$ with radius $\epsilon$ is a subset of $U$, i.e., $B_{\epsilon}(x) \subset U$.
A function $f: \R^m \to \R$ is said to be \mydef{continuous} if $f^{-1}(U)$ is open for every $U \subset \R$.
A function $f: \R^m \to \R$ is said to be concave if $\forall \lambda \in (0, 1), \x, \y \in \R^m, f(\lambda \x + (1-\lambda)\y) \geq \lambda f(\x) +(1-\lambda)f(\y)$ and \mydef{strictly concave} if strict inequality holds.
A function $f: \R^m \to \R$ is said to be \mydef{homogeneous} of degree $k \in \N_+$ if for all $\allocation[ ][] \in \R^{m}, \lambda > 0, f(\lambda \allocation[ ]) = \lambda^{k} f(\allocation[ ][])$.
Unless otherwise indicated, without loss of generality a homogeneous function is assumed to be homogeneous of degree 1.%
\footnote{A homogeneous utility function of degree $k$ can be made homogeneous of degree 1 without affecting the underlying preference relation by passing the utility function through the monotonic transformation $x \mapsto \sqrt[k]{x}$.}
A utility function $\util[ ]: \R^{\numgoods} \to \R$ assigns a real value to elements of $\R^{\numgoods}$, i.e., to every possible allocation of goods.
Every utility function then represents some \mydef{preference relation} $\succeq$ over goods such that if for two bundle of goods $\x, \y \in \R^\numgoods$, $\util[ ](\x) \geq \util[ ](\y)$ then $\x \succeq \y$.
A preference relation $\succeq$ is said to be \mydef{locally non-satiated} iff for all $\x \in \R^\numgoods$ and $\epsilon > 0$, there exists $\y \in B_{\epsilon}(\x)$ such that $\y \succ \x$.
\emph{Throughout this paper, we assume utility functions represent locally non-satiated preferences.}
If a buyer's utility function represents locally non-satiated preferences, there always exists a better bundle for that buyer if their budget increases.}
\wine{We define some important concepts that pertain to utility functions.}
The \mydef{indirect utility function} $\indirectutil[\buyer]: \mathbb{R}^{\numgoods}_+ \times \mathbb{R}_+ \to \mathbb{R}_+$ takes as input prices $\price$ and a budget $\budget[\buyer]$ and outputs the maximum utility the buyer can achieve at those prices given that budget, i.e., $ \indirectutil[\buyer](\price, \budget[\buyer]) = \max_{\allocation[ ] \in \mathbb{R}^\numgoods_+ : \price \cdot \allocation[ ] \leq \budget[\buyer]} \util[\buyer](\allocation[ ])$ 
\wine{If the utility function is continuous, then the indirect utility function is continuous and homogeneous of degree 0 in $\price$ and $\budget[\buyer]$ jointly,
i.e., $\forall \lambda > 0, \indirectutil[\buyer] (\lambda \price, \lambda \budget[\buyer]) = 
\indirectutil[\buyer] (\price, \budget[\buyer])$ is non-increasing in $\price$, strictly increasing in $\budget[\buyer]$, and convex in $\price$ and $\budget[\buyer]$.}

The \mydef{Marshallian demand} is a correspondence $\marshallian[\buyer]: \R^{\numgoods}_+ \times \R_+ \rightrightarrows \R^{\numgoods}_+$ that takes as input prices $\price$ and a budget $\budget[\buyer]$ and outputs the utility maximizing allocation of goods at budget $\budget[\buyer]$, i.e., $ \marshallian[\buyer](\price, \budget[\buyer]) = \argmax_{\allocation[ ] \in \mathbb{R}^\numgoods_+ : \price \cdot \allocation[ ] \leq \budget[\buyer]} \util[\buyer](\allocation[ ])$:
i.e., $\marshallian[\buyer](\price, \budget[\buyer]) = \argmax_{\allocation[\buyer] \in \mathbb{R}^\numgoods_+ : \price \cdot \allocation[\buyer] \leq \budget[\buyer]} \util[\buyer](\allocation[\buyer])$.
\wine{The Marshallian demand is convex-valued if the utility function is continuous and concave, and unique if the utility function is continuous and strictly concave}

The \mydef{expenditure function} $\expend[\buyer]: \mathbb{R}^{\numgoods}_+ \times \mathbb{R}_+ \to \mathbb{R}_+$ takes as input prices $\price$ and a utility level $\goalutil[\buyer]$ and outputs the minimum amount the buyer must spend to achieve that utility level at those prices, i.e., $\expend[\buyer](\price, \goalutil[\buyer]) = \min_{\allocation[ ] \in \mathbb{R}^\numgoods_+: \util[\buyer](\allocation[ ]) \geq \goalutil[\buyer]} \price \cdot \allocation[ ]$ 
\wine{If the utility function $\util[\buyer]$ is continuous, then the expenditure function is continuous and homogeneous of degree 1 in $\price$ and $\goalutil[\buyer]$ jointly,
$\forall \lambda >0, \expend[\buyer] (\lambda \price, \lambda \goalutil[\buyer]) = \lambda \expend[\buyer] (\price, \goalutil[\buyer])$ is non-decreasing in $\price$, strictly increasing 
in $\goalutil[\buyer]$, and concave in $\price$ and $\goalutil[\buyer]$.}{}

The \mydef{Hicksian demand} is a correspondence $\hicksian[\buyer]: \mathbb{R}^{\numgoods}_+ \times \mathbb{R}_+  \rightrightarrows \mathbb{R}_+$ that takes as input prices $\price$ and a utility level $\goalutil[\buyer]$ and outputs the cost-minimizing allocation of goods at utility level $\goalutil[\buyer]$,  i.e., $\hicksian[\buyer](\price, \goalutil[\buyer]) = \argmin_{\allocation[ ] \in \mathbb{R}^\numgoods_+: \util[\buyer](\allocation[ ]) \geq \goalutil[\buyer]} \price \cdot \allocation[ ]$
\wine{The Hicksian demand is convex-valued if the utility function is continuous and concave, and unique if the utility function is continuous and strictly concave \cite{levin-notes, mas-colell}.}{}

In consumer theory, the demand of buyers can be determined by studying two dual problems, the \mydef{utility maximization problem (UMP)} and \mydef{the expenditure minimization problem (EMP)}. The UMP refers to the buyer's problem of maximizing its utility constrained by its budgets (i.e., optimizing its indirect utility function) in order to obtain its optimal demand (i.e., Marshallian demand), while the EMP refers to the buyer's problem of minimizing its expenditure constrained by its desired utility level (i.e., optimizing its expenditure function) in order to obtain its optimal demand (i.e., Hicksian demand). When the utilities are continuous, concave and represent locally non-satiated preferences the UMP and EMP are related through the following identities, which we use throughout the paper:
\begin{align}
    &\forall \budget[\buyer] \in \mathbb{R}_{+} & \expend[\buyer](\price, \indirectutil[\buyer](\price, \budget[\buyer])) = \budget[\buyer]\label{expend-to-budget}\\
    &\forall \goalutil[\buyer] \in \mathbb{R}_{+} &  \indirectutil[\buyer](\price, \expend[\buyer](\price, \goalutil[\buyer])) = \goalutil[\buyer]\label{indirect-to-value}\\
    & \forall \budget[\buyer] \in \mathbb{R}_{+} & \hicksian[\buyer](\price, \indirectutil[\buyer](\price, \budget[\buyer])) = \marshallian[\buyer](\price, \budget[\buyer])
    \label{hicksian-marshallian}\\
    &\forall \goalutil[\buyer] \in \mathbb{R}_{+} &  \marshallian[\buyer](\price, \expend[\buyer](\price, \goalutil[\buyer])) = \hicksian[\buyer](\price, \goalutil[\buyer])\label{marshallian-hicksian}
\end{align}

A good $\good \in \goods$ is said to be a \mydef{gross substitute (complement)} for a good $k \in \goods \setminus \{\good\}$ if $\sum_{\buyer \in \buyers} \marshallian[\buyer][\good](\price, \budget[\buyer])$ is increasing (decreasing) in $\price[k]$.
If the aggregate demand, $\sum_{\buyer \in \buyers} \marshallian[\buyer][\good](\price, \budget[\buyer])$, for good $k$ is instead weakly increasing (decreasing), good $\good$ is said to be a \mydef{weak gross substitute} (\mydef{complement}) for good $k$.

\if 0
The class of homogeneous utility functions includes the well-known \mydef{linear}, \mydef{Cobb-Douglas}, and \mydef{Leontief utility} functions:
\begin{align}
    &\text{Linear:} & \util[\buyer](\allocation[\buyer]) = \sum_{\good \in \goods} \valuation[\buyer][\good] \allocation[\buyer][\good]\\
    &\text{Cobb-Douglas:} & \util[\buyer](\allocation[\buyer]) = \prod_{\good \in \goods} \allocation[\buyer][\good]^{\valuation[\buyer][\good]}\\
    &\text{Leontief:} & \util[\buyer](\allocation[\buyer]) = \min_{\good : \valuation[\buyer][\good] \neq 0} \frac{\allocation[\buyer][\good]}{\valuation[\buyer][\good]}
\end{align}

\noindent
where each utility function is parameterized by the vector of valuations $\valuation[\buyer] \in \mathbb{R}_+^{\numbuyers}$, where each $\valuation[\buyer][\good]$ quantifies the value of good $j$ to buyer $i$.%
\footnote{For certain utility functions, such as quasilinear utilities, the vector of valuations captures the monetary value of goods.}
\fi

The class of homogeneous utility functions includes the well-known \mydef{linear}, \mydef{Cobb-Douglas}, and \mydef{Leontief} utility functions, each of which is a special case of the \mydef{Constant Elasticity of Substitution (CES)} utility function family, parameterized by $-\infty \leq \rho \leq 1$, and
given by
$\util[\buyer](\allocation[\buyer]) = \sqrt[\rho]{ \sum_{\good \in \goods} \valuation[\buyer][\good] \allocation[\buyer][\good]^\rho}$.
Linear utility functions are obtained when $\rho$ is $1$, while Cobb-Douglas and Leontief utility functions are obtained when $\rho$ approaches $0$ and $-\infty$, respectively. For $0 < \rho \leq 1$, goods are gross substitutes, e.g, Sprite and Coca-Cola, for $\rho = 1$; goods are perfect substitutes, e.g., Pepsi and Coca-Cola; and for $\rho < 0$, goods are complementary, e.g., left and right shoes.

The \mydef{(price) elasticity of demand} reflect how demand varies in response to a change in price.
More specifically, buyer $\buyer$'s elasticity of demand for good $\good \in \goods$ with respect to the price of good $k \in \goods$ is defined as $\frac{\partial \marshallian[\buyer][\good](\price, \budget[\buyer])}{\partial \price[k]} \frac{\price[k]}{\marshallian[\buyer][\good](\price, \budget[\buyer])}$. A buyer is said to have \mydef{bounded elasticity of demand} with elasticity parameter $E$ if $\min_{\price \in \R^\numgoods_+, \good, k \in \goods } \left\{ - \frac{\partial \marshallian[\buyer][\good](\price, \budget[\buyer])}{\partial \price[k]}  \frac{\price[k]}{\marshallian[\buyer][\good](\price, \budget[\buyer])} \right\} = E < \infty$.


\subsection{Fisher Markets}

A \mydef{Fisher market} comprises $\numbuyers$ buyers and $\numgoods$ divisible goods \cite{brainard2000compute}. As is usual in the literature, we assume that there is one unit of each good available \cite{AGT-book}. Each buyer $\buyer \in \buyers$ has a budget $\budget[\buyer] \in \mathbb{R}_{+}$ and a utility function $\util[\buyer]: \mathbb{R}_{+}^{\numgoods} \to \mathbb{R}$. An instance of a Fisher market is thus given by a tuple $(\numbuyers, \numgoods, \util, \budget)$ where $\util = \left\{\util[1], \hdots, \util[\numbuyers] \right\}$ is a set of utility functions, one per buyer, and $\budget \in \R_{+}^{\numbuyers}$ is the vector of buyer budgets. We abbreviate as $(\util, \budget)$ when $\numbuyers$ and $\numgoods$ are clear from context.

When the buyers' utility functions in a Fisher market are all of the same type, we qualify the market by the name of the utility function, e.g., a Leontief Fisher market.
Considering properties of goods, rather than buyers, a (Fisher) market satisfies \mydef{gross substitutes} (resp. \mydef{gross complements}) if all pairs of goods in the market are gross substitutes (resp. gross complements).
A Fisher market is \mydef{mixed} if all pairs of goods are either gross complements or gross substitutes.
A Fisher market exhibits \mydef{bounded elasticity of demand} with parameter $E$, if the elasticity of demand of the buyer with highest elasticity of demand is $E < \infty$.

An \mydef{allocation} $\allocation$ is a map from goods to buyers, represented as a matrix s.t. $\allocation[\buyer][\good] \ge 0$ denotes the quantity of good $\good \in \goods$ allocated to buyer $\buyer \in \buyers$. Goods are assigned \mydef{prices} $\price \in \mathbb{R}_+^{\numgoods}$. A tuple $(\allocation^*, \price^*)$ is said to be a \mydef{competitive (or Walrasian) equilibrium} of Fisher market $(\util, \budget)$ if 1.~buyers are utility maximizing constrained by their budget, i.e., for all $\buyer \in \buyers, \allocation[\buyer]^* \in 
\marshallian[\buyer] (\price^*, \budget[\buyer])$;
and 2.~the market clears, i.e., for all $\good \in \goods,  \price[\good]^* > 0$ implies $\sum_{\buyer \in \buyers} \allocation[\buyer][\good]^* = 1$; and $\price[\good]^* = 0$ implies $\sum_{\buyer \in \buyers} \allocation[\buyer][\good]^* \leq 1$.



If $(\util, \budget)$ is a CCH Fisher market, 
then the optimal solution $\allocation^*$ to the \mydef{Eisenberg-Gale program}
constitutes an equilibrium allocation, and the optimal solution to the Lagrangian that corresponds to the allocation constraints (\Cref{feasibility-constraint-EG}) are the corresponding equilibrium prices \cite{devanur2002market, cole2019balancing, eisenberg1959consensus}:
\begin{align}
\text{\textbf{Primal}} \notag \\
    & \max_{\allocation \in \R_+^{\numbuyers \times \numgoods}} & \sum_{\buyer \in \buyers} \budget[\buyer] \log{\left(\util[\buyer](\allocation[\buyer])\right)} \\
    & \text{subject to} & \sum_{\buyer \in \buyers} \allocation[\buyer][\good] & \leq 1 &&\forall \good \in \goods\label{feasibility-constraint-EG}
\end{align}



We define the \mydef{excess demand} correspondence  $\excess: \mathbb{R}^{\numgoods} \rightrightarrows \mathbb{R}^{\numgoods}$, of a Fisher market $(\util, \budget)$, which takes as input prices and outputs a set of excess demands at those prices, as the difference between the demand for each good and the supply of each good:
$
    \excess(\price) = \sum_{\buyer \in \buyers} \marshallian[\buyer](\price, \budget[\buyer]) - \bm{1}_{\numgoods} 
$.

\noindent
where $\bm{1}_{\numgoods}$ is the vector of ones of size $\numgoods$.

The \mydef{discrete t\^atonnement process} for Fisher markets is a decentralized, natural price adjustment, defined as:
\begin{align}
    \price(t+1) = \price(t) + G(\subgrad(t)) && \text{for } t = 0, 1, 2, \hdots  \label{tatonnement}\\
    \subgrad(t) \in \excess(\price(t))\\
    \price(0) \in \mathbb{R}^{\numgoods}_+\label{tatonnement2} \enspace ,
\end{align}

\noindent
where $G: \mathbb{R}^{\numgoods} \to \mathbb{R}^{\numgoods}$ is a \wine{coordinate-wise} monotonic function s.t.\
for all $\good \in \goods, \x, \y \in \R^m$, if $x_{\good} \geq  y_{\good}$, then $G_{\good}(\bm{x}) \geq G_{\good}(\bm{y})$. 
Intuitively, t\^atonnement is an auction-like process in which the seller of $\good \in \goods$ increases (resp. decreases) the price of a good if the demand (resp. supply) is greater than the supply (resp. demand).

\subsection{Subdifferential Calculus and Generalized Gradient Descent}


We say that a vector $\subgrad \in \mathbb{R}^{n}$ is a \mydef{subgradient} of a continuous function $f: U \to \mathbb{R}$ at $\bm{a} \in U$ if for all $\bm{x} \in U$,
%
$f(\bm{x}) \geq f(\bm{a}) + \subgrad^T (\bm{x} - \bm{a})$.
%
The set of all subgradients $\subgrad$ at a point $\bm{a} \in U$ for a function $f$ is called the \mydef{subdifferential} and is denoted by $\subdiff[\bm{x}] f(\bm{a}) = \{\subgrad \mid f(\bm{x}) \geq f(\bm{a}) + \subgrad^T (\bm{x} - \bm{a}) \}$.
If $f$ is convex, then its subdifferential exists everywhere.
If additionally, $f$ is differentiable at $\bm{a}$, so that its subdifferential is a singleton at $\bm{a}$, then the subdifferential at $\bm{a}$ is equal to the gradient.
In this case, \wine{for notational simplicity,}
we write $\subdiff[\bm{x}] f(\bm{a}) = \subgrad$;
in other words, 
we take the subdifferential to be vector-valued rather than set-valued.
\wine{When both $f$ and $\tilde{f}: U \to \mathbb{R}$ are continuous and convex, subdifferentials satisfy the additivity property:
%
$\subdiff[\bm{x}] (f + \tilde{f})(\bm{a}) = \subdiff[\bm{x}] f(\bm{a}) + \subdiff[\bm{x}] \tilde{f}(\bm{a})$.
%
If $f$ is again continuous and convex, and if $\tilde{f}: \mathbb{R} \to \mathbb{R}$ is continuous, convex, and differentiable with derivative $\tilde{f}'$, subdifferentials satisfy the composition property:
%
%
$\subdiff[\bm{x}] (\tilde{f} \circ f(\bm{a})) = \tilde{f}'(f(\a)) \cdot \subdiff[\bm{x}] f (\bm{a})$.}


Consider the optimization problem
$\min_{\bm{x} \in V} f(\bm{x})\label{optimization-problem}$,
where $f: \R^n \to \R$ is a convex function that is not necessarily differentiable and $V$ is the feasible set of solutions. Let $\lapprox[f][\bm{x}][\bm{y}]$ be the \mydef{linear approximation} of $f$ at $\bm{y}$, that is
$\lapprox[f][\bm{x}][\bm{y}] = f(\bm{y}) + \subgrad^T (\bm{x} - \bm{y})$,
where $\subgrad \in \subdiff[\bm{x}]f(\bm{y})$.
A standard method for solving this problem is the \mydef{mirror descent} \cite{nemirovskij1983problem} update rule is as follows:
\begin{align}
    \bm{x}(t+1) &= \argmin_{\bm{x} \in V} \left\{\lapprox[f][\bm{x}][\x(t)] + \gamma_t \divergence[h][\bm{x}][\bm{x}(t)] \right\} && \text{for }t = 0, 1, 2, \ldots \label{generalized-descent} \\
    \bm{x}(0) &\in \mathbb{R}^{n}\label{generalized-descent2}
\end{align}

\noindent
Here, as above, $\gamma_t > 0$ is the step size at time $t$ and, $\divergence[h][\bm{x}][\bm{x}(t)]$ is the
\mydef{Bregman divergence} of a convex differentiable \mydef{kernel} function $h(\bm{x})$ defined as
$\divergence[h][\bm{x}][\bm{y}] = h(\bm{x}) - \lapprox[h][\bm{x}][\bm{y}]$ \cite{bregman1967relaxation}.
When the kernel is the scaled weighted entropy $h(\x) = c \sum_{i \in [n]} \left(x_{i} \log(x_{i}) - x_i \right)$, given $c > 0$, then the Bregman divergence reduces to the \mydef{scaled generalized Kullback-Leibler divergence}:
%
$    
    \divergence[\mathrm{KL}][\x][\y] = c \sum_{i \in [n]} \left[ x_i \log\left( \frac{x_i}{y_i}\right) - x_i +  y_i \right]
$,
which, when $V = \R^n_+$, yields the following simplified update rule, where as usual $\subgrad(t) \in \subdiff[\x]f(\x(t))$:
\begin{align}
    & \forall \good \in \goods & x_{\good}(t+1) &= x_{\good}(t) \exp \left\{\frac{-\subgrad[\good](t)}{\gamma_t} \right\} && \text{for }t = 0, 1, 2, \ldots \label{exp:subgradient-descent} \\
    & & x_{\good}(0) &\in \mathbb{R}_{++}
    \label{exp:subgradient-descent-init}
\end{align}
\Crefrange{exp:subgradient-descent}{exp:subgradient-descent-init} do not include a projection step, because when the initial iterate is within $\R_+^n$, \wine{the update rule guarantees that} subsequent iterates remain within this set.

\csch{A function $f$ is said to be \mydef{$\gamma$-Bregman} with respect to a Bregman divergence with kernel function $h$ if $f(\bm{x}) \leq \lapprox[f][\bm{x}][\bm{y}] + \gamma \divergence[h][\bm{x}][\bm{y}]$.
\citeauthor{grad-prop-response} \cite{grad-prop-response} showed that if the objective function $f(\bm{x})$ of a convex optimization problem is \mydef{$\gamma$-Bregman} w.r.t.\ to some Bregman divergence $\divergence[h]$, then generalized gradient descent with Bregman divergence $\divergence[h]$ converges to an optimal solution $f(\bm{x}^*)$ at 
a rate of $O \mleft( \nicefrac{1}{t} \mright)$ \cite{grad-prop-response}.

\if 0
\begin{theorem}{\cite{grad-prop-response}}
\sdeni{Suppose $f$ and $h$ are convex, and for some $\gamma > 0$, we have that $f$ is $\gamma$-Bregman with respect to some Bregman divergence with kernel function $h$.
%
If $\bm{x}^*$ is a minimizer of $f$, then for all $t \in \N$, the following holds for generalized gradient descent
with fixed step size:
$f(\bm{x}(t)) - f(\bm{x}^*) \leq \nicefrac{\gamma}{t} \left( \divergence[h][\bm{x}^*][\bm{x}(0)] \right)$.
}{}
\end{theorem}
\fi

We require a slightly modified version of this result, shown by \citeauthor{fisher-tatonnement}, where it suffices for the $\gamma$-Bregman property to hold only for consecutive pairs of iterates
\cite{fisher-tatonnement}.

\begin{theorem}{\cite{fisher-tatonnement}}\label{theorem-devanur}
Suppose $f$ and $h$ are convex, and for all $t \in \N$ and for some $\gamma > 0$, we have that
$f(\bm{x}(t+1)) \leq \lapprox[f][\bm{x}(t+1)][\bm{x}(t)] + \gamma \divergence[h][\bm{x}(t+1)][\bm{x}(t)]$. 
If $\bm{x}^*$ be a minimizer of $f$, then for all t, the following holds for generalized gradient descent
with fixed step size:
$f(\bm{x}(t)) - f(\bm{x}^*) \leq \nicefrac{\gamma}{t} \divergence[h][\bm{x}^*][\bm{x}(0)]$.
\end{theorem}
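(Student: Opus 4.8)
The plan is to run the standard mirror-descent convergence argument, taking care that the \emph{weaker} per-iterate hypothesis of the theorem (rather than a global $\gamma$-Bregman / smoothness assumption) is exactly what is needed at each step. The three ingredients I would assemble are: (i) the first-order optimality condition for the proximal update in \Cref{generalized-descent}; (ii) the three-point (generalized law of cosines) identity for Bregman divergences; and (iii) the per-iterate hypothesis combined with convexity of $f$.

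First I would fix the subgradient $\subgrad(t) \in \subdiff[\bm{x}] f(\bm{x}(t))$ that defines $\lapprox[f][\bm{x}][\bm{x}(t)]$ and write the optimality condition for $\bm{x}(t+1) = \argmin_{\bm{x} \in V}\{\lapprox[f][\bm{x}][\bm{x}(t)] + \gamma \divergence[h][\bm{x}][\bm{x}(t)]\}$, using differentiability of the kernel $h$: for every $\bm{x} \in V$,
\[
\left\langle \subgrad(t) + \gamma\bigl(\nabla h(\bm{x}(t+1)) - \nabla h(\bm{x}(t))\bigr),\ \bm{x} - \bm{x}(t+1)\right\rangle \geq 0 .
\]
Expanding $\divergence[h][\bm{x}][\bm{y}] = h(\bm{x}) - \lapprox[h][\bm{x}][\bm{y}]$ directly yields the three-point identity
\[
\left\langle \nabla h(\bm{x}(t+1)) - \nabla h(\bm{x}(t)),\ \bm{x} - \bm{x}(t+1)\right\rangle = \divergence[h][\bm{x}][\bm{x}(t)] - \divergence[h][\bm{x}][\bm{x}(t+1)] - \divergence[h][\bm{x}(t+1)][\bm{x}(t)] .
\]

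Second, I would instantiate the test point at $\bm{x} = \bm{x}^*$, combine the two displays to lower-bound $\langle \subgrad(t),\, \bm{x}^* - \bm{x}(t+1)\rangle$, then upper-bound $f(\bm{x}(t+1))$ via the per-iterate hypothesis and lower-bound $f(\bm{x}^*) \geq \lapprox[f][\bm{x}^*][\bm{x}(t)]$ via convexity. After substitution the $\divergence[h][\bm{x}(t+1)][\bm{x}(t)]$ terms cancel, leaving the clean per-step inequality
\[
f(\bm{x}(t+1)) - f(\bm{x}^*) \leq \gamma\left(\divergence[h][\bm{x}^*][\bm{x}(t)] - \divergence[h][\bm{x}^*][\bm{x}(t+1)]\right) ,
\]
which telescopes: summing over the first $t$ steps and discarding the nonnegative terminal divergence gives $\sum_{s=1}^{t}\bigl(f(\bm{x}(s)) - f(\bm{x}^*)\bigr) \leq \gamma\, \divergence[h][\bm{x}^*][\bm{x}(0)]$.

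Finally, to upgrade this averaged bound to the stated last-iterate bound, I would prove monotonicity $f(\bm{x}(t+1)) \leq f(\bm{x}(t))$ by repeating the optimality argument with test point $\bm{x} = \bm{x}(t)$; the three-point identity (with $\divergence[h][\bm{x}(t)][\bm{x}(t)] = 0$) then leaves $f(\bm{x}(t+1)) \leq f(\bm{x}(t)) - \gamma\,\divergence[h][\bm{x}(t)][\bm{x}(t+1)] \leq f(\bm{x}(t))$. Since the objective values are non-increasing, $t\bigl(f(\bm{x}(t)) - f(\bm{x}^*)\bigr) \leq \sum_{s=1}^{t}\bigl(f(\bm{x}(s)) - f(\bm{x}^*)\bigr) \leq \gamma\,\divergence[h][\bm{x}^*][\bm{x}(0)]$, and dividing by $t$ gives the claim. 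The main obstacle I anticipate is the careful bookkeeping around the subgradient in the optimality condition: because $f$ (hence $\lapprox[f][\cdot][\bm{x}(t)]$) need not be differentiable, I must ensure the single subgradient $\subgrad(t)$ used in the update is the same one appearing in the convexity inequality for $f(\bm{x}^*)$, and verify that the per-iterate hypothesis---in place of a uniform smoothness bound---is precisely what licenses the upper bound on $f(\bm{x}(t+1))$ at each step.
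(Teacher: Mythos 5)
Your proposal is correct: the per-step inequality $f(\bm{x}(t+1)) - f(\bm{x}^*) \leq \gamma\left(\divergence[h][\bm{x}^*][\bm{x}(t)] - \divergence[h][\bm{x}^*][\bm{x}(t+1)]\right)$ obtained from the first-order optimality condition, the three-point identity, and the per-iterate hypothesis does telescope, and the monotonicity argument (test point $\bm{x}(t)$) legitimately upgrades the averaged bound to the stated last-iterate bound. The paper itself imports this theorem from \cite{fisher-tatonnement} without reproducing a proof, and your argument is essentially the standard one given in that source (and in \cite{grad-prop-response}), correctly observing that the weaker hypothesis --- holding only at consecutive iterates rather than globally --- is all the descent step requires.
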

}

\subsection{A High-Level Overview of Our Contributions}


In this paper, we bring consumer theory to bear in the analysis of CCH Fisher markets.
In so doing, we first derive the dual of the Eisenberg-Gale program for arbitrary CCH Fisher markets, generalizing the special cases of linear and Leontief markets, which are already understood~\cite{devanur2008market}.
We then provide a new convex program whose dual also characterizes equilibrium prices in CCH Fisher markets via expenditure functions.
This program is of interest because the subdifferential of the objective function of its dual is equal to the negative excess demand in the market, which implies that mirror descent on this objective is equivalent to solving for equilibrium prices in the associated market via t\^atonnement.
Finally, we conjecture a convergence rate of $O(\nicefrac{(1+E)}{t^2})$ for CCH Fisher markets in which the elasticity of buyer demands is bounded by $E$.\amy{why say $<\infty$? doesn't the word bounded imply $<\infty$?}


Although the Eisenberg-Gale convex program dates back to 1959, its dual for arbitrary CCH Fisher markets is still not yet well understood.
Our first result is to derive the Eisenberg-Gale program's dual, generalizing the two special cases identified by \citeauthor{cole2016convex} \cite{cole2016convex} for linear and Leontief utilities.

\begin{restatable}{theorem}{EGgeneralization}
\label{thm1-overview}\label{EG-generalization}
The dual of the Eisenberg-Gale program for any CCH Fisher market $(\util, \budget)$ is given by:
\begin{align}
    \min_{\price \in \R^\numgoods_+} \sum_{\good \in \goods} \price[\good] + \sum_{\buyer \in \buyers} \left[ \budget[\buyer] \log{ \left( \indirectutil[\buyer](\price, \budget[\buyer]) \right)} - \budget[\buyer] \right] \label{convex-program1}
\end{align}
\end{restatable}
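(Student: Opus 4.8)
The plan is to derive the stated program as the Lagrangian dual of the Eisenberg--Gale program and then invoke strong duality. First I would attach a multiplier $\price[\good] \ge 0$ to each supply constraint $\sum_{\buyer \in \buyers}\allocation[\buyer][\good] \le 1$ and form the Lagrangian $L(\allocation, \price) = \sum_{\good \in \goods}\price[\good] + \sum_{\buyer \in \buyers}\bigl[\budget[\buyer]\log \util[\buyer](\allocation[\buyer]) - \price \cdot \allocation[\buyer]\bigr]$, rearranging the constant and coupling terms so that the Lagrangian separates additively across buyers. The dual function is then $g(\price) = \max_{\allocation \ge 0} L(\allocation,\price) = \sum_{\good\in\goods}\price[\good] + \sum_{\buyer\in\buyers}\max_{\allocation[\buyer]\ge 0}\bigl[\budget[\buyer]\log\util[\buyer](\allocation[\buyer]) - \price\cdot\allocation[\buyer]\bigr]$, and the dual problem is $\min_{\price\ge 0} g(\price)$. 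The entire content of the theorem thus reduces to evaluating the inner per-buyer maximization in closed form.

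The heart of the argument---and the step I expect to be the main obstacle---is showing that $\max_{\allocation[\buyer]\ge 0}\bigl[\budget[\buyer]\log\util[\buyer](\allocation[\buyer]) - \price\cdot\allocation[\buyer]\bigr] = \budget[\buyer]\log\indirectutil[\buyer](\price,\budget[\buyer]) - \budget[\buyer]$. Here I would exploit that $\util[\buyer]$ is homogeneous of degree $1$: writing an arbitrary bundle as $\allocation[\buyer] = s\,\bm{y}$ with a scalar $s \ge 0$ and a direction $\bm{y}$ normalized so that $\util[\buyer](\bm{y}) = 1$, the objective becomes $\budget[\buyer]\log s - s\,(\price\cdot\bm{y})$. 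Optimizing the scalar $s$ in closed form gives $s = \budget[\buyer]/(\price\cdot\bm{y})$ and value $\budget[\buyer]\log\budget[\buyer] - \budget[\buyer]\log(\price\cdot\bm{y}) - \budget[\buyer]$; optimizing over the direction then amounts to minimizing $\price\cdot\bm{y}$ subject to $\util[\buyer](\bm{y}) \ge 1$, which is exactly $\expend[\buyer](\price, 1)$. The delicate parts are justifying the change of variables (that every maximizing bundle can be represented this way, using local non-satiation and monotonicity so the utility constraint binds) and handling boundary cases such as coordinates of $\price$ equal to zero.

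To finish, I would convert the expenditure term into the indirect-utility form appearing in the statement. Because $\util[\buyer]$ is homogeneous of degree $1$, its Hicksian demand scales linearly in the target utility, so $\expend[\buyer](\price,\goalutil[\buyer]) = \goalutil[\buyer]\,\expend[\buyer](\price,1)$; combining this with the identity $\expend[\buyer](\price,\indirectutil[\buyer](\price,\budget[\buyer])) = \budget[\buyer]$ from \eqref{expend-to-budget} yields $\expend[\buyer](\price,1) = \budget[\buyer]/\indirectutil[\buyer](\price,\budget[\buyer])$. Substituting this back collapses the per-buyer value to $\budget[\buyer]\log\indirectutil[\buyer](\price,\budget[\buyer]) - \budget[\buyer]$, so that $g(\price)$ is precisely the objective of \eqref{convex-program1}. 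Finally I would argue strong duality: the Eisenberg--Gale objective is concave and the constraints are linear, and a strictly feasible allocation (e.g.\ $\allocation[\buyer][\good] = \nicefrac{1}{(2\numbuyers)}$) satisfies Slater's condition, so the optimal values of the primal and the dual coincide and the dual is exactly the stated program.
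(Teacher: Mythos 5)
Your proposal is correct, and it arrives at the same dual by the same outer skeleton (Lagrangian with multipliers $\price[\good]\ge 0$ on the supply constraints, separation of the inner maximization across buyers), but the way you evaluate the per-buyer maximization is genuinely different from the paper's. The paper's central step is a lemma showing that the \emph{unconstrained} problem $\max_{\allocation[\buyer]\in\R^\numgoods_+}\{\budget[\buyer]\log\util[\buyer](\allocation[\buyer])+\budget[\buyer]-\price\cdot\allocation[\buyer]\}$ coincides with the \emph{budget-constrained} UMP $\max_{\allocation[\buyer]:\,\price\cdot\allocation[\buyer]\le\budget[\buyer]}\budget[\buyer]\log\util[\buyer](\allocation[\buyer])$; this is proved by writing the KKT stationarity condition, multiplying by $\allocation[\buyer][\good]^*$, summing over goods, and invoking Euler's theorem for homogeneous functions together with complementary slackness to conclude the budget multiplier is $\lambda^*=1$, after which the inner value is read off directly as $\budget[\buyer]\log\indirectutil[\buyer](\price,\budget[\buyer])$. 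You instead evaluate the unconstrained maximum in closed form by the scale--direction decomposition $\allocation[\buyer]=s\,\bm{y}$ with $\util[\buyer](\bm{y})=1$, optimize the scalar to get $\budget[\buyer]\log\budget[\buyer]-\budget[\buyer]\log\expend[\buyer](\price,1)-\budget[\buyer]$, and then use the homogeneity of the expenditure function plus the identity $\expend[\buyer](\price,\indirectutil[\buyer](\price,\budget[\buyer]))=\budget[\buyer]$ to rewrite $\expend[\buyer](\price,1)=\budget[\buyer]/\indirectutil[\buyer](\price,\budget[\buyer])$; the algebra checks out and yields exactly $\budget[\buyer]\log\indirectutil[\buyer](\price,\budget[\buyer])-\budget[\buyer]$. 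What your route buys is a more elementary, self-contained computation that avoids KKT machinery and Euler's theorem, and it makes the role of the expenditure function visible already in the Eisenberg--Gale dual (which is precisely the theme the paper develops in its later sections); what the paper's route buys is the reusable equivalence between the constrained and unconstrained buyer problems, which directly identifies the inner optimizer with the Marshallian demand. Both arguments share the same unaddressed boundary issues (e.g.\ $\price[\good]=0$ making the inner supremum infinite), which you at least flag; your explicit Slater's-condition remark is a small addition the paper omits, though it is not needed for the statement as written, which only asserts the form of the Lagrangian dual.
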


We then propose a new convex program whose dual characterizes the equilibrium prices of CCH Fisher markets via expenditure functions.
We note that the optimal value of this convex program differs from the optimal value of the Eisenberg-Gale program by a constant factor.

\begin{restatable}{theorem}{newconvex}
\label{thm2-overview}\label{new-convex}
The optimal solution $(\allocation^*, \price^*)$ to the primal and dual of the following convex programs corresponds to equilibrium allocations and prices, respectively, of the CCH Fisher market $(\util, \budget)$:
\begin{equation*}
\begin{aligned}[c]
    &\text{\textbf{Primal}} \\  
    & \max_{\allocation \in \R_+^{\numbuyers \times \numgoods}}  \sum_{\buyer \in \buyers} 
    \budget[\buyer] \log{\left(\frac{\util[\buyer]\left(\allocation[\buyer]\right)}{\budget[\buyer]}\right)} 
    \\
    & \text{subject to}  \quad \forall \good \in \goods
    , \sum_{\buyer \in \buyers} \allocation[\buyer][\good]  \leq 1 \\
\end{aligned}
\quad \vline \quad
\begin{aligned}[c]
    &\text{\textbf{Dual}} \\  &\min_{\price \in \R^\numgoods_+} \sum_{\good \in \goods} \price[\good] - \sum_{\buyer \in \buyers} \budget[\buyer] \log{ \left( \partial_{\goalutil[\buyer]}{\expend[\buyer] (\price, \goalutil[\buyer])} \right)}
\end{aligned}
\end{equation*}
\end{restatable}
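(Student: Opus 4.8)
The plan is to treat the primal and the dual separately, exploiting that the primal objective is the Eisenberg--Gale objective shifted by a constant, and that the stated dual is exactly the Lagrangian dual of the primal. Since $\sum_{\buyer \in \buyers}\budget[\buyer]\log(\util[\buyer](\allocation[\buyer])/\budget[\buyer]) = \sum_{\buyer \in \buyers}\budget[\buyer]\log(\util[\buyer](\allocation[\buyer])) - \sum_{\buyer \in \buyers}\budget[\buyer]\log\budget[\buyer]$ and the trailing term is independent of $\allocation$, the primal has the same feasible set and the same maximizers as the Eisenberg--Gale program. Hence, by the cited characterization of Eisenberg--Gale optima, any maximizer $\allocation^*$ is an equilibrium allocation of $(\util,\budget)$. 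The objective is concave ($\log$ composed with the concave $\util[\buyer]$) over a polytope, so this is genuinely a convex program and a maximizer exists.

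To derive the dual I would attach a multiplier $\price[\good]\ge 0$ to each supply constraint $\sum_{\buyer \in \buyers}\allocation[\buyer][\good]\le 1$ and form the Lagrangian. Because objective and constraints are separable across buyers, the dual function decouples into $\sum_{\good \in \goods}\price[\good]$ plus, for each buyer, $\max_{\allocation[\buyer]\ge 0}\bigl(\budget[\buyer]\log(\util[\buyer](\allocation[\buyer])/\budget[\buyer]) - \price\cdot\allocation[\buyer]\bigr)$. The crux is evaluating this inner maximum in closed form: I would split it into an outer maximization over the attained utility level $u=\util[\buyer](\allocation[\buyer])$ and an inner cost minimization at that level, the latter being exactly $\expend[\buyer](\price,u)$ by local non-satiation (the utility constraint binds). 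Homogeneity of degree one of $\util[\buyer]$ gives $\expend[\buyer](\price,u)=u\,\expend[\buyer](\price,1)$, so $\partial_{\goalutil[\buyer]}\expend[\buyer](\price,\goalutil[\buyer])=\expend[\buyer](\price,1)$ is constant in the utility level. The inner problem reduces to the scalar concave program $\max_{u\ge 0}\bigl(\budget[\buyer]\log(u/\budget[\buyer]) - u\,\expend[\buyer](\price,1)\bigr)$, whose first-order condition yields $u^*=\budget[\buyer]/\expend[\buyer](\price,1)$ and optimal value $-\budget[\buyer]\log\expend[\buyer](\price,1)-\budget[\buyer]$. Summing over buyers recovers the stated dual objective up to the additive constant $-\sum_{\buyer \in \buyers}\budget[\buyer]$, which does not affect the minimizer.

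Finally I would connect the dual optimum to equilibrium prices. The primal is a convex program satisfying Slater's condition (e.g.\ $\allocation[\buyer][\good]=1/\numbuyers$ is strictly feasible), so strong duality and the KKT conditions hold, and I read the equilibrium conditions directly off KKT. Complementary slackness on the supply constraints gives market clearing: $\price[\good]^*>0 \Rightarrow \sum_{\buyer \in \buyers}\allocation[\buyer][\good]^*=1$. For utility maximization, the primal-optimal $\allocation[\buyer]^*$ is the cost-minimizing bundle at level $u^*=\budget[\buyer]/\expend[\buyer](\price^*,1)$, so $\allocation[\buyer]^*\in\hicksian[\buyer](\price^*,u^*)$ and $\price^*\cdot\allocation[\buyer]^*=\expend[\buyer](\price^*,u^*)=\budget[\buyer]$, i.e.\ the budget is spent exactly. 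Invoking \Cref{expend-to-budget} together with $\expend[\buyer](\price^*,u)=u\,\expend[\buyer](\price^*,1)$ identifies $u^*=\indirectutil[\buyer](\price^*,\budget[\buyer])$, whence \Cref{hicksian-marshallian} yields $\allocation[\buyer]^*\in\marshallian[\buyer](\price^*,\budget[\buyer])$. These are precisely the two defining conditions of a competitive equilibrium.

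The main obstacle is the closed-form evaluation of the per-buyer inner maximization and its rewriting through the expenditure function: one must justify passing from optimization over bundles to optimization over utility levels, and handle differentiability of $\expend[\buyer]$ in its utility argument. Homogeneity is what makes this clean, since $\expend[\buyer](\price,u)=u\,\expend[\buyer](\price,1)$ is linear in $u$ and hence differentiable with $\partial_{\goalutil[\buyer]}\expend[\buyer](\price,\goalutil[\buyer])=\expend[\buyer](\price,1)$; care is still needed at boundary prices where $\expend[\buyer](\price,1)$ may vanish, which I would address by restricting attention to the strictly positive prices relevant at equilibrium.
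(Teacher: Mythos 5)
Your proposal is correct, but it reaches the result by a genuinely different route than the paper. You derive the stated dual directly as the Lagrangian dual of the new primal: you decouple each buyer's inner maximization into an outer choice of utility level and an inner expenditure minimization, use homogeneity to write $\expend[\buyer](\price,\goalutil[\buyer])=\goalutil[\buyer]\,\expend[\buyer](\price,1)$, solve the resulting scalar concave problem in closed form, and then read the equilibrium conditions (market clearing via complementary slackness, and $\allocation[\buyer]^*\in\marshallian[\buyer](\price^*,\budget[\buyer])$ via $\hicksian[\buyer]$--$\marshallian[\buyer]$ duality) off the KKT system. The paper instead proceeds in two stages: it first derives the Eisenberg--Gale dual in terms of the indirect utility function (\Cref{EG-generalization}, whose key step, \Cref{equiv-optim}, pins the budget multiplier to $1$ via Euler's theorem for homogeneous functions), and then shows in \Cref{differ-const} that the expenditure-based dual differs from that one by the constant $\sum_{\buyer\in\buyers}\left(\budget[\buyer]\log\budget[\buyer]-\budget[\buyer]\right)$, using the chain $\indirectutil[\buyer](\price,\budget[\buyer])=\budget[\buyer]\indirectutil[\buyer](\price,1)$, $\partial_{\goalutil[\buyer]}\expend[\buyer](\price,\goalutil[\buyer])=\expend[\buyer](\price,1)$, and $\expend[\buyer](\price,1)\cdot\indirectutil[\buyer](\price,1)=1$ (\Cref{inverse-expend}); the equilibrium property of the optimal prices is then inherited from the known Eisenberg--Gale characterization rather than re-derived. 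Your UMP-versus-EMP decomposition is the same underlying mathematics folded into a single self-contained derivation, and it makes the economic content of the dual transparent; the paper's detour buys \Cref{thm1-overview} and the explicit constant-offset relation between the two duals, both of which it reuses elsewhere. Two small points to tighten: the point $\allocation[\buyer][\good]=\nicefrac{1}{\numbuyers}$ saturates the supply constraints, so it is feasible but not \emph{strictly} feasible --- take $\nicefrac{1}{2\numbuyers}$ for Slater; and the passage from optimizing over bundles to optimizing over utility levels needs the expenditure minimum to be attained with the utility constraint binding, which follows from continuity of $\util[\buyer]$ at positive prices rather than from local non-satiation (the latter is what makes the \emph{budget} constraint bind in UMP).
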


This convex program formulation for CCH Fisher markets is of particular interest because its subdifferential equals the negative excess demand in the market.
As a result, solving this program via (sub)gradient descent is equivalent to solving the market via t\^atonnement. 

\begin{restatable}{theorem}{excessdemand}
\label{thm3-overview}\label{excess-demand}
The subdifferential of the objective function of the dual of the program given in \Cref{thm2-overview} for a CCH Fisher market $(\util, \budget)$ at any price $\price$ is equal to the negative excess demand in $(\util, \budget)$ at price $\price$:
\begin{align}   
    \subdiff[\price] \left( \sum_{\good \in \goods} \price[\good] - \sum_{\buyer \in \buyers} \budget[\buyer] \log{ \partial_{\goalutil[\buyer]}{\expend[\buyer] (\price, \goalutil[\buyer])}} \right) 
    = - \excess(\price)
\end{align}
\end{restatable}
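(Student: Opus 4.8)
The plan is to differentiate the dual objective term by term, using the additivity property of the subdifferential, and then to identify each resulting piece with a component of excess demand via Shephard's lemma together with the homogeneity of the expenditure, indirect-utility, and demand functions. First I would dispatch the linear term, $\subdiff[\price] \sum_{\good \in \goods} \price[\good] = \bm{1}_{\numgoods}$, which is exactly the $-\bm{1}_{\numgoods}$ hidden in the definition of $\excess$. By additivity it then suffices to show, for each buyer $\buyer$, that $\budget[\buyer]\, \subdiff[\price] \log \partial_{\goalutil[\buyer]} \expend[\buyer](\price, \goalutil[\buyer]) = \marshallian[\buyer](\price, \budget[\buyer])$; summing over $\buyer \in \buyers$ and recombining with the linear term yields $\bm{1}_{\numgoods} - \sum_{\buyer \in \buyers} \marshallian[\buyer](\price, \budget[\buyer]) = -\excess(\price)$.

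The crucial simplification comes from homogeneity. Since each $\util[\buyer]$ is homogeneous of degree $1$, the expenditure function factors as $\expend[\buyer](\price, \goalutil[\buyer]) = \goalutil[\buyer]\, \expend[\buyer](\price, 1)$, so $\partial_{\goalutil[\buyer]} \expend[\buyer](\price, \goalutil[\buyer]) = \expend[\buyer](\price, 1)$ is the \emph{unit expenditure}, independent of the target utility level. Hence the nonlinear term of the dual is simply $-\budget[\buyer] \log \expend[\buyer](\price, 1)$, which is convex in $\price$ because $\expend[\buyer](\cdot, 1)$ is concave while $\log$ is concave and increasing. Recasting this as a composition to which the paper's convex chain rule applies, and invoking Shephard's lemma $\subdiff[\price] \expend[\buyer](\price, 1) = \hicksian[\buyer](\price, 1)$, I obtain $\subdiff[\price]\left(-\log \expend[\buyer](\price, 1)\right) = -\hicksian[\buyer](\price, 1) / \expend[\buyer](\price, 1)$.

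It then remains to verify the per-buyer identity $\budget[\buyer]\, \hicksian[\buyer](\price, 1)/\expend[\buyer](\price, 1) = \marshallian[\buyer](\price, \budget[\buyer])$, which I would obtain by chaining the duality identities. Evaluating \Cref{hicksian-marshallian} at budget $1$ gives $\marshallian[\buyer](\price, 1) = \hicksian[\buyer](\price, \indirectutil[\buyer](\price, 1)) = \indirectutil[\buyer](\price, 1)\, \hicksian[\buyer](\price, 1)$, the last step by homogeneity of the Hicksian demand. Evaluating \Cref{expend-to-budget} at budget $1$ together with $\expend[\buyer](\price, \indirectutil[\buyer](\price, 1)) = \indirectutil[\buyer](\price, 1)\, \expend[\buyer](\price, 1)$ yields $\indirectutil[\buyer](\price, 1) = 1/\expend[\buyer](\price, 1)$, and homogeneity of the Marshallian demand gives $\marshallian[\buyer](\price, \budget[\buyer]) = \budget[\buyer]\, \marshallian[\buyer](\price, 1)$. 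Combining these three facts establishes the identity and hence the theorem.

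The main obstacle I anticipate is the nonsmooth case: when $\util[\buyer]$ is not strictly concave, $\hicksian[\buyer]$ is a genuine correspondence, so Shephard's lemma and the chain rule must be read set-valuedly, with $\subdiff[\price] \expend[\buyer](\cdot, 1)$ equal to the convex, compact set of cost-minimizing bundles, and every equality above interpreted as an equality of sets whose final aggregation is the Minkowski sum matching the definition of $\excess$. Establishing the set-valued Shephard's lemma, the set-valued chain rule, and the fact that additivity of subdifferentials reproduces precisely the Minkowski sum of the per-buyer demand sets is where the care is required; the homogeneity-based algebra is then routine.
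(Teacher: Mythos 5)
Your proposal is correct and follows essentially the same route as the paper: additivity of the subdifferential, the observation that homogeneity collapses $\partial_{\goalutil[\buyer]}\expend[\buyer](\price,\goalutil[\buyer])$ to the unit expenditure $\expend[\buyer](\price,1)$, the generalized (set-valued) Shephard's lemma, and the Hicksian--Marshallian duality identities to show the per-buyer term equals $\marshallian[\buyer](\price,\budget[\buyer])$ (this is exactly the paper's \Cref{lemma-deriv-marshallian}). The only cosmetic difference is that you derive $\indirectutil[\buyer](\price,1)=1/\expend[\buyer](\price,1)$ directly from \Cref{expend-to-budget} and homogeneity rather than routing through the derivative identity of \Cref{inverse-expend}, which amounts to the same computation.
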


To prove \Cref{thm3-overview}, we make use of standard consumer theory, specifically the duality structure between UMP and EMP, as well as a generalized version of Shepherd's lemma \cite{shephard, generalized-shephard}.
We also provide a new, simpler proof of this generalization of Shepherd's lemma via Danskin's theorem \cite{danskin1966thm}.

Finally, we conduct an experimental investgation of the convergence of the t\^atonnement process defined by the mirror descent rule with KL-divergence and fixed step sizes
in CCH Fisher markets.
This particular process was previously studied by \citeauthor{fisher-tatonnement} \cite{fisher-tatonnement} in Leontief Fisher markets.
They showed a worst-case \emph{lower\/} bound of $\Omega(\nicefrac{1}{t^2})$ to complement an $O(\nicefrac{1}{t})$ worst-case upper bound.
These results suggest a possible convergence rate of $O(\nicefrac{1}{t^2})$ or $O(\nicefrac{1}{t})$ for entropic t\^atonnement in a class of Fisher markets that includes Leontief Fisher markets.
Our experimental results support the conjecture that a worst-case convergence rate of $O(\nicefrac{1}{t^2})$ might hold, not only in Leontief and CES Fisher markets, but in CCH Fisher markets where buyers' elasticity of demand is bounded by $E$.\amy{why say $<\infty$? doesn't the word bounded imply $<\infty$?}\deni{Just makes it clear I guess, I generally get confused when authors say bounded because }

\if 0
\sdeni{}{We conjecture that a convergence rate of \amy{conjecture!} $O(\nicefrac{1}{t^2})$ can be proven for this t\^atonnement rule in CCH Fisher markets with bounded elasticity of demand based on experimental evidence. This also makes sense since the larger the elasticity of demand, the more responsive the demand is to increases in prices and as a result, the larger the step size should be so that convergence can be ensured. That is, if elasticity of demand is bounded, then the t\^atonnement price updates can be arbitrarily large causing it to cycle. If later work proves such a result, this would not only provide convergence guarantees for t\^atonnmenet in the most general setting possible but it would also improve the known sublinear convergence rate of \amy{conjecture!} $O(\nicefrac{1}{t})$ of this t\^atonnement rule for Leontief Fisher markets.} 
\fi

\csch{Our result extends the sublinear convergence rate of \amy{conjecture!} $O(\nicefrac{1}{t})$ they obtained for Leontief Fisher markets to the entire class of CSCH Fisher markets, which includes classes of markets for which convergence results were not previously known, such as mixed and nested CES markets \cite{cheung2018dynamics, jain2006equilibria}.
We note that it is our novel characterization of equilibrium prices via expenditure functions that allows us to generalize \citeauthor{fisher-tatonnement}'s \cite{fisher-tatonnement} convergence result to all CSCH Fisher markets.

\begin{theorem}
\label{thm4-overview}
Let $f(\price) = \sum_{\good \in \goods} \price[\good] - \sum_{\buyer \in \buyers} \budget[\buyer] \log{ \left( \partderiv[{\expend[\buyer] (\price, \goalutil[\buyer])}][{\goalutil[\buyer]}] \right)}$. The following holds for the t\^atonnement process
when run on a CSCH Fisher market $(\util, \budget)$:
$f(\price(t)) - f(\price^*) \leq \nicefrac{1}{t} \left( \gamma \divergence[\mathrm{h}][\price^*][\price(0)] \right)$,
where $\gamma = 5 \max_{\substack{\good \in \goods\\ t \in \N}} \left\{\sum_{\buyer \in \buyers}\marshallian[\buyer][\good]\left(\price(t), \budget[\buyer]\right) \right\}$.
\end{theorem}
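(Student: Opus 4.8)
The plan is to apply \Cref{theorem-devanur} to $f$ with the entropy kernel $h(\price)=\sum_{\good\in\goods}\bigl(\price[\good]\log\price[\good]-\price[\good]\bigr)$, whose Bregman divergence is the generalized KL divergence $\divergence[h][\cdot][\cdot]$, and with the fixed step size $\gamma_t=\gamma$. For this kernel the mirror-descent update \Crefrange{exp:subgradient-descent}{exp:subgradient-descent-init} becomes $\price[\good](t+1)=\price[\good](t)\exp\!\bigl(-\subgrad[\good](t)/\gamma\bigr)$, and since \Cref{thm3-overview} identifies every subgradient of $f$ with a negative excess demand, $\subgrad(t)\in\subdiff[\price]f(\price(t))=-\excess(\price(t))$, this update is exactly the (entropic) t\^atonnement process \eqref{tatonnement}. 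It therefore suffices to verify the two hypotheses of \Cref{theorem-devanur}: that $f$ is convex, and that the per-step estimate $f(\price(t+1))\le\lapprox[f][\price(t+1)][\price(t)]+\gamma\,\divergence[h][\price(t+1)][\price(t)]$ holds for every $t$ with the stated $\gamma$.

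First I would simplify $f$. Because each $\util[\buyer]$ is homogeneous of degree one, the cheapest bundle attaining utility $\goalutil[\buyer]$ is $\goalutil[\buyer]$ times the cheapest bundle attaining unit utility, so $\expend[\buyer](\price,\goalutil[\buyer])=\goalutil[\buyer]\,\expend[\buyer](\price,1)$ and hence $\partial_{\goalutil[\buyer]}\expend[\buyer](\price,\goalutil[\buyer])=\expend[\buyer](\price,1)$, independent of $\goalutil[\buyer]$. Thus $f(\price)=\sum_{\good\in\goods}\price[\good]-\sum_{\buyer\in\buyers}\budget[\buyer]\log\expend[\buyer](\price,1)$. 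Since the expenditure function is concave and positive in $\price$, each map $g_\buyer(\price):=-\log\expend[\buyer](\price,1)$ is convex, so $f$ is convex; its minimizer is the equilibrium price vector $\price^*$ of \Cref{thm2-overview}, which also makes $\divergence[h][\price^*][\price(0)]$ finite. The affine term $\sum_{\good\in\goods}\price[\good]$ contributes nothing to the Bregman gap of $f$, so $f(\price(t+1))-\lapprox[f][\price(t+1)][\price(t)]=\sum_{\buyer\in\buyers}\budget[\buyer]\,\divergence[g_\buyer][\price(t+1)][\price(t)]$, and the problem decouples across buyers.

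The crux, and the main obstacle, is to bound this per-buyer gap against the KL divergence, i.e.\ to show $\sum_{\buyer\in\buyers}\budget[\buyer]\,\divergence[g_\buyer][\price(t+1)][\price(t)]\le\gamma\,\divergence[h][\price(t+1)][\price(t)]$. By Shephard's lemma $\nabla_\price\expend[\buyer](\price,1)=\hicksian[\buyer](\price,1)$, so that $\budget[\buyer]\,\hicksian[\buyer][\good](\price,1)/\expend[\buyer](\price,1)=\marshallian[\buyer][\good](\price,\budget[\buyer])$ by \eqref{hicksian-marshallian}, which is precisely the aggregate-demand coefficient that must multiply the KL terms. Writing $r_\good=\price[\good](t+1)/\price[\good](t)$, the concavity and degree-one homogeneity of $\expend[\buyer](\cdot,1)$ give the two-sided control $\min_{\good}r_\good\le\expend[\buyer](\price(t+1),1)/\expend[\buyer](\price(t),1)\le\max_{\good}r_\good$. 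The factor $5$ enters here: with $\gamma=5\max_{\good,t}\sum_{\buyer\in\buyers}\marshallian[\buyer][\good](\price(t),\budget[\buyer])$ each exponent $\subgrad[\good](t)/\gamma$ is small in magnitude, so every ratio $r_\good$ stays in a fixed band around $1$ on which the scalar estimates (of the type $\mathrm{e}^{x}\le 1+x+x^2$ and $-\log(1+x)\le -x+x^2$) hold with controlled constants. Substituting the two-sided ratio bound and these scalar estimates into $\budget[\buyer]\,\divergence[g_\buyer][\price(t+1)][\price(t)]=-\budget[\buyer]\log\tfrac{\expend[\buyer](\price(t+1),1)}{\expend[\buyer](\price(t),1)}+\sum_{\good\in\goods}\marshallian[\buyer][\good](\price(t),\budget[\buyer])\bigl(\price[\good](t+1)-\price[\good](t)\bigr)$ collapses the whole sum to a positive multiple of $\sum_{\good\in\goods}\bigl(\price[\good](t+1)\log r_\good-\price[\good](t+1)+\price[\good](t)\bigr)=\divergence[h][\price(t+1)][\price(t)]$, with the aggregate demand $\sum_{\buyer\in\buyers}\marshallian[\buyer][\good]$ and the constant $5$ surfacing exactly as in $\gamma$.

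With both hypotheses in hand, \Cref{theorem-devanur} yields $f(\price(t))-f(\price^*)\le\tfrac{\gamma}{t}\,\divergence[h][\price^*][\price(0)]$, as claimed. The remaining delicate point is that $\gamma$ is defined through a maximum of aggregate demand over all iterates, so to justify a single fixed step size I would additionally argue that this maximum is finite for CSCH markets---e.g.\ by showing that the t\^atonnement iterates $\price(t)$, and hence the Marshallian demands they induce, remain bounded---so that the per-step inequality holds simultaneously for every $t$ with one value of $\gamma$.
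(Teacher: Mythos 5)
Your high-level architecture coincides with the paper's: rewrite $f$ using $\partial_{\goalutil[\buyer]}{\expend[\buyer](\price,\goalutil[\buyer])}=\expend[\buyer](\price,1)$, note that the affine term drops out of the Bregman gap so the gap decouples across buyers, establish the per-iterate inequality $f(\price(t+1))\leq\lapprox[f][\price(t+1)][\price(t)]+\gamma\divergence[h][\price(t+1)][\price(t)]$, and invoke \Cref{theorem-devanur}. That much is exactly the paper's \Cref{ineq-devanur} plus \Cref{theorem-devanur}, and your remarks on convexity, on the subgradient being the negative excess demand, and on the need to certify $\gamma<\infty$ are all consistent with the paper.

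The gap is in the crux step, and it is a real one. The Bregman gap of $g_{\buyer}(\price)=-\log\expend[\buyer](\price,1)$ and the divergence $\divergence[\mathrm{KL}][\price^{t+1}][\price^{t}]$ are both \emph{second-order} quantities in $\pricediff$, of size $\Theta\bigl(\sum_{\good\in\goods}(\pricediff[\good])^{2}/\price[\good]^{t}\bigr)$; any bound between them must preserve the cancellation of first-order terms. Your plan replaces the ratio $\expend[\buyer](\price^{t+1},1)/\expend[\buyer](\price^{t},1)$ by the two-sided control $\min_{\good}r_{\good}\leq\cdot\leq\max_{\good}r_{\good}$. Once you do that, the first-order part of $-\budget[\buyer]\log(\min_{\good}r_{\good})$, namely $\budget[\buyer](1-\min_{\good}r_{\good})$, no longer cancels the linear term $\sum_{\good\in\goods}\marshallian[\buyer][\good](\price^{t},\budget[\buyer])\,\price[\good]^{t}(r_{\good}-1)$ unless all $r_{\good}$ coincide; the residual $\sum_{\good\in\goods}\marshallian[\buyer][\good](\price^{t},\budget[\buyer])\,\price[\good]^{t}(r_{\good}-\min_{k}r_{k})$ is genuinely first-order (take two goods with $r_{1}=1+\epsilon$, $r_{2}=1$: the residual is $\Theta(\epsilon)$ while the KL divergence is $\Theta(\epsilon^{2})$), so no fixed multiple of $\divergence[\mathrm{KL}][][]$ can absorb it. The scalar estimates $e^{x}\leq 1+x+x^{2}$ and the like cannot repair this, because the loss occurs before they are applied.

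The paper's proof of \Cref{ineq-devanur} avoids this by never discarding the vector structure of the expenditure ratio. It writes $\expend[\buyer](\price,1)=\sum_{\good\in\goods}\hicksian[\buyer][\good](\price,1)\price[\good]$ exactly and compares Hicksian bundles at consecutive iterates using (i) the expenditure-minimization property $\sum_{\good\in\goods}\hicksian[\buyer][\good]^{t}\price[\good]^{t}\leq\sum_{\good\in\goods}\hicksian[\buyer][\good]^{t+1}\price[\good]^{t}$ (\Cref{hicksian-is-expend-minimizer}) and (ii) the law of demand $\sum_{\good\in\goods}\pricediff[\good]\hicksian[\buyer][\good]^{t+1}\leq\sum_{\good\in\goods}\pricediff[\good]\hicksian[\buyer][\good]^{t}$ (\Cref{law-of-demand-corollary}); these make the first-order terms cancel exactly (the sum that evaluates to $0$ in the proof of \Cref{ineq-devanur}). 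The remaining quadratic cross terms are diagonalized by the AM--GM inequality of \Cref{leontief-ineq} and then compared to the KL divergence via \Cref{kl-divergence-1}, whose hypothesis $|\pricediff[\good]|/\price[\good]^{t}\leq\nicefrac{1}{4}$ is supplied by \Cref{price-change} precisely because $\gamma=5\max_{\good,t}\demand[\good]^{t}$. These three lemmas, together with the scaled kernel $\divergence[h]=6\cdot\divergence[\mathrm{KL}][][]$ that makes the constants close, are the substance of the argument and are absent from your sketch.
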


\noindent
We provide an upper bound on $\gamma$ for mixed CCH Fisher markets
in \Cref{sec:equiv}, so that the step size can be set when the process is initialized.
For other markets, we provide a naive upper bound when the number of iterations for which t\^atonnement will be run in known in advance, which can then be combined with the doubling trick 
so that the convergence bound holds for t\^atonnement run without a known time horizon.
We note that our convergence bounds can be improved by a constant factor in future work that derives tighter bounds on the demand, since 
$\gamma$ is a function of the demand.}



\section{A New Convex Program for CCH Fisher Markets}\label{sec:program}

In this section, we provide an alternative convex program to the Eisenberg-Gale program, which also characterizes the equilibria of CCH Fisher markets.
Of note, our program characterizes equilibrium prices via expenditure functions.
For CCH Fisher markets, the Eisenberg-Gale program's primal allows us to calculate the equilibrium allocations, while its dual yields the corresponding equilibrium prices \cite{chen2007note}.
\citeauthor{cole2016convex} \cite{cole2016convex} provide dual formulations of the Eisenberg-Gale program for linear and Leontief utilities \cite{cole2016convex}, and in unpublished work, \citeauthor{cole2019balancing} \cite{cole2019balancing} present a generalization of the Eisenberg-Gale dual for arbitrary CCH utility functions. However, as we show in \wineusesparingly{\Cref{dual-diff-cole} (\Cref{sec:Proofs3})}{the full version}, the optimal value of the objective of the Eisenberg-Gale program's primal differs from the optimal value of the dual provided by \citeauthor{cole2019balancing} \cite{cole2019balancing} by a constant factor, despite their dual characterizing equilibrium prices accurately.
Hence, their dual is technically not the dual of the Eisenberg-Gale program for which strong duality holds.
The proof of the following theorem stating the Eisenberg-Gale program's dual can be found in \wineusesparingly{\Cref{sec:Proofs3}}{the full version}.

\EGgeneralization*

Before presenting our program, we present several preliminary lemmas.
All omitted proofs can be found in \wineusesparingly{\Cref{sec:Proofs4}}{the full version}.

The next lemma establishes an important property of the indirect utility and expenditure functions in CCH Fisher markets that we heavily exploit in this work, namely that the derivative of the indirect utility function with respect to $\budget[\buyer]$---the bang-per-buck---is constant across all budget levels.
Likewise, the derivative of the expenditure function with respect to $\goalutil[\buyer]$---the buck-per-bang---is constant across all utility levels.
In other words, both functions effectively depend only on prices.
Not only are the bang-per-buck and the buck-per-bang constant, they equal $\indirectutil[\buyer](\price, 1)$ and $\expend[\buyer](\price, 1)$, respectively, namely their values at exactly one unit of budget and one unit of (indirect) utility.


An important consequence of this lemma is that, by picking prices that maximize a buyer's bang-per-buck, we not only maximize their bang-per-buck at all budget levels, but we further maximize their total indirect utility, given their \emph{known\/} budget.
In particular, given prices $\price^*$ that maximize a buyer's bang-per-buck at budget level 1, we can easily calculate the buyer's total (indirect) utility at budget $\budget[\buyer]$ by simply multiplying their bang-per-buck by $\budget[\buyer]$: i.e., $\indirectutil[\buyer](\price^*, \budget[\buyer]) = \budget[\buyer] \indirectutil[\buyer](\price^*, 1)$.
Here, we see \wine{quite explicitly} the homogeneity assumption at work.

Analogously, by picking prices that maximize a buyer's buck-per-bang, we not only maximize their buck-per-bang at all utility levels, but we further maximize the buyer's total expenditure, given their \emph{unknown\/} optimal utility level.
In particular, given prices $\price^*$ that minimize a buyer's buck-per-bang at utility level 1,
we can easily calculate the buyer's total expenditure at utility level $\goalutil[\buyer]$ by simply multiplying their buck-per-bang by $\goalutil[\buyer]$: i.e., $\expend[\buyer](\price^*, \goalutil[\buyer]) = \goalutil[\buyer] \expend[\buyer](\price^*, 1)$.
Thus, solving for optimal prices at any budget level, or analogously at any utility level, requires only a single optimization, in which we solve for optimal prices at budget level, or utility level, 1.

\begin{restatable}{lemma}{deriveexpend}
\label{derive-expend}
\label{deriv-indirect-util}
If $\util[\buyer]$ is continuous and homogeneous of degree 1, then $\indirectutil[\buyer](\price, \budget[\buyer])$ and
$\expend[\buyer](\price, \goalutil[\buyer])$ are differentiable in $\budget[\buyer]$ and $\goalutil[\buyer]$, resp.
Further, $\subdiff[{\budget[\buyer]}] \indirectutil[\buyer](\price, \budget[\buyer]) = \left\{\indirectutil[\buyer](\price, 1) \right\}$ and
$\subdiff[{\goalutil[\buyer]}] \expend[\buyer](\price, \goalutil[\buyer]) = \left\{\expend[\buyer](\price, 1) \right\}$.
\end{restatable}


The next lemma provides further insight into why CCH Fisher markets are easier to solve than non-CCH Fisher markets.
The lemma states that the bang-per-buck, i.e., the marginal utility
of an additional unit of budget, is equal to the inverse of its buck-per-bang, i.e., the marginal cost of an additional unit of utility.
Consequently, by setting prices so as to minimize the buck-per-bang of buyers, we can also maximize their bang-per-buck.
Since the buck-per-bang is a function of prices only, and not of prices and allocations together, this lemma effectively decouples the calculation of equilibrium prices from the calculation of equilibrium allocations, which
greatly simplifies the problem of computing equilibria in CCH Fisher markets.



\begin{restatable}{corollary}{inverseexpend}
\label{inverse-expend}
If buyer $\buyer$'s utility function $\util[\buyer]$ is CCH,
then
\begin{align}
    \frac{1}{\expend[\buyer](\price, 1)} = \frac{1}{\partderiv[{\expend[\buyer](\price, \goalutil[\buyer])}][{\goalutil[\buyer]}]} = \partderiv[{\indirectutil[\buyer](\price, \budget[\buyer])}][{\budget[\buyer]}] = \indirectutil[\buyer](\price, 1)
    \enspace .
\end{align}
\end{restatable}



We can now present our characterization of the dual of the Eisenberg-Gale program via expenditure functions.
While \citeauthor{devanur2016new} \cite{devanur2016new} provided a method to construct a similar program to that given in \Cref{new-convex} for specific utility functions, their method does not apply to arbitrary CCH utility functions. The proof of this theorem can be found in \wineusesparingly{\Cref{sec:Proofs4}}{the full version}.

\newconvex*

\if 0
\begin{theorem}
The optimal solution $(\allocation^*, \price^*)$ to the primal and dual of the following convex programs corresponds to equilibrium allocations and prices, respectively, of the CCH Fisher market $(\util, \budget)$:
\begin{equation*}
\begin{aligned}[c]
    &\text{\textbf{Primal}} \\  
    &\max_{\allocation \in \R_+^{\numbuyers \times \numgoods}}  \sum_{\buyer \in \buyers} \left[ \budget[\buyer] \log{\util[\buyer]\left(\frac{\allocation[\buyer]}{\budget[\buyer]}\right)} + \budget[\buyer] \right] \\
    & \text{subject to} \quad \forall \good \in \goods, \ \ \sum_{\buyer \in \buyers} \allocation[\buyer][\good]  \leq 1 \\
\end{aligned}
\quad \vline \quad
\begin{aligned}[c]
    &\text{\textbf{Dual}} \\  &\min_{\price \in \R^\numgoods_+} \sum_{\good \in \goods} \price[\good] - \sum_{\buyer \in \buyers} \budget[\buyer] \log{ \left( \partial_{\goalutil[\buyer]}{\expend[\buyer] (\price, \goalutil[\buyer])} \right)}
\end{aligned}
\end{equation*}

\end{theorem}
\fi

Our new convex program for CCH Fisher markets, which characterizes equilibrium expenditure functions, makes plain the duality structure between utility functions and expenditure functions that is used to compute ``shadow'' prices for allocations.
In particular, $\expend[\buyer](\price, \goalutil[\buyer])$ is the Fenchel conjugate of the indicator function $\chi_{\{\x: \util[\buyer](\allocation[\buyer]) \geq \goalutil[\buyer]\}}$, 
meaning the utility levels and expenditures are dual (in a colloquial sense) to one another.
Therefore, equilibrium utility levels can be determined from equilibrium expenditures, and vice-versa, which implies that allocations and prices can likewise be derived from one another through this duality structure.%
\footnote{A more in-depth analysis of this duality structure can be found in \citeauthor{duality-economics} \cite{duality-economics}.}

\section{Equivalence of Mirror Descent and T\^atonnement}
\label{sec:equiv}

\citeauthor{fisher-tatonnement} \cite{fisher-tatonnement} have shown via the Lagrangian of the Eisenberg-Gale program, i.e., without constructing the precise dual, that the subdifferential of the dual of the Eisenberg-Gale program 
is equal to the negative excess demand in the associated market, which implies that mirror descent equivalent to a subset of t\^atonnement rules.
In this section, we use a generalization of Shephard's lemma to prove that the subdifferential of the dual of our new convex program is equal to the negative excess demand in the associated market.
Our proof also applies to the dual of the Eisenberg-Gale program, since the two duals differ only by a constant factor.

\wine{Shephard's lemma tells us that the rate of change in expenditure with respect to prices, evaluated at prices $\price$ and utility level $\goalutil[\buyer]$, is equal to the Hicksian demand at prices $\price$ and utility level $\goalutil[\buyer]$.
%
%
Alternatively, the partial derivative of the expenditure function with respect to the price $\price[\good]$ of good $\good$ at utility level $\goalutil[\buyer]$ is simply the share of the total expenditure allocated to $\good$ divided by the price of $\good$, which is exactly the Hicksian demand for $\good$ at utility level $\goalutil[\buyer]$.}{}

While Shephard's lemma is applicable to utility functions with singleton-valued Hicksian demand (i.e., strictly concave utility functions), we require a generalization of Shephard's lemma that applies to utility functions that are not strictly concave and that could have set-valued Hicksian demand.
An early proof of this generalized lemma was given by \citeauthor{generalized-shephard} in a discussion paper \cite{generalized-shephard};
a more modern perspective can be found in a recent survey by \citeauthor{duality-economics} \cite{duality-economics}.
For completeness, we also provide a new, simple proof of this result via Danskin's theorem (for subdifferentials) \cite{danskin1966thm} \wineusesparingly{in \Cref{sec:Proofs5}}{the full version}.

\begin{restatable}{lemma}{shepherd}
\textbf{Shephard's lemma, generalized for set-valued Hicksian demand \cite{duality-economics, shephard, generalized-shephard}}
\label{shepherd}
Let $\expend[\buyer](\price, \goalutil[\buyer])$ be the expenditure function of buyer $\buyer$ and $\hicksian[\buyer](\price, \goalutil[\buyer])$ be the Hicksian demand set of buyer $\buyer$.
The subdifferential $\subdiff[{\price}] \expend[\buyer](\price, \goalutil[\buyer])$ is the Hicksian demand at prices $\price$ and utility level $\goalutil[\buyer]$, i.e., $\subdiff[\price] \expend[\buyer](\price, \goalutil[\buyer]) = \hicksian[\buyer](\price, \goalutil[\buyer])$.
\end{restatable}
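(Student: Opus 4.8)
The plan is to recognize the expenditure function as a parametric minimum of functions that are linear in the price vector, and then invoke the subdifferential version of Danskin's theorem. Fix a buyer $\buyer$ and a utility target $\goalutil[\buyer]$, and write $X = \{\allocation[ ] \in \R^\numgoods_+ : \util[\buyer](\allocation[ ]) \geq \goalutil[\buyer]\}$ for the upper-contour set; since $\util[\buyer]$ is continuous and concave, $X$ is closed and convex. By definition $\expend[\buyer](\price, \goalutil[\buyer]) = \min_{\allocation[ ] \in X} \price \cdot \allocation[ ]$, so it is a pointwise minimum over $\allocation[ ] \in X$ of the affine maps $\price \mapsto \price \cdot \allocation[ ]$; consequently $\expend[\buyer](\cdot, \goalutil[\buyer])$ is concave in $\price$, consistent with the properties recorded in the preliminaries. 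Because $\expend[\buyer]$ is concave in price, the object $\subdiff[\price] \expend[\buyer]$ in the statement is understood as the set of \emph{supergradients} (the defining inequality reverses), and the Hicksian demand $\hicksian[\buyer](\price, \goalutil[\buyer])$ is precisely the set of minimizers of the above program.

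The easy inclusion is the envelope inequality: for any $\allocation[ ]^* \in \hicksian[\buyer](\price, \goalutil[\buyer])$ and any competing price $\bm{q}$, feasibility of $\allocation[ ]^*$ at $\bm{q}$ gives $\expend[\buyer](\bm{q}, \goalutil[\buyer]) \leq \bm{q} \cdot \allocation[ ]^* = \expend[\buyer](\price, \goalutil[\buyer]) + (\allocation[ ]^*)^T (\bm{q} - \price)$, which exhibits $\allocation[ ]^*$ as a supergradient of $\expend[\buyer]$ at $\price$, so $\hicksian[\buyer](\price, \goalutil[\buyer]) \subseteq \subdiff[\price] \expend[\buyer](\price, \goalutil[\buyer])$. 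The substance of the lemma is the reverse inclusion, for which I would apply Danskin's theorem to the convex function $-\expend[\buyer](\price, \goalutil[\buyer]) = \max_{\allocation[ ] \in X}(-\price \cdot \allocation[ ])$: the theorem identifies its subdifferential with $\mathrm{conv}\{-\allocation[ ]^* : \allocation[ ]^* \text{ a maximizer}\}$, equivalently the superdifferential of $\expend[\buyer]$ with $\mathrm{conv}\, \hicksian[\buyer](\price, \goalutil[\buyer])$. Since the Hicksian demand of a continuous concave utility is already convex-valued (as recorded in the preliminaries), the convex hull is redundant, and the two inclusions combine to $\subdiff[\price] \expend[\buyer](\price, \goalutil[\buyer]) = \hicksian[\buyer](\price, \goalutil[\buyer])$.

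The main obstacle is that Danskin's theorem requires the index set over which the extremum is taken to be compact, whereas $X$ is unbounded above. I would resolve this by localization: for $\price \in \R^\numgoods_{++}$ the truncation $\{\allocation[ ] \in X : \price \cdot \allocation[ ] \leq \expend[\buyer](\price, \goalutil[\buyer]) + 1\}$ is compact, and by continuity of $\expend[\buyer]$ the minimizers for all prices in a small neighborhood of $\price$ remain inside a common compact truncation of $X$. Replacing $X$ by this truncation changes neither the local value of $\expend[\buyer]$ nor its minimizer set, so Danskin applies verbatim and the computed superdifferential is the genuine one (local agreement suffices by concavity). A secondary point to verify is that the minimizer set of the truncated problem coincides with $\hicksian[\buyer](\price, \goalutil[\buyer])$ rather than merely containing it, which again follows from the neighborhood argument; boundary prices are then absorbed through the homogeneity and continuity of $\expend[\buyer]$ already established in the preliminaries.
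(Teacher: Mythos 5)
Your proposal is correct and takes essentially the same route as the paper: both prove the result by viewing the expenditure function as a parametric minimum of maps affine in prices and invoking Danskin's theorem to identify its (super)differential with the set of minimizers, i.e., the Hicksian demand. You are in fact somewhat more careful than the paper on two points it glosses over --- the compactness of the constraint set needed for Danskin (which you handle by truncation rather than the paper's brief appeal to local non-satiation) and the concave/supergradient reading of the statement together with the convex hull in Danskin's conclusion, which you correctly observe is absorbed by the convex-valuedness of Hicksian demand.
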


The next lemma plays an essential role in the proof that the subdifferential of the dual of our convex program is equal to the negative excess demand.
Just as Shephard's Lemma related the expenditure function to Hicksian demand via (sub)gradients, this lemma relates the expenditure function to Marshallian demand via (sub)gradients.
One way to understand this relationship is in terms of \mydef{Marshallian consumer surplus},
the area under the Marshallian demand curve, i.e., the integral of Marshallian demand with respect to prices.%
\footnote{We note that the definition of Marshallian consumer surplus for multiple goods requires great care and falls outside the scope of this paper. More information on consumer surplus can be found in \citeauthor{levin-notes} \cite{levin-notes}, and \citeauthor{vives1987marshallian} \cite{vives1987marshallian}.}
Specifically, by applying the fundamental theorem of calculus to the left-hand side of \Cref{lemma-deriv-marshallian}, we see that the Marshallian consumer surplus equals $\budget[\buyer] \log \left(\partderiv[{\expend[\buyer](\price, \goalutil[\buyer])}][{\goalutil[\buyer]}] \right)$.
The key takeaway is thus that any objective function we might seek to optimize that includes a buyer's Marshallian consumer surplus is thus optimizing their Marshallian demand, so that optimizing this objective yields a utility-maximizing allocation for the buyer, constrained by their budget.

\begin{restatable}{lemma}{lemmaderivmarshallian}
\label{lemma-deriv-marshallian}
If buyer $\buyer$'s utility function $\util[\buyer]$ is CCH,
then
$\subdiff[\price] \left( 
    \budget[\buyer] \log \left(\partderiv[{\expend[\buyer](\price, \goalutil[\buyer])}][{\goalutil[\buyer]}]  \right) \right) = 
    \marshallian[\buyer](\price, \budget[\buyer])$.
\end{restatable}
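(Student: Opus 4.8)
The plan is to use the degree-$1$ homogeneity of $\util[\buyer]$ to eliminate the dependence of the inner derivative on the target utility level, and then to convert the resulting Hicksian object into Marshallian demand through the UMP--EMP duality identities. First I would record three consequences of homogeneity: since $\util[\buyer]$ is homogeneous of degree $1$, scaling a bundle that attains utility $1$ by a factor $\goalutil[\buyer]$ attains utility $\goalutil[\buyer]$ at proportionally scaled cost, so $\expend[\buyer](\price, \goalutil[\buyer]) = \goalutil[\buyer]\, \expend[\buyer](\price, 1)$, $\hicksian[\buyer](\price, \goalutil[\buyer]) = \goalutil[\buyer]\, \hicksian[\buyer](\price, 1)$, and $\marshallian[\buyer](\price, \budget[\buyer]) = \budget[\buyer]\, \marshallian[\buyer](\price, 1)$, the last two read as equalities of sets. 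In particular $\expend[\buyer](\price, \cdot)$ is \emph{linear} in its second argument, hence differentiable there with $\partial_{\goalutil[\buyer]} \expend[\buyer](\price, \goalutil[\buyer]) = \expend[\buyer](\price, 1)$, a quantity independent of $\goalutil[\buyer]$. This already reduces the left-hand side of the claim to $\subdiff[\price]\bigl(\budget[\buyer]\log \expend[\buyer](\price, 1)\bigr)$.

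Next I would differentiate through the logarithm. Since $\log$ is differentiable and $\expend[\buyer](\price, 1) > 0$ at the prices of interest, the chain rule for subdifferentials through the differentiable outer map (the composition property from the preliminaries) gives $\subdiff[\price]\bigl(\budget[\buyer]\log \expend[\buyer](\price, 1)\bigr) = \frac{\budget[\buyer]}{\expend[\buyer](\price, 1)}\, \subdiff[\price]\expend[\buyer](\price, 1)$, where the right-hand side is the scalar $\frac{\budget[\buyer]}{\expend[\buyer](\price, 1)}$ times a set. Applying the generalized Shephard's lemma (\Cref{shepherd}) at utility level $1$ gives $\subdiff[\price]\expend[\buyer](\price, 1) = \hicksian[\buyer](\price, 1)$, so the expression becomes $\frac{\budget[\buyer]}{\expend[\buyer](\price, 1)}\,\hicksian[\buyer](\price, 1)$.

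Finally I would pass from Hicksian to Marshallian demand. Instantiating the duality identity \Cref{marshallian-hicksian} at $\goalutil[\buyer] = 1$ yields $\marshallian[\buyer](\price, \expend[\buyer](\price, 1)) = \hicksian[\buyer](\price, 1)$, while homogeneity of Marshallian demand in the budget gives $\marshallian[\buyer](\price, \expend[\buyer](\price, 1)) = \expend[\buyer](\price, 1)\,\marshallian[\buyer](\price, 1)$. Combining these, $\hicksian[\buyer](\price, 1) = \expend[\buyer](\price, 1)\,\marshallian[\buyer](\price, 1)$, and therefore $\frac{\budget[\buyer]}{\expend[\buyer](\price, 1)}\,\hicksian[\buyer](\price, 1) = \budget[\buyer]\,\marshallian[\buyer](\price, 1) = \marshallian[\buyer](\price, \budget[\buyer])$, which is exactly the claimed identity.

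The hard part will be the set-valued bookkeeping rather than any single algebraic step: because CCH utilities need not be strictly concave, $\hicksian[\buyer]$ and $\marshallian[\buyer]$ are genuine correspondences, so every equality above must be read as an identity of sets under Minkowski scalar multiplication. I would therefore take care to verify that the scalar factor $\budget[\buyer]/\expend[\buyer](\price, 1)$ is strictly positive (which holds since budgets are positive and attaining a positive utility level costs a positive amount at the prices considered), so that multiplying the Hicksian demand set by it preserves the set equality, and to confirm that the homogeneity relations and the duality identity \Cref{marshallian-hicksian} indeed hold set-valuedly in this non-strictly-concave regime, which is precisely the generality demanded by \Cref{shepherd}.
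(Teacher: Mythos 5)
Your proposal is correct and follows essentially the same route as the paper's proof: use homogeneity to reduce $\partial_{\goalutil[\buyer]}\expend[\buyer](\price,\goalutil[\buyer])$ to $\expend[\buyer](\price,1)$, pull the scalar out of the subdifferential through the logarithm, apply the generalized Shephard's lemma at utility level $1$, and then convert $\hicksian[\buyer](\price,1)$ into $\marshallian[\buyer](\price,\budget[\buyer])$ via the UMP--EMP duality identities. The only (immaterial) difference is the last conversion: the paper rewrites $1/\expend[\buyer](\price,1)$ as $\indirectutil[\buyer](\price,1)$ and uses identity \eqref{hicksian-marshallian} together with homogeneity of the Hicksian demand, whereas you use identity \eqref{marshallian-hicksian} at $\goalutil[\buyer]=1$ together with homogeneity of the Marshallian demand in the budget --- two symmetric traversals of the same duality square.
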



\begin{remark}
\Cref{lemma-deriv-marshallian} makes the dual of our convex program easy to interpret,
and thus sheds light on the dual of the Eisenberg-Gale program.
\if 0
Consider the sub-expression $- \sum_{\buyer \in \buyers} \budget[\buyer] \log{ \left(\partderiv[{\expend[\buyer](\price, \goalutil[\buyer])}][{\goalutil[\buyer]}] \right)}$.
Since the utility functions are assumed to be CCH, maximizing the budget-weighted logarithm of the marginal cost of utility is equivalent to maximizing the Marshallian demand, or the total of the buyers' utilities constrained by their budgets (\Cref{lemma-deriv-marshallian}).
Second, consider the sub-expression $\sum_{\good \in \goods} \price[\good]$.
Minimizing prices is equivalent to calibrating prices such that the market clears.
Hence, the objective function can be seen as the efforts of a fictional auctioneer who is trying to maximize the utility of each buyer constrained by their budget, while controlling prices so that the market clears. \amy{i am missing the market clearing intuition. i like the next intuition better, about minimizing distance. maybe i need to see some examples. is there a way to interpret the VALUE of our convex program?}
\fi
Specifically, we can interpret the dual as specifying prices that minimize the distance between the sellers' surplus and the buyers' \wineusesparingly{}{Marshallian} surplus.
\wine{The left hand term is simply the sellers' surplus, and by \Cref{lemma-deriv-marshallian}, the right hand term can be seen as the buyers' total Marshallian surplus.}
\deni{in general the surplus of a seller is their profit}
\end{remark}


\begin{remark}
The lemmas we have proven in this section and the last provide a possible explanation as to why no primal-dual type
convex program is known that solves Fisher markets when buyers have \emph{non-homogeneous\/} utility functions, in which the primal describes optimal allocations while the dual describes equilibrium prices.
By the homogeneity assumption, a CCH buyer can increase their utility level (resp. decrease their spending) by $c\%$ by increasing their budget (resp. decreasing their desired utility level) by $c\%$\wineusesparingly{ (\Cref{homo-indirect-util}); see \Cref{sec:Proofs3} }{}.
This observation implies that the marginal expense of additional utility, i.e., ``bang-per-buck'', and the marginal utility of additional budget, i.e., ``buck-per-bang'', are constant (\Cref{deriv-indirect-util}).
Additionally,  optimizing prices to maximize buyers'  ``bang-per-buck'' is equivalent to optimizing prices to minimize their ``buck-per-bang'' (\Cref{inverse-expend}).
Further, optimizing prices to minimize their ``buck-per-bang'' is equivalent to maximizing their utilities constrained by their budgets (\Cref{lemma-deriv-marshallian}).
Thus, the equilibrium prices computed by the dual of our program, which optimize the buyers' buck-per-bang, simultaneously optimize their utilities constrained by their budgets.
In particular, equilibrium prices can be computed without reference to equilibrium allocations (\Cref{inverse-expend} + \Cref{lemma-deriv-marshallian}).
In other words, assuming homogeneity, the computation of the equilibrium allocations and prices can be isolated into separate primal and dual problems.
\end{remark}

Next, we show that the subdifferential of the dual of our convex program is equal to the negative excess demand in the associated market.

\excessdemand*
\if 0
\begin{theorem}
\label{excess-demand}
The subdifferential of the dual of the program given in \Cref{new-convex} for a CCH Fisher market $(\util, \budget)$ at any price $\price$ is equal to the negative excess demand in $(\util, \budget)$ at price $\price$:
\begin{align}   
    \subdiff[\price] \left( \sum_{\good \in \goods} \price[\good] - \sum_{\buyer \in \buyers} \budget[\buyer] \log{ \partial_{\goalutil[\buyer]}{\expend[\buyer] (\price, \goalutil[\buyer])}} \right) 
    = - \excess(\price)
\end{align}
\end{theorem}
\fi


\citeauthor{fisher-tatonnement} \cite{fisher-tatonnement} define a class of markets called \mydef{convex potential function (CPF)} markets.
A market is a CPF market, if there exists a convex potential function $\varphi$ such that $\subdiff[\price] \varphi(\price) = - \excess(\price)$.
They then prove that Fisher markets are CPF markets by showing, through the Lagrangian of the Eisenberg-Gale program, that its dual is a convex potential function \cite{fisher-tatonnement}.
Likewise, \Cref{excess-demand} implies the following:

\begin{corollary}
All CCH Fisher markets are CPF markets.
\end{corollary}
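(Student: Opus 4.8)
The plan is to observe that this corollary is essentially an immediate consequence of \Cref{excess-demand} together with the convexity asserted in \Cref{new-convex}. Recall that a Fisher market is a CPF market precisely when there exists a convex function $\varphi$ whose subdifferential at every price satisfies $\subdiff[\price] \varphi(\price) = -\excess(\price)$. My strategy is simply to exhibit such a $\varphi$ explicitly, taking it to be the objective function of the dual program from \Cref{new-convex}.

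First I would set
\begin{align*}
    \varphi(\price) = \sum_{\good \in \goods} \price[\good] - \sum_{\buyer \in \buyers} \budget[\buyer] \log{ \left( \partial_{\goalutil[\buyer]}{\expend[\buyer] (\price, \goalutil[\buyer])} \right)}.
\end{align*}
Since \Cref{new-convex} asserts that the dual of our program is a convex program over $\price \in \R^\numgoods_+$, its objective $\varphi$ is a convex function; this supplies the convexity half of the CPF definition at no extra cost.

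Second, I would invoke \Cref{excess-demand} directly: it states that for any CCH Fisher market $(\util, \budget)$ and any price $\price$, the subdifferential of exactly this objective equals the negative excess demand, i.e.\ $\subdiff[\price] \varphi(\price) = -\excess(\price)$. With $\varphi$ convex and its subdifferential coinciding with $-\excess$ everywhere, $\varphi$ is by definition a convex potential function for the market, and hence every CCH Fisher market is a CPF market.

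The hard part here is not the corollary itself but the results it rests on: the genuine work lies in \Cref{new-convex} (establishing that the dual is a bona fide convex program, so that $\varphi$ is convex) and in \Cref{excess-demand} (identifying the subdifferential with negative excess demand, which in turn leans on the generalized Shephard's lemma of \Cref{shepherd} and on \Cref{lemma-deriv-marshallian}). Given both of those, the corollary requires no further argument beyond naming the potential function, and I would present it as a one-line deduction.
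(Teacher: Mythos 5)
Your proof is correct and is essentially identical to the paper's: the paper also just exhibits the dual objective of \Cref{new-convex} as the potential function, with convexity coming from the dual program and the subdifferential identity coming from \Cref{excess-demand}. Nothing further is needed.
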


\begin{proof}
A convex potential function $\phi: \mathbb{R}^{\numgoods} \to \mathbb{R}$ for any CCH Fisher market $(\util, \budget)$ is given by:
\begin{align}
    \potential (\price) = \sum_{\good \in \goods} \price[\good] - \sum_{\buyer \in \buyers} \budget[\buyer] \log{ \left( \partderiv[{\expend[\buyer] (\price, \goalutil[\buyer])}][{\goalutil[\buyer]}] \right)}\label{potential-func}
\end{align}
\end{proof}

Fix a kernel function $h$ for the Bregman divergence $\divergence[h]$.
If the mirror descent procedure given in  \Crefrange{generalized-descent}{generalized-descent2} is run on \Cref{potential-func} (i.e., choose $f = \potential$), it is then equivalent to the t\^atonnement process for some monotonic function of the excess demand \cite{fisher-tatonnement}\wineusesparingly{:}{.}
\wine{\begin{align}
    \price(t+1) = \min_{\price \in \mathbb{R}^\numgoods} \left\{ \price(t) +  \subgrad(t)\left(\price - \price(t) \right) + \gamma_t \divergence[\mathrm{h}][\price][\price(t)] \right\}  && \text{for } t = 0, 1, 2, \hdots\\
    \subgrad(t) \in \subdiff[\price] \potential(\price(t))\\
    \price(0) \in \mathbb{R}^\numgoods_+
\end{align}
}{}

Thus, by varying the kernel function $h$ of the Bregman divergence we can obtain different t\^atonnement rules.
For instance, if $h = \frac{1}{2}||x||^2_2$, the mirror descent process reduces to the classic t\^atonnement rule given by $G(\x) = \gamma_t \x$, for $\gamma_t > 0$ and for all $t \in \N$, in \Crefrange{tatonnement}{tatonnement2}.

\csch{Using this equivalence between \samy{generalized gradient}{mirror} descent and t\^atonnement, we can pick a particular kernel function $h$, and then potentially use \Cref{theorem-devanur} to establish convergence rates for t\^atonnement.}

\section{Convergence of Discrete T\^atonnement}
\label{sec:convergence}

In this section, we conduct an experimental investigation%
\footnote{Our code can be found on \coderepo.}
of the rate of convergence of \mydef{entropic t\^atonnement}, which corresponds to the t\^atonnement process given by mirror descent
with the scaled generalized Kullback-Leibler (KL) divergence, specifically $6 \divergence[\mathrm{KL}][\p][\q]$, as the Bregman divergence, and a fixed step size $\gamma$.
This particular update rule, which reduces to \Crefrange{exp:subgradient-descent}{exp:subgradient-descent-init}, has been the focus of previous work \cite{fisher-tatonnement}.
Interest in this update rule stems from the fact that prices can never reach 0, which ensures that demands, and as a consequence, excess demands, are bounded throughout the t\^atonnement process. 
This is because the demand for any good $\good$ is always upper bounded by $\frac{\sum_{\buyer \in \buyers} \budget[\buyer]}{\price[\good]}$. Before presenting experimental results for entropic t\^atonnement, we note that the process is not guaranteed to converge in all CCH Fisher markets. 
It does not converge, for example, in linear Fisher markets:
\wineusesparingly{}{An example of such a market can be found in the full version.} \wine{The following market,%
\footnote{We thank an anonymous reviewer of an earlier version of this paper for providing this counterexample.} which assumes buyers with linear utility functions, and is a slightly modified version of an example provided by \citeauthor{cole2019balancing} \cite{cole2019balancing}.
\begin{example}
Consider a linear Fisher market with two goods and one buyer with utility function $\util[ ](x_1, x_2) = x_1 + x_2 $ and budget $b = 1$.
Assume initial prices of $p_1(0) = 1$ and $p_2(0) = e^{1/\gamma}$, for any $\gamma > 0$.
In the first iteration, the demand for the first good is 1, while demand for the second is 0.
Therefore, the prices during the second iteration are $p_1((1)) = (1)e^{1/\gamma} = e^{1/\gamma} $ and $ p_2((1)) = e^{1/\gamma} e^{-1/\gamma} = 1$.
As the prices cycle, so too does t\^atonnement.
\label{ex:linear}
\end{example}}
\if 0
\amy{but is there a lower bound on prices? because if there isn't, it still feels like demand is unbounded?}\deni{The price of a good can decrease at most such that $\price^{t+1} = \price^t e^{\nicefrac{-1}{\gamma}}$ where $\gamma$ is the learning rate. So in $T$ iterations the minimum number that the price can get to is $\frac{\sum_{\buyer \in \buyers} \budget[\buyer]}{e^{\nicefrac{-T}{\gamma}} \price[\good]^0}$. Also note, demand can be unbounded only when the price of the good is 0 otherwise demand will always be upperbounded because of the budget constraint. Since for finite learning rates the update rule can will never result in a zero price, we will always avoid unbounded prices.}
\fi
%


\csch{The t\^atonnement process reduces to a \emph{sub\/}gradient method with fixed step size when utilities are not strictly concave, and such methods do not necessarily converge in last iterates.
Regardless, we now proceed to show that entropic t\^atonnement converges in continuous, \emph{strictly\/} concave, and homogeneous (CSCH) Fisher markets at a rate of $O(\nicefrac{1}{t})$\amy{conjecture!}.
This result extends \citeauthor{fisher-tatonnement}'s previous worst-case rate of convergence for Leontief Fisher markets to the entire class of CSCH Fisher markets. 


\emph{Notation.} 
In the remainder of this section, we refer to price vector $\price(t)$ by $\price^t$, and denote $\price(t+1)- \price(t)$ by $\pricediff$. At time $t$ in the t\^atonnement process, we denote buyer $\buyer$'s Marshallian demand for good $\good$ i.e., $ x\in \marshallian[\buyer][\good](
\price(t))$, by $\demand[\buyer][\good]^t$;
the total demand for good $\good$, i.e., $\sum_{\buyer \in \buyers} \marshallian[\buyer][\good](
\price(t))$, by $\demand[\good]^t$; and
the buyer $\buyer$'s Hicksian demand for good $\good$ at utility level 1, i.e.,  $h \in \hicksian[\buyer][\good](\price(t), 1)$, by $\hicksian[\buyer][\good]^t$.

We provide a sketch of the proof used for obtaining our convergence rate in this section. The omitted lemmas and proofs can be found in Appendix \ref{sec:Proofs6}. At a high level, our proof follows \citeauthor{fisher-tatonnement}'s proof  technique for Leontief Fisher markets \cite{fisher-tatonnement}, which works as follows.
First, we prove that under certain assumptions, the condition required by \Cref{theorem-devanur} holds when $f$ is the convex potential function for CSCH Fisher markets defined in \Cref{potential-func}, i.e., $f = \potential$, the negative excess demand.
For these assumptions to be valid, we need to set $\gamma$ to be five times the maximum demand for any good throughout the t\^atonnment process.
Further, since $\gamma$ needs to be set at the outset, we need to upper bound $\gamma$.
To do so, we derive a naive as well as an informed bound on the maximum demand for any good during t\^atonnement in CCH and mixed CSCH Fisher markets, respectively.
Based on these bounds, we then derive a naive upper bound on $\gamma$ CCH Fisher markets, as well as a more informed bound for mixed CSCH Fisher markets.
Finally, we use \Cref{theorem-devanur} to obtain the convergence rate of $O(\nicefrac{1}{t})$\amy{conjecture!}. 


The following lemma derives the conditions under which the antecedent of \Cref{theorem-devanur} holds for entropic t\^atonnement.

\begin{restatable}{lemma}{ineqdevanur}
\label{ineq-devanur}
The following holds for entropic t\^atonnement when run on a CSCH Fisher market $(\util, \budget)$: for all $t \in \N$:
$    \potential(\price^{t+1}) - \lapprox[\potential][\price^{t+1}][\price^t] \leq \gamma \divergence[\mathrm{h}][{\price^{t+1}}][{\price^t}]$,
%
where $\gamma = 5 \cdot \max\limits_{\substack{t \in \N \\ \good \in \goods}} \left\{\demand[\good]^{t} \right\}$ and $\divergence[\mathrm{h}] = 6 \cdot \divergence[\mathrm{KL}][][]$.
\end{restatable}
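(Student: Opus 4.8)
The plan is to reduce the claim to a per-buyer estimate and then to the behavior of demand along the one-step price path. First I would simplify the potential. Since each $\expend[\buyer]$ is homogeneous of degree $1$ in $(\price,\goalutil[\buyer])$, we have $\partderiv[{\expend[\buyer](\price,\goalutil[\buyer])}][{\goalutil[\buyer]}]=\expend[\buyer](\price,1)$ independent of $\goalutil[\buyer]$, so writing $g_\buyer(\price):=\budget[\buyer]\log\expend[\buyer](\price,1)$ gives $\potential(\price)=\sum_{\good\in\goods}\price[\good]-\sum_{\buyer\in\buyers}g_\buyer(\price)$. By \Cref{excess-demand} a valid subgradient at $\price^t$ is $-\excess(\price^t)$, and because the linear term $\sum_{\good}\price[\good]$ is reproduced exactly by its own linear approximation, the $\bm 1_\numgoods$ part of the excess demand cancels against $\nabla\sum_\good\price[\good]$, leaving
\[
\potential(\price^{t+1})-\lapprox[\potential][{\price^{t+1}}][{\price^t}]=\sum_{\buyer\in\buyers}\Bigl(\lapprox[g_\buyer][{\price^{t+1}}][{\price^t}]-g_\buyer(\price^{t+1})\Bigr).
\]
Each $g_\buyer$ is concave (a concave increasing $\log$ composed with the concave $\expend[\buyer](\cdot,1)$), and by \Cref{lemma-deriv-marshallian} its gradient is $\marshallian[\buyer](\price,\budget[\buyer])$, so every summand is nonnegative; this already shows the left-hand side is nonnegative.

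Next I would convert each summand into an exact path integral. Writing $\price^s:=\price^t+s\,\pricediff$ and using $\nabla g_\buyer(\price)=\marshallian[\buyer](\price,\budget[\buyer])$,
\[
\lapprox[g_\buyer][{\price^{t+1}}][{\price^t}]-g_\buyer(\price^{t+1})=\int_0^1\bigl(\marshallian[\buyer](\price^t,\budget[\buyer])-\marshallian[\buyer](\price^s,\budget[\buyer])\bigr)\cdot\pricediff\;\mathrm{d}s,
\]
so the whole left-hand side equals the path integral of the \emph{change} in aggregate demand against the price increment. In parallel I would rewrite the target $6\gamma\divergence[\mathrm{KL}][{\price^{t+1}}][{\price^t}]$ in the log-price coordinates natural to entropic t\^atonnement: setting $z_\good:=\log(\price[\good]^{t+1}/\price[\good]^t)$, the per-good KL summand is $\price[\good]^t\bigl(e^{z_\good}z_\good-e^{z_\good}+1\bigr)$, which is comparable to $\price[\good]^t z_\good^2$ whenever $|z_\good|$ is bounded.

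The crux is to bound the integrand, i.e.\ to control how far aggregate demand moves as prices travel from $\price^t$ to $\price^s$. Here I would exploit the explicit entropic update, which yields $z_\good=(\demand[\good]^t-1)/(6\gamma)$; since $\gamma=5\max_{t,\good}\demand[\good]^t$, each $|z_\good|$ is uniformly bounded by a small absolute constant, so over one step every price changes only by a bounded multiplicative factor. Using the CSCH hypothesis — which makes each $\marshallian[\buyer]$ single-valued and differentiable — I would bound $\lvert\marshallian[\buyer][\good](\price^s,\budget[\buyer])-\marshallian[\buyer][\good](\price^t,\budget[\buyer])\rvert$ along this short path by a constant multiple of the demand level times the log-price displacement, so that each integral term is dominated by a constant multiple of $\price[\good]^t z_\good^2$, hence of the matching per-good KL term; summing over goods and buyers and inserting $\gamma=5\max$-demand, the constants $5$ and $6$ are exactly calibrated to close the inequality at $6\gamma\divergence[\mathrm{KL}][{\price^{t+1}}][{\price^t}]$. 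I expect this per-step demand-variation estimate to be the main obstacle: for general CSCH utilities (no elasticity assumption is made in this lemma) the substitution effects governing $\nabla_\price\marshallian[\buyer]$ are not a priori controlled, so the argument must lean on the bounded step size forced by $\gamma=5\max$-demand together with the budget identity $\price\cdot\marshallian[\buyer](\price,\budget[\buyer])=\budget[\buyer]$ to keep demand changes proportional to demand magnitudes, and making the numeric constants align is the delicate part.
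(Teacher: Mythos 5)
Your opening moves match the paper's: using homogeneity to write $\partial_{\goalutil[\buyer]}\expend[\buyer](\price,\goalutil[\buyer])=\expend[\buyer](\price,1)$, reducing the claim to per-buyer terms $\budget[\buyer]\log\bigl(\expend[\buyer](\price^t,1)/\expend[\buyer](\price^{t+1},1)\bigr)$ plus the linear correction, and noting the left-hand side is a sum of Bregman-type gaps of concave functions. The gap is in your crux step. To close the inequality you need, along the one-step path, a bound of the form $\lvert\marshallian[\buyer][\good](\price^s,\budget[\buyer])-\marshallian[\buyer][\good](\price^t,\budget[\buyer])\rvert\le C\,\demand[\buyer][\good]^t\,\lvert z_\good\rvert$ with a \emph{universal} constant $C$ — this is exactly a bounded-elasticity-of-demand hypothesis, which the lemma does not grant and which is false for the class of CSCH markets. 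Concretely, CES utilities with $\rho=1-\epsilon$ are continuous, strictly concave, and homogeneous, yet have elasticity of substitution $1/\epsilon$; even though \Cref{price-change} confines each price to a multiplicative window $[e^{-1/5},e^{1/5}]$ per step, an individual good's demand can change by a factor that blows up as $\epsilon\to 0$. The budget identity $\price\cdot\marshallian[\buyer](\price,\budget[\buyer])=\budget[\buyer]$ only controls the price-weighted \emph{sum} of demands, not componentwise sensitivity, so it cannot rescue the estimate, and no calibration of the constants $5$ and $6$ can be made market-independent along this route.

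The paper avoids this entirely by never bounding the one-step \emph{change} in demand. It passes to Hicksian demand at utility level $1$ (via the homogeneity identity \Cref{equiv-def-demand}, $\marshallian[\buyer][\good](\price,\budget[\buyer])=\budget[\buyer]\hicksian[\buyer][\good](\price,1)/\sum_k\hicksian[\buyer][k](\price,1)\price[k]$) and then uses only sign- and order-based facts valid for \emph{all} continuous concave utilities, with no elasticity control: expenditure minimization, $\sum_\good\hicksian[\buyer][\good]^t\price[\good]^t\le\sum_\good\hicksian[\buyer][\good]^{t+1}\price[\good]^t$ (\Cref{hicksian-is-expend-minimizer}), and the law of demand, $\sum_\good\pricediff[\good]\hicksian[\buyer][\good]^{t+1}\le\sum_\good\pricediff[\good]\hicksian[\buyer][\good]^{t}$ (\Cref{law-of-demand-corollary}). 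Combined with elementary logarithm bounds (valid because $\lvert\pricediff[\good]\rvert/\price[\good]^t\le\nicefrac{1}{4}$) and the AM--GM inequality of \Cref{leontief-ineq}, these let the potentially uncontrolled $\hicksian[\buyer][\good]^{t+1}$ terms cancel (\Cref{fix-buyer-bound}), leaving an error term $\tfrac{4}{3}\sum_{\buyer,\good}\demand[\buyer][\good]^t(\pricediff[\good])^2/\price[\good]^t$ that involves only time-$t$ demands, which is then compared to the KL divergence via \Cref{kl-divergence-1}. If you want to repair your argument, you should replace the Lipschitz-type demand-variation estimate with these consumer-theoretic inequalities; as written, the proposal proves the lemma only under an additional bounded-elasticity assumption, which is a strictly weaker statement.
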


Combining \Cref{ineq-devanur} with \Cref{theorem-devanur}, we obtain the main result in this section, namely a worst-case convergence rate of $O(\nicefrac{1}{t})$\amy{conjecture!} for entropic t\^atonnement.

\begin{theorem}
\label{main-convergence-thm}
The following holds for entropic t\^atonnement when run on a CSCH Fisher market $(\util, \budget)$: for all $t \in \N$,
\begin{align}
    \potential(\price^t) - \potential(\price^*) \leq \frac{\gamma \divergence[\mathrm{h}][\price^*][\price^0]}{t}
    \enspace ,
\end{align}
where $\gamma = 5 \max_{\substack{\good \in \goods\\ t \in \N}} \left\{\sum_{\buyer \in \buyers}\marshallian[\buyer][\good]\left(\price(t), \budget[\buyer]\right) \right\}$ and $\divergence[\mathrm{h}] = 6 \cdot \divergence[\mathrm{KL}][][]$.
\end{theorem}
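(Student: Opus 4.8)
The plan is to reduce the theorem to a single application of \Cref{theorem-devanur}, using \Cref{ineq-devanur} to discharge its hypothesis. The starting observation, established in \Cref{sec:equiv}, is that entropic t\^atonnement on a CSCH Fisher market $(\util, \budget)$ is \emph{exactly} the mirror descent update of \Crefrange{generalized-descent}{generalized-descent2} applied to the objective $f = \potential$ from \Cref{potential-func}, with Bregman divergence $\divergence[\mathrm{h}] = 6\divergence[\mathrm{KL}][][]$ and fixed step size. Hence the price iterates $\price^t$ coincide with the abstract iterates $\bm{x}(t)$ of \Cref{theorem-devanur}, and the equilibrium price vector $\price^*$ is a minimizer of $\potential$ over $\R^\numgoods_+$ (it is precisely the optimal solution of the dual program in \Cref{thm2-overview}, equivalently the point where $\bm{0} \in \subdiff[\price]\potential(\price^*) = -\excess(\price^*)$ modulo the nonnegativity constraint, by \Cref{excess-demand}).

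Next I would check the two convexity hypotheses of \Cref{theorem-devanur}. The kernel $h(\price) = 6\sum_{\good \in \goods}\left(\price[\good]\log \price[\good] - \price[\good]\right)$ is convex, being a positive scaling of the negative entropy. Convexity of $f = \potential$ follows because every CCH Fisher market is a CPF market (the corollary to \Cref{excess-demand}): $\potential$ is a convex potential function whose subdifferential is the monotone map $-\excess$, and equivalently it is the objective of the dual convex program of \Cref{thm2-overview}. Thus both $f$ and $h$ are convex, as required.

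With these identifications, \Cref{ineq-devanur} supplies precisely the per-iteration inequality that \Cref{theorem-devanur} demands: for $\gamma = 5\max_{t \in \N,\, \good \in \goods}\{\demand[\good]^t\} = 5\max_{\substack{\good \in \goods \\ t \in \N}}\left\{\sum_{\buyer \in \buyers}\marshallian[\buyer][\good](\price(t),\budget[\buyer])\right\}$ and $\divergence[\mathrm{h}] = 6\divergence[\mathrm{KL}][][]$, we have $\potential(\price^{t+1}) \le \lapprox[\potential][\price^{t+1}][\price^t] + \gamma\,\divergence[\mathrm{h}][{\price^{t+1}}][{\price^t}]$ for every $t \in \N$. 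Invoking \Cref{theorem-devanur} with $f = \potential$, $\bm{x}(t) = \price^t$, $\bm{x}^* = \price^*$, and this $\gamma$ immediately yields $\potential(\price^t) - \potential(\price^*) \le \frac{\gamma}{t}\divergence[\mathrm{h}][\price^*][\price^0]$, which is exactly the claimed bound.

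The genuinely substantive work is therefore not this combination but the verification of \Cref{ineq-devanur}, i.e., establishing the descent inequality with the specific constant $\gamma = 5\max_{t,\good}\demand[\good]^t$ and the factor $6$ on the KL divergence, and this is where I expect the main obstacle to lie. One must control the gap $\potential(\price^{t+1}) - \lapprox[\potential][\price^{t+1}][\price^t]$, a second-order quantity in the increment $\pricediff$, by the Bregman term $\gamma\,\divergence[\mathrm{h}][{\price^{t+1}}][{\price^t}]$. This requires expressing the curvature of $\potential$ through demand (via \Cref{excess-demand} together with Shephard's lemma, \Cref{shepherd}), leaning on strict concavity to restrict attention to CSCH rather than merely CCH markets so that the Hicksian and Marshallian demands are singleton-valued and the relevant derivatives exist, and then bounding the resulting demand-dependent terms by $\max_{t,\good}\demand[\good]^t$, which is the origin of the constant $5$. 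For the bound to be deployable, $\gamma$ must additionally be finite and fixable at initialization; this is secured by the demand bound $\demand[\good]^t \le \frac{\sum_{\buyer \in \buyers}\budget[\buyer]}{\price[\good]^t}$ together with the fact that the entropic update keeps prices strictly positive throughout, ruling out the blow-up exhibited by linear markets in \Cref{ex:linear}.
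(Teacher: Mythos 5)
Your proposal is correct and follows essentially the same route as the paper: the paper's entire proof of this theorem is the one-line observation that \Cref{ineq-devanur} supplies the per-iterate $\gamma$-Bregman condition required by \Cref{theorem-devanur}, which then yields the stated bound with $f = \potential$, $\bm{x}(t) = \price^t$, and $\bm{x}^* = \price^*$. Your additional checks (convexity of $\potential$ and of the scaled entropy kernel, and that $\price^*$ minimizes $\potential$) are left implicit in the paper but are consistent with it, and you correctly locate the substantive work in \Cref{ineq-devanur} rather than in this combination step.
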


Note, however, that for the 
\Cref{main-convergence-thm} to hold, the step size $\gamma$ needs to be known in advance so that it can be set at the start of the t\^atonnement process.
In other words, we need to upper bound the demand on all goods all throughout t\^atonnement.

A naive upper bound for all CCH Fisher markets can be derived as follows.
If the number of iterations $T$ for which t\^atonnement is to be run were known at the outset, then one could use a naive bound of $\max_{t \in \N} \demand[\good]^t \leq e^{\nicefrac{T}{5}} \frac{\sum_{\buyer \in \buyers} \budget[\buyer]}{\min_{\good \in \goods} \price[\good]^0}$, since the price of a good can decrease at most by a factor of $e^{-\nicefrac{1}{5}}$ during each iteration, by \Cref{price-change}.
This time-dependent upper bound could then be combined with the doubling trick to extend the convergence result to an unknown time horizon.

In addition to this naive bound, we also provide a more informed bound on $\gamma$
for mixed CCH Fisher markets, since the naive upper bound can be quite large.
At a high level, to calculate this informed upper bound, we first bound the demand for any good in gross complements CCH Fisher markets,%
\footnote{We provide a comparison of our bound for gross complements markets to the bound provided by \citeauthor{fisher-tatonnement} \cite{fisher-tatonnement} for Leontief markets in appendix \ref{sec:Proofs6}.}
and then in gross substitutes CCH markets throughout t\^atonnement.
We then combine these upper bounds to obtain an upper bound on the demand for all goods all throughout t\^atonnement in all mixed CCH Fisher markets.
A complete proof can be found in Appendix \ref{sec:Proofs6}.

\begin{restatable}{lemma}{upperbounddemand}
\label{upper-bound-demand}
For all mixed CCH Fisher markets, if entropic t\^atonnement 
is run s.t.\ for all goods and for all $t \in \N$ $\frac{|\pricediff[\good]|}{\price[\good]^t} \leq \frac{1}{4}$, then the demand for any good $\good \in \goods$ throughout the process
is bounded as follows:
\begin{align}
\max_{t \in \N} \demand[\good]^t \leq 2 \frac{\sum_{\buyer \in \buyers} \budget[\buyer]}{\price[\good]^0} \max \left\{\max_{k \in \goods}\demand[k]^0,  \frac{\max_{l \in \goods}\demand[l]^0}{\demand[\good]^0} \right\} + 2 \max_{k \in \goods} \frac{\demand[\good]^0}{  \demand[k]^0} \enspace .
\end{align}
\end{restatable}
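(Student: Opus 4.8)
The plan is to follow the trajectory of the entropic t\^atonnement process and to bound $\demand[\good]^t$ at every $t$ by combining a budget identity with the sign structure of demand responses in a mixed market. The one fact I would lean on throughout is the elementary spending bound $\price[\good]^t \demand[\good]^t \le \sum_{\buyer \in \buyers} \budget[\buyer]$, which follows by summing each buyer's Walras-law identity $\price \cdot \marshallian[\buyer](\price, \budget[\buyer]) = \budget[\buyer]$ (local non-satiation) over $\buyer \in \buyers$; it converts a lower bound on $\price[\good]^t$ into an upper bound on $\demand[\good]^t$. The whole difficulty is therefore to show that along the trajectory $\price[\good]^t$ cannot fall too far below $\price[\good]^0$ relative to the initial demands. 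Here I would exploit the feedback built into t\^atonnement under the entropy kernel: the price of good $\good$ rises in a step exactly when its excess demand $\demand[\good]^t - 1$ is positive, so $\price[\good]^t$ decreases only while $\demand[\good]^t$ is already below supply and hence small. I would combine this with the standing hypothesis $|\pricediff[\good]|/\price[\good]^t \le \frac{1}{4}$ (and \Cref{price-change}) to control how far any single price, and hence any single demand, can move in one step.

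Following the paper's outline, I would first treat the two pure cases separately. In a gross complements market the aggregate demand for $\good$ is nonincreasing in every other price, so the only way to inflate $\demand[\good]^t$ is to drive $\price[\good]^t$ itself down; but $\price[\good]^t$ decreases only when $\demand[\good]^t < 1$, so a peak argument (examine the last step before the maximizing time at which the demand for $\good$ last crossed $1$, and apply the per-step price-change bound) pins $\demand[\good]^t$ in terms of $\price[\good]^0$ and $\demand[\good]^0$. In a gross substitutes market the demand for $\good$ can instead be amplified when the prices of its substitutes rise; I would bound how high those substitute prices can climb by applying the same feedback together with the per-step control to each of them, translate this via the spending identity $\sum_{\good \in \goods}\price[\good]^t \demand[\good]^t = \sum_{\buyer \in \buyers}\budget[\buyer]$ into a lower bound on $\price[\good]^t$, and again invoke the spending bound. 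Each pure case yields an estimate of the form $\demand[\good]^t \le c \, \frac{\sum_{\buyer \in \buyers}\budget[\buyer]}{\price[\good]^0}$ times a ratio of initial demands.

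To obtain the mixed-market statement I would partition the goods $k \neq \good$ into those that are substitutes for $\good$ and those that are complements for $\good$, which is possible precisely because the market is mixed, and then argue that the demand for $\good$ is largest in the worst case in which substitute prices are as high and complement prices as low as the trajectory permits. Feeding the two pure-case estimates into this worst case, and splitting according to whether $\demand[\good]^0$ is large or small relative to $\max_{l \in \goods}\demand[l]^0$, reproduces the two summands of the claimed bound: the term $2\frac{\sum_{\buyer \in \buyers}\budget[\buyer]}{\price[\good]^0}\max\{\max_{k \in \goods}\demand[k]^0,\ \max_{l \in \goods}\demand[l]^0/\demand[\good]^0\}$ comes from the spending/price-floor estimate, and the residual term $2\max_{k \in \goods}\demand[\good]^0/\demand[k]^0$ comes from the first-step correction in the substitutes estimate.

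The main obstacle is the gross substitutes estimate. Unlike the complements case, where the demand for $\good$ is monotonically controlled by its own price, here a rising wall of substitute prices can pump demand into $\good$, and the spending bound alone gives only the crude estimate $\sum_{\buyer \in \buyers}\budget[\buyer]/\price[\good]^t$. Sharpening this requires tracking all substitute prices simultaneously and showing that the same demand-to-price feedback that keeps $\price[\good]^t$ from collapsing also keeps the substitute prices from exploding, and then aggregating these coupled per-good estimates into a single $\max_{k \in \goods}$ expression while losing no more than the constant factor $2$. Making the argument hold uniformly over all $t \in \N$, rather than only near equilibrium, is what forces an inductive, step-by-step use of the $\frac{1}{4}$ price-change hypothesis.
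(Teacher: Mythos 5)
Your plan follows the paper's proof essentially step for step: the Walras-law spending bound $p_j^t x_j^t \le \sum_{i} b_i$, the per-step price-change control of \Cref{price-change}, the demand--price feedback (a price falls only while its demand is below supply) used to count the maximal number of price-decreasing iterations, separate bounds for the gross-substitutes and gross-complements cases (\Cref{demand-ub-gs} and \Cref{demand-ub-gc}), and a final partition of the remaining goods into substitutes and complements of $j$ whose two contributions are added to give the two summands of the claimed bound. One small correction: in the gross-complements case the paper's mechanism for inflating the demand for $j$ is a fall in the \emph{other} goods' prices (by homogeneity, scaling all prices by $e^{-\nicefrac{1}{5}}$ scales demand up by $e^{\nicefrac{1}{5}}$), not a fall in $p_j^t$ alone as you state, though your peak/last-crossing argument still goes through once that is fixed.
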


Using \Cref{upper-bound-demand}, we then obtain the following corollary of \Cref{main-convergence-thm} which gives a worst-case convergence rate of $O(\nicefrac{1}{t})$\amy{conjecture!} for entropic t\^atonnement in mixed CSCH Fisher markets, this time with the more informed bound.
We note that if a better bound were to be derived than that given in \Cref{upper-bound-demand}, we could use it to reduce the convergence rate by a constant factor, since $\gamma$ is part of the convergence rate.

\begin{corollary}

\label{convergence-result}
The following holds for the entropic t\^atonnement process when run on a mixed CSCH Fisher market $(\util, \budget)$:
\begin{align}
    \potential(\price^t) - \potential(\price^*) \leq \frac{\gamma \divergence[\mathrm{h}][\price^*][\price^0]}{t}
    \enspace ,
\end{align}

\noindent
where $\gamma = 10 \cdot \max_{\good \in \goods} \left\{ \left[ \frac{\sum_{\buyer \in \buyers} \budget[\buyer]}{\price[\good]^0} \max \left\{\max_{k \in \goods}\demand[k]^0,  \frac{\max_{l \in \goods}\demand[l]^0}{\demand[\good]^0} \right\} + \max_{k \in \goods} \frac{\demand[\good]^0}{  \demand[k]^0} \right] \right\}$ and $\divergence[\mathrm{h}] = 6 \cdot \divergence[\mathrm{KL}][][]$.

\end{corollary}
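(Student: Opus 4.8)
The plan is to obtain \Cref{convergence-result} as a corollary of \Cref{main-convergence-thm} by replacing the trajectory-dependent step size $\gamma = 5 \max_{\good \in \goods,\, t \in \N} \demand[\good]^t$ with a quantity that can be computed at initialization from $(\util, \budget)$ and the initial prices $\price^0$. \Cref{main-convergence-thm} already delivers the rate $\potential(\price^t) - \potential(\price^*) \leq \nicefrac{\gamma \divergence[\mathrm{h}][\price^*][\price^0]}{t}$; the only genuine work is therefore to bound $\max_{\good,t} \demand[\good]^t$ above by an expression involving only initial data, and to confirm that the step size built from that expression is legitimate for the underlying $\gamma$-Bregman argument.

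First I would invoke \Cref{upper-bound-demand}, which bounds $\max_{t \in \N} \demand[\good]^t$ for each fixed good $\good$ by $2 \frac{\sum_{\buyer \in \buyers} \budget[\buyer]}{\price[\good]^0} \max\{\max_{k \in \goods}\demand[k]^0, \nicefrac{\max_{l \in \goods}\demand[l]^0}{\demand[\good]^0}\} + 2 \max_{k \in \goods} \nicefrac{\demand[\good]^0}{\demand[k]^0}$. Taking the maximum over $\good \in \goods$ on both sides bounds $\max_{\good, t} \demand[\good]^t$ by initial demands and prices alone, and multiplying through by $5$ reproduces exactly the constant $\gamma = 10 \max_{\good \in \goods}\{\ldots\}$ stated in the corollary. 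I would then fix the \emph{constant} step size to this value $\gamma$, so that, along the resulting trajectory, $5 \max_{\good,t}\demand[\good]^t \leq \gamma$. Because the $\gamma$-Bregman condition established in \Cref{ineq-devanur} holds with $5 \max_{\good,t}\demand[\good]^t$ and $\divergence[\mathrm{h}] \geq 0$, it also holds with the larger constant $\gamma$; feeding this into \Cref{theorem-devanur} yields the claimed inequality with the corollary's $\gamma$.

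Before \Cref{upper-bound-demand} can be applied, however, its running hypothesis $\nicefrac{|\pricediff[\good]|}{\price[\good]^t} \leq \nicefrac{1}{4}$ for every good and every $t$ must be discharged. I would verify this directly from the entropic update: since the subgradient of $\potential$ equals the negative excess demand (\Cref{excess-demand}), the multiplicative price change per step is $\exp(\nicefrac{(\demand[\good]^t - 1)}{\gamma})$, and with $\gamma$ at least five times the maximal demand the exponent stays within $\pm \nicefrac{1}{5}$, giving $\nicefrac{|\pricediff[\good]|}{\price[\good]^t} \leq \exp(\nicefrac{1}{5}) - 1 < \nicefrac{1}{4}$; this is precisely the per-step contraction recorded in \Cref{price-change}.

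The main obstacle is the apparent circularity between the step size and the demand bound: $\gamma$ must be fixed at the outset, yet \Cref{upper-bound-demand} presupposes the price-change condition, which in turn depends on $\gamma$ through the very trajectory it governs. I would break this loop by a short induction on $t$: assuming $\nicefrac{|\pricediff[\good]|}{\price[\good]^s} \leq \nicefrac{1}{4}$ for all $s < t$, \Cref{upper-bound-demand} bounds the demands through time $t$, which bounds the exponent in the entropic update and hence forces $\nicefrac{|\pricediff[\good]|}{\price[\good]^t} \leq \nicefrac{1}{4}$, closing the induction. Once the price-change condition holds for all $t$, the demand bound holds globally, so $5\max_{\good,t}\demand[\good]^t \leq \gamma$ is confirmed, the \Cref{ineq-devanur} condition holds at our fixed step size, and the convergence estimate follows from \Cref{theorem-devanur} exactly as in \Cref{main-convergence-thm}.
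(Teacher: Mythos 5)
Your proposal is correct and follows essentially the same route as the paper: apply \Cref{upper-bound-demand} to convert the trajectory-dependent quantity $5\max_{\good,t}\demand[\good]^t$ from \Cref{main-convergence-thm} into the initial-data constant $\gamma = 10\max_{\good}\{\cdots\}$, check the price-change hypothesis via \Cref{price-change}, and conclude via \Cref{theorem-devanur}. Your explicit induction to break the circularity between the fixed step size and the demand bound is a more careful rendering of a step the paper's own proof leaves implicit, but it is the same argument, not a different one.
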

}

\if 0
\begin{proof}[Proof of \Cref{convergence-result}]

If we set $\gamma = 5 \cdot \max_{\substack{\good \in \goods \\ t \in \N}} \left\{ \demand[\good]^t \right\}$, then the condition on prices given by \Cref{price-change} holds, and we have that for all $t \in \N, \frac{|\pricediff[\good]|}{\price[\good]^t} \leq \frac{1}{4}$. Hence by \Cref{upper-bound-demand}, we can set 
$$\gamma = 5 \max_{\good \in \goods} \left\{2 \frac{\sum_{\buyer \in \buyers} \budget[\buyer]}{\price[\good]^0} \max \left\{\max_{k \in \goods}\demand[k]^0,  \frac{\max_{l \in \goods}\demand[l]^0}{\demand[\good]^0} \right\} + 2 \max_{k \in \goods} \frac{\demand[\good]^0}{  \demand[k]^0} \right\} \geq 5 \cdot \max_{\substack{\good \in \goods \\ t \in \N}} \left\{ \demand[\good]^t \right\}$$ at the beginning of the process to ensure that the condition of \Cref{theorem-devanur} holds. 
\end{proof}
\fi

\if 0
\subsection{T\^atonnement in CCH Markets}

\sdeni{}{
We first recall a classic result for \samy{generalized gradient}{mirror} descent with KL-divergence.

\begin{theorem}[\cite{juditsky2011first}]\label{kl-ggd-thm}
Consider the \samy{generalized gradient}{mirror} descent rule \Crefrange{generalized-descent}{generalized-descent2}, suppose that there exists $L \geq 0$ such that $\left\|\subgrad(t)\right\| \leq L$, and that $\x(t) \in C$ for all $t \in N$ where is a convex, compact, and non-empty set. If $\gamma_t = \frac{1}{L} \sqrt{\frac{2\max_{\x \in C} \divergence[\mathrm{KL}][\x][\x(0)]}{t}}$, then we have:
\begin{align}
    \potential\left(\frac{1}{t} \sum_{k = 1}^t \price^k \right) - \potential(\price^*) \leq \frac{\sqrt{2}L\max_{\x \in C}\divergence[\mathrm{KL}][\x][\x(0)]}{\sqrt{t}}  
\end{align}
\end{theorem}

Note that $\E_{t \sim \mathrm{Unif}\left\{ 1, T\right\}}\left[\price^{(t)} \right] = \frac{1}{t} \sum_{k = 1}^t \price^k$, hence we can use \Cref{kl-ggd-thm} to prove the convergence of t\^atonnement to a uniform expected competitive equilibrium. To do so, however, we need to show that the sequence of prices computed by t\^atonnement stay within a compact, convex, and non-empty set.

\begin{lemma}

\end{lemma}
}

\fi 


\citeauthor{fisher-tatonnement} \cite{fisher-tatonnement} proved a worst-case \emph{lower\/} bound of $\Omega(\nicefrac{1}{t^2})$ to complement their $O(\nicefrac{1}{t})$ worst-case upper bound for the convergence rate of entropic t\^atonnement in Leontief markets.
These results suggest a possible convergence rate of $O(\nicefrac{1}{t^2})$ or $O(\nicefrac{1}{t})$ for entropic t\^atonnement for a class of Fisher markets that includes Leontief markets.
The goal of our experiments is to better understand the class of Fisher markets for which entropic t\^atonnement converges, and to see if a worst-case convergence rate of $O(\nicefrac{1}{t^2})$ or $O(\nicefrac{1}{t})$ might hold, not only for Leontief, but for a larger class of CCH Fisher, markets.



In all our experiments, we randomly generated  mixed CES Fisher markets, each with 70 buyers and 30 goods.
The buyers' values for goods, and their budgets, were drawn uniformly between 2 and 3. We drew initial prices uniformly in the range $[2,3]$.
In our first two experiments, we initialized 10,000 mixed CES markets, and we chose the $\rho$ parameter uniformly at random with $\nicefrac{1}{2}$ probability in the range $[\nicefrac{1}{4}, \nicefrac{3}{4}]$ and with $\nicefrac{1}{2}$ probability in the range $ [-1, -101]$.%
\footnote{We ruled out values of $\rho$ close to 0 and 1
to ensure numerical stability.}
Note that this range for $\rho$ ensures that the elasticity of demand $E$ of the market is bounded above by 4.
Under these conditions, we ran the entropic t\^atonnement process with a step size of 2 in each market.

In our first set of experiments, we assigned each buyer, uniformly at random, either CES, Cobb-Douglas, or Leontief utilities, with $E \le 4$. 
We observed convergence in all experiments, at the rate depicted in \Cref{fig:obj_change}.
These results suggest that the sublinear convergence rate of $O(\nicefrac{1}{t})$ could be improved to $O(\nicefrac{1}{t^2})$ for entropic t\^atonnement in Leontief markets, and could perhaps even be extended to a larger class of Fisher markets, beyond Leontief.
(The inner frame in \Cref{fig:obj_change} is a closeup of iterations 0 to 10, intended to highlight that the average trajectory of the objective value throughout entropic t\^atonnement decreases at a rate faster than $O(\nicefrac{1}{t^2})$.)

\wine{Second, to try to better understand the behavior of t\^atonnement in mixed CCH Fisher markets where the elasticity of demand of buyers might be unbounded, we ran almost the same experiment again, but this time, each buyer was assigned, uniformly at random, either CES, Cobb-Douglas, Leontief, or \emph{linear\/} utilities.
We then checked, for each market, if the process converged.
We show a sample entropic t\^atonnement trajectory for one mixed CCH Fisher market in \Cref{fig:obj_change}.
We see that the objective value decreases initially (at a rate slower than $\nicefrac{1}{t^2}$), but then, after about 10 iterations, it begins to oscillate.
While at times the market may be tending toward an equilibrium, it is unable to settle at one.}

We then ran the same experiment with buyers with linear utilities included (i.e., unbounded elasticity of demand), and found that out of 10,000 experiments, 9889 of them did \emph{not\/} converge.
This result is unsurprising in light of
\wineusesparingly{Example~\ref{ex:linear}}{the fact that t\^atonnement is not guaranteed to converge in linear markets}, since, in expectation, buyers with linear utilities make up a quarter of this market.

Finally, we ran experiments in which we varied the elasticity of demand. 
To do so, we ran t\^atonnement in markets with elasticities of demand $E \in \{ 0.1, 0.2, \hdots, 0.9 \}$, and we varied the step size $\gamma \in \{ 1, 2, \hdots, 9 \}$.
The results are presented in \Cref{fig:heatmap}.
In this heat map, purple signifies that all experiments converged, while yellow signifies that no experiments converged.
Interestingly, as the elasticity of demand of the market increased, prices still converged, albeit only with a sufficiently large step size, \samy{and}{thus} at a slower rate.

In the light of the results of our experiments, we conjecture that t\^atonnement converges at a rate of $O(\nicefrac{(1 + E)}{t^2})$ in CCH Fisher markets.
We recall that for Leontief utilities $E = 0$, for weak gross complements markets $E \leq 1$, for weak gross substitutes markets $E \geq 1$, and for linear utilities $E = \infty$.
Our conjecture thus implies that a convergence rate of $O(\nicefrac{1}{t^2})$ applies for Leontief Fisher markets, i.e., perfect complements, and that this rate deteriorates as the market's elasticity of demand increases, ultimately leading to non-convergence in markets of perfect substitutes, i.e., linear Fisher markets. That is, the convergence rate of t\^atonnement in CCH Fisher markets can be seen as a combination of the convergence rates of two types of extreme markets: perfect complements, i.e., Leontief, and perfect substitutes, i.e., linear, Fisher markets.

\deni{The paragraph below feels redundant.}
\sdeni{Together, these two experiments support our conjecture that in order for entropic t\^atonnement with a fixed step size to converge in CCH Fisher markets, the buyers' elasticity of demand must be bounded.
The larger the elasticity of demand is, the more responsive demand is to changes in prices.
In particular, if elasticity of demand is unbounded, then the t\^atonnement price updates can be arbitrarily large, potentially leading to cycles: high demand and low prices, followed by low demand and high prices, over and over again.
%
Recalling the known worst-case convergence lower bound of $\Omega(\nicefrac{1}{t^2})$ for Leontief Fisher markets \cite{fisher-tatonnement}, our experiments support the conjecture that entropic t\^atonnement may have a tight convergence bound of $\Theta(\nicefrac{1}{t^2})$ in CCH Fisher markets, assuming buyers' elasticity of demand is bounded.}{}

\begin{figure}[h]
\begin{subfigure}[t]{0.475\textwidth}
    \centering
    \includegraphics[width=\textwidth]{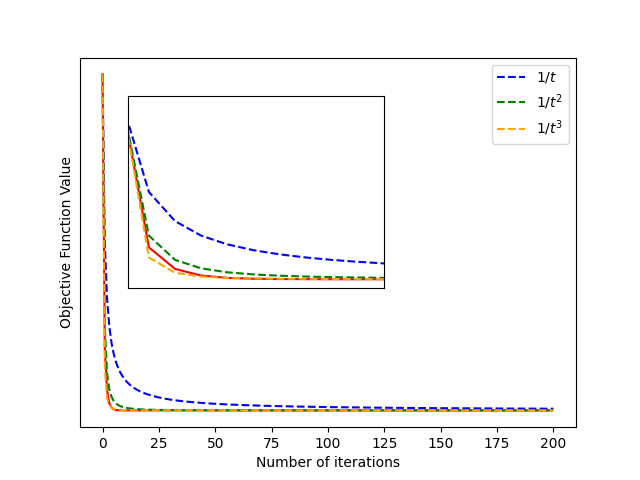}
    \caption{Average trajectory of the value of the objective function throughout t\^atonnement with KL divergence for mixed CES Fisher markets with $E \leq 4$ is drawn in {\color{red} red}. The predicted worst case sublinear convergence rate is depicted by a dashed {\color{blue} blue} line. A convergence rate of $\nicefrac{1}{t^2}$ and $\nicefrac{1}{t^3}$ are denoted in {\color{green} green} and {\color{orange} orange}, respectively.}
    \label{fig:obj_change}
\end{subfigure}
\hspace{2em}
\begin{subfigure}[t]{0.475\textwidth}
    \centering
    \includegraphics[width=\textwidth]{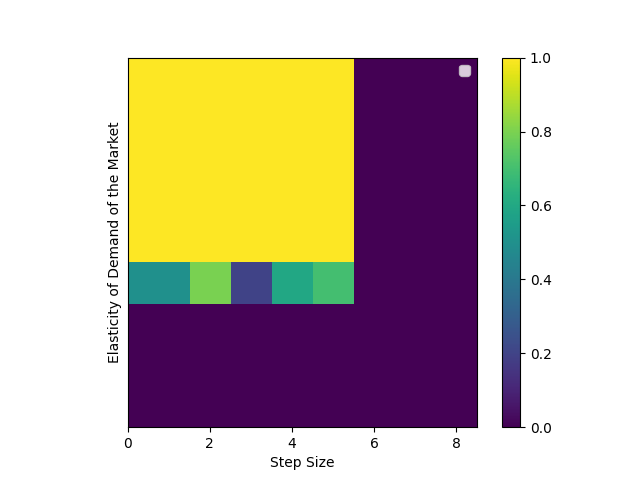}
    \caption{Percentage of experiments that converge as a function of step size and elasticity of demand. Purple signifies that all experiments converged; yellow signifies that no experiments converged.
    For sufficiently low values of $E$, we see convergence regardless of step size; and for sufficiently large step sizes, we see convergence regardless of $E$.}
    \label{fig:heatmap}
\end{subfigure}
\end{figure}

\if 0
\begin{subfigure}[t]{0.475\textwidth}
    \centering
    \includegraphics[width=\textwidth]{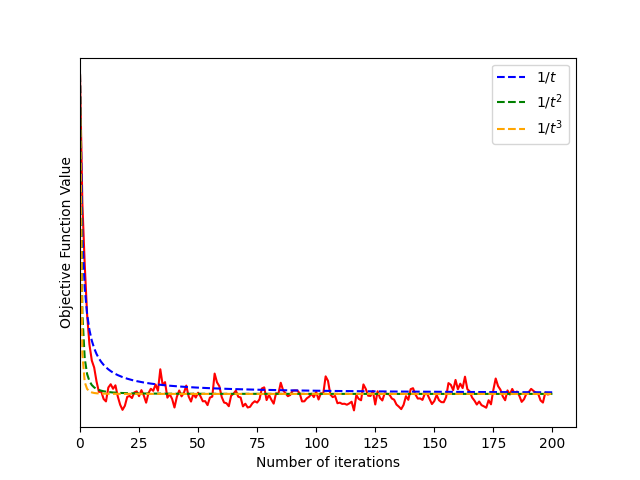}
    \caption{Example trajectory of the value of the objective function throughout t\^atonnement with KL divergence for CCH Fisher markets is drawn in {\color{red} red}. The predicted worst case sublinear convergence rate is depicted by a dashed {\color{blue} blue} line. A convergence rate of $\nicefrac{1}{t^2}$ and $\nicefrac{1}{t^3}$ are denoted in {\color{green} green} and {\color{orange} orange}, respectively.}
    \label{fig:obj_change_linear}
\end{subfigure}
\fi
\section{Conclusion}
In this paper, we introduced a new convex program 
whose dual characterizes the equilibrium prices of CCH Fisher markets via expenditure functions.
We also related this dual to the dual of the Eisenberg-Gale program.
The dual of our program is easily interpretable, and thus allows us to likewise interpret the Eisenberg-Gale dual.
In particular, while it is known that an equilibrium allocation that solves the Eisenberg-Gale program (the primal) is one that maximizes
the Nash social welfare, we show that equilibrium prices---the solution to the dual---minimize 
the distance between the sellers' surplus and the buyers' Marshallian surplus.
Building on the results of \citeauthor{fisher-tatonnement} \cite{fisher-tatonnement}, who showed that the subdifferential of the dual of the Eisenberg-Gale program is equal to the negative excess demand, we show the same for the dual of our convex program, which implies that solving our convex program via generalized gradient descent is equivalent to solving a Fisher market by means of t\^atonnement.

The main technical innovation in this work is to express equilibrium prices via expenditure functions.
This insight could allow us to prove the convergence of t\^atonnement for more general classes of CCH utility functions, beyond CES.
To this end, we ran experiments that supported the conjecture that t\^atonnement converges at a rate of $O\left( \nicefrac{(1+ E)}{t^2}\right)$ in CCH Fisher markets with elasticity of demand bounded by $E$.
If this result holds in general, it would improve upon and generalize prior results for Leontief markets to a larger class of CCH markets, which includes nested and mixed CES utilities.
In future work, we plan to continue to investigate this conjecture, using the insights gained from our consumer-theoretic characterization of the equilibrium prices of Fisher markets.

We believe that our analysis offers important insights about the Eisenberg-Gale program.
We observe that in CCH markets, maximizing the bang-per-buck is equivalent to minimizing the buck-per-bang, and moreover, the buck-per-bang and bang-per-buck are constant across utility levels and budgets.
Additionally, optimizing prices to minimize buyers' buck-per-bang is equivalent to maximizing their utilities constrained by their budgets.
As a result, equilibrium prices can be determined by minimizing the buck-per-bang of buyers, which depends only on prices.
In other words, the computation of equilibrium prices can be decoupled from the computation of equilibrium allocations.
Indeed, there exists a primal-dual convex program for these markets.
The challenge in solving Fisher markets where buyers' utility functions can be non-homogeneous seems to stem from the fact that the buck-per-bang and bang-per-buck vary across utility levels and budget, which in turn means that the computation of prices and allocations cannot be decoupled.
As a result, we suspect that a primal-dual convex program formulation that solves Fisher markets for buyers with non-homogeneous utility functions may not exist. 

\wine{An interesting direction for future work would be to devise market dynamics that adjust allocations and prices together in search of equilibria.
We believe that such dynamics may be necessary to find equilibria in Fisher markets beyond CCH (e.g., continuous and concave but not necessarily homogeneous utilities), the next frontier in this line of research.
%
Related, to the best of our knowledge, Marshallian consumer surplus in Fisher markets is not well understood. On the contrary, Marshallian consumer surplus is mostly studied in markets with a unique good, and other than \citeauthor{vives1987marshallian} \cite{vives1987marshallian}, not much effort has been put into obtaining explicit expressions for Marshallian consumer surplus. In our work, we have shown that in CCH Fisher markets, the Marshallian consumer surplus can be expressed as a function of the expenditure of the buyers.
It remains to be seen if a similar explicit expression can be obtained for utility functions beyond CCH.
We believe that future work aimed at understanding the Marshallian consumer surplus in Fisher markets could further our understanding of these markets, perhaps beyond CCH Fisher markets.}{}

\if 0
\amy{technical innovations:
1. Marshallian/Hicksian demand expenditure/indirect utility.
2. concepts from consumer theory, use Cheung's proof to all CSCH utilities, easily
3. use results from consumer theory; Law of Demand; Shephard's Lemma
4. connect this line computational line of work to the econ literature, by incorporating economic tools}

\amy{i no longer think this paragraph is so important. i now think we have results that speak for themselves, so i don't feel we have to work so hard to justify them.}
\sdeni{}{At a high level, our main technical innovation to obtain a generalization of the bound is to express equilibrium prices in Fisher markets via the expenditure minimization problem, an approach which has not yet been taken. This new characterization of equilibrium prices via the expenditure minimization problem coupled with a lemma that allows us to express any Marshallian demand as a Hicksian demand (\Cref{marshallian-hicksian}) allows us to use the law of demand (\Cref{law-of-demand}) to generalize the proof techniques used by \citeauthor{fisher-tatonnement} \cite{fisher-tatonnement} to prove a sublinear convergence rate for t\^atonnement in Leontief Fisher markets. The decomposition of Marshallian demand as Hicksian demand which we have not seen be used in previous work, provides a generalization of different bounds for arbitrary CSCH utilities. So far, the literature has proven the convergence of market dynamics for explicitly defined utility classes; our approach suggests that studying equilibrium computation via the lens of the expenditure minimization problem can allow to prove results beyond the realm of explicitly defined utility classes, e.g., CES utilities, and into the realm of arbitrarily defined utility function classes, e.g., CSCH utilities. Additionally, by using fundamental concepts from consumer theory to characterize equilibrium prices, we provide a connection between the computational and economic lines of work on competitive equilibria, which have often taken divergent approaches.}
\fi

\section*{Acknowledgments}

We would like to thank Richard Cole, Yun Kuen Cheung, and Yixin Tao for feedback on an earlier version of this paper. This work was partially supported by NSF Grant CMMI-1761546.

\printbibliography

\newpage
\appendix

\section{Section \ref{sec:program} Omitted Proofs}\label{sec:Proofs4}\label{sec:Proofs3}

Recall that the dual proposed by \citeauthor{cole2019balancing} \cite{cole2019balancing} is given by: 
\begin{align*}
    \min_{\price \in \R^\numgoods} \sum_{\good \in \goods} \price[\good] + \sum_{\buyer \in \buyers} \budget[\buyer] \log\left(\max_{\allocation[\buyer]\in \R^{\numgoods}_+: \allocation[\buyer] \cdot \price \leq \budget[\buyer]}\util[\buyer](\allocation[\buyer])\right)
\end{align*}

This dual's optimal differs from the optimal value of the Eisenberg-Gale program by a constant factor (of $\sum_{\buyer \in \buyers} \budget[\buyer]$) as shown by the following example:

\begin{example}\label{dual-diff-cole}
Consider a linear Fisher market with only one good and one buyer with a utility of $1$ for the good and a budget of $1$ as well.
The equilibrium of this market is given by $\allocation[1][1]^* = 1, \price[1]^* = 1$. 
The primal of the Eisenberg-Gale program thus evaluates to $\budget[1]\log(\allocation[1][1]^*) = (1) \log(1) = 0$, while the dual given by \citeauthor{cole2019balancing} evaluates to $1 \log(1) + 1 = 1$.
Hence, the optimal primal value is not equal to the optimal dual value of the dual given by \citeauthor{cole2019balancing}, so this dual is not exactly the dual of the Eisenberg-Gale program.
\end{example}

We now derive the dual of the Eisenberg-Gale program.
We begin with an essential lemma, which states that UMP for CCH utility functions can be expressed as an unconstrained optimization problem.

\begin{restatable}{lemma}{equivoptim}
\label{equiv-optim}
The optimization problem
\begin{align}
\max_{\allocation[\buyer] \in \R^\numgoods_+: \allocation[\buyer] \cdot \price \leq \budget[\buyer]} \budget[\buyer] \log (\util[\buyer](\allocation[\buyer]))
\end{align}
is equivalent to the optimization problem
\begin{align}
    \max_{\allocation[\buyer] \in \R_+^\numgoods} \left\{\budget[\buyer] \log(\util[\buyer](\allocation[\buyer])) + \budget[\buyer] - \allocation[\buyer] \cdot \price \right\}
\enspace .
\end{align}
\end{restatable}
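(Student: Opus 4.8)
The plan is to exploit the degree-one homogeneity of $\util[\buyer]$ via a one-dimensional scaling argument, together with local non-satiation to force the budget constraint to bind; conceptually, this is just the statement that the Lagrange multiplier on the budget constraint equals $1$, which is exactly why the coefficient on $\budget[\buyer] - \allocation[\buyer]\cdot\price$ in the unconstrained problem is $1$. Write $P_1$ for the optimal value of the constrained problem and $P_2$ for that of the unconstrained one. The inequality $P_2 \ge P_1$ is immediate: on the feasible set $\{\allocation[\buyer] \in \R^\numgoods_+ : \allocation[\buyer]\cdot\price \le \budget[\buyer]\}$ the extra term $\budget[\buyer] - \allocation[\buyer]\cdot\price$ is nonnegative, so the unconstrained objective pointwise dominates the constrained one there, and taking the supremum over the larger (unconstrained) domain can only increase the value.

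The heart of the argument is the reverse inequality, which I would obtain by showing that every maximizer $\allocation[\buyer]^\ast$ of the unconstrained problem exhausts its budget. Fix such an $\allocation[\buyer]^\ast$, with $\util[\buyer](\allocation[\buyer]^\ast)>0$ and $\price\cdot\allocation[\buyer]^\ast>0$ (justified below), and scale it by $t>0$. Degree-one homogeneity gives $\util[\buyer](t\allocation[\buyer]^\ast)=t\,\util[\buyer](\allocation[\buyer]^\ast)$, so the unconstrained objective along this ray becomes
\[
g(t) \;=\; \budget[\buyer]\log t + \budget[\buyer]\log\util[\buyer](\allocation[\buyer]^\ast) + \budget[\buyer] - t\,(\price\cdot\allocation[\buyer]^\ast),
\]
which is strictly concave in $t$ (since $g''(t)=-\budget[\buyer]/t^2<0$) with unique maximizer $t^\star=\budget[\buyer]/(\price\cdot\allocation[\buyer]^\ast)$. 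Optimality of $\allocation[\buyer]^\ast$ forces $t=1$ to be this maximizer, hence $\price\cdot\allocation[\buyer]^\ast=\budget[\buyer]$. Consequently the extra term vanishes at $\allocation[\buyer]^\ast$, the point is feasible for the constrained problem, and its constrained objective equals $P_2$; therefore $P_1\ge P_2$, and combined with the first step $P_1=P_2$.

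To match the solution sets I would close the loop in the other direction. By local non-satiation the budget binds at every constrained maximizer, so there too $\budget[\buyer]-\allocation[\buyer]\cdot\price=0$ and the unconstrained objective equals the constrained objective $=P_1=P_2$, making every constrained maximizer an unconstrained one as well. Together with the previous step this shows the two problems share both optimal value and optimal set, which is the claimed equivalence.

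The \emph{main obstacle} I anticipate is not the scaling computation but the side conditions it rests on: existence of a maximizer, and the positivity $\util[\buyer](\allocation[\buyer]^\ast)>0$ and $\price\cdot\allocation[\buyer]^\ast>0$. Positivity of utility follows because a zero-utility bundle yields objective $-\infty$ while local non-satiation and homogeneity guarantee the existence of bundles of strictly positive utility; and $\price\cdot\allocation[\buyer]^\ast>0$ is forced because otherwise $g(t)\to+\infty$ as $t\to\infty$, contradicting optimality (and implicitly using $\budget[\buyer]>0$). Existence is delicate only when some coordinates of $\price$ vanish, so that the budget set is noncompact; here I would argue that the objective still tends to $-\infty$ along every unbounded ray on which it is finite, or restrict first to $\price\in\R^\numgoods_{++}$ and pass to the limit, which suffices since the dual is ultimately evaluated at equilibrium prices.
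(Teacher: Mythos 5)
Your proposal is correct, but it reaches the conclusion by a genuinely different route from the paper. The paper works through the Lagrangian of the constrained problem: it writes down the KKT stationarity condition, multiplies by $\allocation[\buyer][\good]^*$, invokes complementary slackness, sums over goods, and applies Euler's theorem for degree-one homogeneous functions to conclude that the multiplier on the budget constraint satisfies $\lambda^* = 1$; it then substitutes $\lambda^* = 1$ back into the Lagrangian and identifies the constrained optimum with the unconstrained maximization of the resulting expression. You instead prove the two optimal values coincide by a direct two-inequality argument: one direction is immediate because the extra term $\budget[\buyer] - \allocation[\buyer]\cdot\price$ is nonnegative on the budget set, and the other follows from a one-dimensional scaling argument along the ray through an unconstrained maximizer, where homogeneity reduces the objective to $\budget[\buyer]\log t - t(\price\cdot\allocation[\buyer]^*) + \mathrm{const}$ and first-order optimality in $t$ forces the budget to bind. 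The two arguments use homogeneity in equivalent ways (Euler's theorem is exactly the infinitesimal form of your scaling identity), but they buy different things. The paper's version makes the economic content explicit --- the multiplier on the budget constraint is $1$, which is why the unconstrained penalty term carries coefficient $1$ and which feeds directly into the subsequent derivation of the Eisenberg--Gale dual. Your version is more elementary and, arguably, more rigorous: the paper's final step silently invokes the exactness of the Lagrangian relaxation at the optimal multiplier (a strong-duality/saddle-point fact it does not justify), whereas your argument establishes the equality of values directly and additionally matches the solution sets and flags the boundary issues (existence of a maximizer when some prices vanish, positivity of $\util[\buyer](\allocation[\buyer]^*)$ and of $\price\cdot\allocation[\buyer]^*$) that the paper's proof takes for granted.
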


\begin{proof}[\Cref{equiv-optim}]
The Lagrangian associated with $\max_{\allocation[\buyer]: \allocation[\buyer] \cdot \price \leq \budget[\buyer]} \budget[\buyer] \log (\util[\buyer](\allocation[\buyer]))$ is given by:
\begin{align*}
    L(\allocation[
    \buyer], \lambda, \bm{\mu}) = \budget[\buyer] \log( \util[\buyer] (\allocation[\buyer])) + \lambda \left(  \budget[\buyer] - \allocation[\buyer] \cdot \price \right) + \bm{\mu}^T \allocation[\buyer] \enspace ,
\end{align*}

\noindent
where $\lambda \in \R_+$ and $\bm \mu \in \R^\numgoods_+$ are slack variables.

Let $(\allocation[\buyer]^*, \lambda^*, \bm \mu^*)$ be an optimal solution to the Lagrangian.
From the KKT stationarity condition for this Lagrangian  \cite{kuhn1951kkt}, it holds that, for all $\good \in \goods$,
\begin{align*}
    \frac{\budget[\buyer]}{\util[\buyer] (\allocation[\buyer]^*)} \left[ \frac{\partial \util[\buyer]}{\partial \allocation[\buyer][\good]} \right]_{\allocation[\buyer] = \allocation[\buyer]^*} - \lambda^* \price[\good] + \mu_{\good}^* \doteq 0 \\
    \frac{\budget[\buyer]}{\util[\buyer] (\allocation[\buyer]^*)} \left[ \frac{\partial \util[\buyer]}{\partial \allocation[\buyer][\good]} \right]_{\allocation[\buyer] = \allocation[\buyer]^*} \allocation[\buyer][\good]^* - \lambda^* \price[\good] \allocation[\buyer][\good]^* + \mu_{\good}^* \allocation[\buyer][\good]^* = 0 \\
    \frac{\budget[\buyer]}{\util[\buyer] (\allocation[\buyer]^*)} \left[ \frac{\partial \util[\buyer]}{\partial \allocation[\buyer][\good]} \right]_{\allocation[\buyer] = \allocation[\buyer]^*} \allocation[\buyer][\good]^* - \lambda^* \price[\good] \allocation[\buyer][\good]^* = 0
    \enspace .
\end{align*}
\noindent
The penultimate line is obtained by multiplying both sides by $\allocation[\buyer][\good]^*$, and the last line, by the KKT complementarity condition, namely $\mu_{\good}^* \allocation[\buyer][\good]^* = 0$.

Summing up across all $\good \in \goods$ on both sides yields:
\begin{align*}
    \frac{\budget[\buyer]}{\util[\buyer](\allocation[\buyer]^*)} \sum_{\good \in \goods}\left[\frac{\partial \util[\buyer]}{\partial \allocation[\buyer][\good]}\right]_{\allocation[\buyer] = \allocation[\buyer]^*}\allocation[\buyer][\good]^* - \lambda^* \sum_{\good \in \goods} \price[\good] \allocation[\buyer][\good]^*  = 0\\
    \frac{\budget[\buyer]}{\util[\buyer](\allocation[\buyer]^*)}\util[\buyer](\allocation[\buyer]^*) - \lambda^* \sum_{\good \in \goods} \price[\good] \allocation[\buyer][\good]^*  = 0\\
    \budget[\buyer] -\lambda^* \budget[\buyer]  = 0\\
    \lambda^* = 1 \enspace ,
\end{align*}

\noindent
where the second line is obtained from Euler's theorem for homogeneous functions \cite{lewis1969homogeneous}, and the last line, from the KKT complementarity condition again, namely $\lambda^* \left(\sum_{\good \in \goods} \budget[\buyer] - \price[\good] \allocation[\buyer][\good]^* \right) = 0$.

Hence, plugging $\lambda^* = 1$
back into the Lagrangian restricted to $\R^\numgoods_+$, we get:
\begin{align*}
    \max_{\allocation[\buyer] \in \R^\numgoods_+: \allocation[\buyer] \cdot \price \leq \budget[\buyer]} \budget[\buyer] \log (\util[\buyer] (\allocation[\buyer])) &= \max_{\allocation[\buyer] \in \R^\numgoods_+} \budget[\buyer] \log( \util[\buyer] (\allocation[\buyer])) + \lambda^* \left( \budget[\buyer] -   \allocation[\buyer] \cdot \price \right)\\
    &= \max_{\allocation[\buyer] \in \R^\numgoods_+} \budget[\buyer] \log(\util[\buyer] (\allocation[\buyer])) + \budget[\buyer] - \allocation[\buyer] \cdot \price  \enspace .
\end{align*}
\end{proof}

With this lemma in hand, we can now derive the dual of the Eisenberg-Gale program.

\EGgeneralization*

\begin{proof}[\Cref{EG-generalization}]
The Lagrangian dual function $g: \R^\numgoods \to \R$ of the Eisenberg-Gale primal is given by:
\begin{align*}
    g(\price) 
    &= \max_{\allocation \in \R^{\numbuyers \times \numgoods}_+} L(\allocation, \price) \\
    &= \max_{\allocation \in \R^{\numbuyers \times \numgoods}_+} \left\{ \sum_{\buyer \in \buyers} \budget[\buyer] \log( \util[\buyer] (\allocation[\buyer])) + \sum_{\good \in \goods} \price[\good] \left( 1 - \sum_{\buyer \in \buyers} \allocation[\buyer][\good] \right) \right\} \\
    &= \sum_{\good \in \goods} \price[\good] +  \max_{\allocation \in \R^{\numbuyers \times \numgoods}_+} \left\{ \sum_{\buyer \in \buyers} \left( \budget[\buyer] \log( \util[\buyer] (\allocation[\buyer])) - \sum_{\good \in \goods} \price[\good] \allocation[\buyer][\good] \right)\right\} \\
    &= \sum_{\good \in \goods} \price[\good] + \sum_{\buyer \in \buyers} \max_{\allocation[\buyer] \in \R^{ \numgoods}_+} \left\{\budget[\buyer] \log(\util[\buyer] (\allocation[\buyer])) - \sum_{\good \in \goods} \price[\good] \allocation[\buyer][\good] \right\} \\
    &= \sum_{\good \in \goods} \price[\good] + \sum_{\buyer \in \buyers} \left[ \max_{\allocation[\buyer] \in \R^{ \numgoods}_+} \left\{ \budget[\buyer] \log( \util[\buyer] (\allocation[\buyer])) + \budget[\buyer] - \price \cdot \allocation[\buyer] \right\} - \budget[\buyer] \right] \\
    &= \sum_{\good \in \goods} \price[\good] + \sum_{\buyer \in \buyers} \left[ \max_{\allocation[\buyer] \in \R^\numgoods_+: \allocation[\buyer] \cdot \price \leq \budget[\buyer]} \budget[\buyer] \log (\util[\buyer] (\allocation[\buyer])) - \budget[\buyer]\right] && \text{(\Cref{equiv-optim})} \\
    &= \sum_{\good \in \goods} \price[\good] + \sum_{\buyer \in \buyers} \left[ \budget[\buyer] \log \left(\max_{\allocation[\buyer] \in \R^\numgoods_+: \allocation[\buyer] \cdot \price \leq \budget[\buyer]} \util[\buyer](\allocation[\buyer]) \right) - \budget[\buyer]\right] \\
    &= \sum_{\good \in \goods} \price[\good] + \sum_{\buyer \in \buyers} \left( \budget[\buyer] \log \left( \indirectutil[\buyer] (\price, \budget[\buyer]) \right) - \budget[\buyer] \right)
\end{align*}

\noindent
The order of the $\max$ and the sum over all buyers can be interchanged in this proof because prices are given, which renders the maximization problem for buyer $\buyer$ independent of that of buyer $\buyer'$. 
Therefore, the Eisenberg-Gale dual is $\min_{\price \in \R^\numgoods_+} g(\price) = \min_{\price \in \R^\numgoods_+} \sum_{\good \in \goods} \price[\good] + \sum_{\buyer \in \buyers} \left( \budget[\buyer] \log \left(\indirectutil[\buyer](\price, \budget[\buyer])\right) - \budget[\buyer] \right)$.
\end{proof}

\begin{lemma}
\label{homo-expend}
\label{homo-indirect-util}
Suppose that $\util[\buyer]$ is homogeneous, i.e., $\forall \lambda > 0, \util[\buyer](\lambda \allocation[\buyer]) = \lambda \util[\buyer]( \allocation[\buyer])$.
Then, the expenditure function and the Hicksian demand are homogeneous in $\goalutil[\buyer]$, i.e., for all $\forall \lambda > 0$, $ \expend[\buyer](\price, \lambda \goalutil[\buyer]) = \lambda \expend[\buyer](\price,  \goalutil[\buyer])$ and $\hicksian[\buyer](\price, \lambda \goalutil[\buyer]) = \lambda \hicksian[\buyer](\price, \goalutil[\buyer])$.
Likewise, the indirect utility function and the Marshallian demand are homogeneous in $\budget[\buyer]$, i.e., for all $\forall \lambda > 0$, $\indirectutil[\buyer](\price, \lambda \budget[\buyer]) = \lambda \indirectutil[\buyer](\price, \budget[\buyer])$ and $\marshallian[\buyer](\price, \lambda \budget[\buyer]) = \lambda \marshallian[\buyer](\price, \budget[\buyer])$.
\end{lemma}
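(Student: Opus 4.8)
The plan is to prove all four homogeneity claims by a single unifying device: a change of variables in each defining optimization problem that exploits the degree-$1$ homogeneity of $\util[\buyer]$. First I would handle the expenditure function. Writing $\expend[\buyer](\price, \lambda \goalutil[\buyer]) = \min_{\allocation[ ] \in \R^\numgoods_+ : \util[\buyer](\allocation[ ]) \geq \lambda \goalutil[\buyer]} \price \cdot \allocation[ ]$ and substituting $\allocation[ ] = \lambda \z$ with $\lambda > 0$, the constraint $\util[\buyer](\lambda \z) = \lambda \util[\buyer](\z) \geq \lambda \goalutil[\buyer]$ is equivalent to $\util[\buyer](\z) \geq \goalutil[\buyer]$, while the objective becomes $\lambda\, \price \cdot \z$. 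Since $\lambda > 0$ scales the objective uniformly and $\z \mapsto \lambda \z$ is a bijection of $\R^\numgoods_+$ carrying the feasible region for $\goalutil[\buyer]$ onto that for $\lambda \goalutil[\buyer]$, I can pull $\lambda$ out of the $\min$ to conclude $\expend[\buyer](\price, \lambda \goalutil[\buyer]) = \lambda \expend[\buyer](\price, \goalutil[\buyer])$.

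The Hicksian claim follows from the same substitution, applied at the level of minimizers rather than minimal values: because $\z \mapsto \lambda \z$ bijects the feasible set for utility level $\goalutil[\buyer]$ onto that for $\lambda \goalutil[\buyer]$ while scaling the objective by $\lambda$, it maps minimizers to minimizers, giving $\hicksian[\buyer](\price, \lambda \goalutil[\buyer]) = \lambda\, \hicksian[\buyer](\price, \goalutil[\buyer])$ as sets.

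The indirect-utility and Marshallian claims are the budget-side mirror image. Here I would write $\indirectutil[\buyer](\price, \lambda \budget[\buyer]) = \max_{\allocation[ ] \in \R^\numgoods_+ : \price \cdot \allocation[ ] \leq \lambda \budget[\buyer]} \util[\buyer](\allocation[ ])$, substitute $\allocation[ ] = \lambda \z$ so that the budget constraint $\price \cdot (\lambda \z) \leq \lambda \budget[\buyer]$ collapses to $\price \cdot \z \leq \budget[\buyer]$ and the objective becomes $\util[\buyer](\lambda \z) = \lambda \util[\buyer](\z)$, and then pull $\lambda$ out of the $\max$ to obtain $\indirectutil[\buyer](\price, \lambda \budget[\buyer]) = \lambda \indirectutil[\buyer](\price, \budget[\buyer])$. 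The analogous bijection-of-maximizers argument yields $\marshallian[\buyer](\price, \lambda \budget[\buyer]) = \lambda \marshallian[\buyer](\price, \budget[\buyer])$.

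I expect no serious obstacle here, as each claim reduces to checking that $\z \mapsto \lambda \z$ is a feasibility-preserving bijection that rescales the objective. The only points requiring minor care are (i) the arguments for the set-valued demand correspondences, where I must reason at the level of the entire $\argmin$/$\argmax$ sets rather than a single optimizer, confirming the scaling bijection carries the whole optimal set for one utility level (resp.\ budget) onto the scaled optimal set for the other; and (ii) the use of $\lambda > 0$ throughout, which is essential both to preserve the direction of the inequality constraints and to ensure $\z \mapsto \lambda \z$ is a bijection of $\R^\numgoods_+$.
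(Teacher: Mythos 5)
Your proposal is correct and uses essentially the same argument as the paper: a change of variables $\allocation[\buyer] \mapsto \lambda \allocation[\buyer]$ that, by degree-$1$ homogeneity of $\util[\buyer]$, bijects the feasible set at one utility level (resp.\ budget) onto that at the scaled level while rescaling the objective. The only cosmetic difference is order of derivation — the paper establishes homogeneity of the Hicksian demand first and then obtains the expenditure function's homogeneity from $\expend[\buyer](\price,\goalutil[\buyer]) = \hicksian[\buyer](\price,\goalutil[\buyer]) \cdot \price$, whereas you prove the value-function claim directly and then pass to the $\argmin$ — which does not change the substance.
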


\begin{proof}[\Cref{homo-expend}]
Without loss of generality, assume $\util[\buyer]$ is homogeneous of degree 1.%
\footnote{If the utility function is homogeneous of degree $k$, we can use a monotonic transformation, namely take the $k^{th}$ root, to transform the utility function into one of degree 1, while still preserving the preferences that it represents.}

For Hicksian demand, we have that:
\begin{align}
    &\hicksian[\buyer](\price, \lambda \goalutil[\buyer]) \\
    &= \argmin_{\allocation[\buyer]: \util[\buyer](\allocation[\buyer]) \geq \lambda \goalutil[\buyer]} \price \cdot \left(\lambda \frac{\allocation[\buyer]}{\lambda} \right) \\
    &= \lambda \argmin_{\allocation[\buyer]:  \util[\buyer](\frac{\allocation[\buyer]}{\lambda} ) \geq \goalutil[\buyer]} \price \cdot \left( \frac{\allocation[\buyer]}{\lambda} \right) \\
    &= \argmin_{\allocation[\buyer]: \util[\buyer] \left(\allocation[\buyer] \right) \geq \goalutil[\buyer]} \price \cdot \allocation[\buyer] \\
    &= \lambda \hicksian[\buyer](\price, \goalutil[\buyer]) \enspace . \label{eq:homo-expend}
\end{align}

\noindent 
The first equality follows from the definition of Hicksian demand;
the second, by the homogeneity of $\util[\buyer]$;
the third, by the nature of constrained optimization; and the last, from the definition of Hicksian demand again. 
%
This result implies homogeneity of the expenditure function in $\goalutil[\buyer]$:
\begin{align*}
    \expend[\buyer](\price, \lambda \goalutil[\buyer]) = \hicksian[\buyer](\price, \lambda \goalutil[\buyer]) \cdot \price 
    = \lambda \hicksian[\buyer](\price, \goalutil[\buyer]) \cdot \price 
    = \lambda \expend[\buyer](\price, \goalutil[\buyer]) \enspace .
\end{align*}

\noindent
The first and last equalities follow from the definition of the expenditure function, while the second equality follows from the homogeneity of Hicksian demand (\Cref{eq:homo-expend}).

The proof in the case of Marshallian demand and the indirect utility function is analogous.
\if 0
Similarly, for Marshallian demand, we have:
\begin{align}
    \marshallian[\buyer] (\price, \lambda \budget[\buyer])
    &= \argmax_{\allocation[\buyer]: \price \cdot \allocation[\buyer] \leq \lambda \budget[\buyer]} \util[\buyer] \left( \lambda \left( \frac{\allocation[\buyer]}{\lambda} \right) \right)
    = \lambda \argmax_{\allocation[\buyer]: \price \cdot \left( \frac{\allocation[\buyer]}{\lambda} \right) \leq \budget[\buyer]} \util[\buyer] \left( \frac{\allocation[\buyer]}{\lambda} \right)
    = \lambda \argmax_{\allocation[\buyer]: \price \cdot \allocation[\buyer] \leq \budget[\buyer]} \util[\buyer] (\allocation[\buyer])
    = \lambda \marshallian[\buyer](\price, \budget[\buyer]) \enspace ,
\label{eq:homo-marshallian}
\end{align}
\noindent
The first equality follows from the definition of Marshallian demand,
the second, from the homogeneity of $\util[\buyer]$,
the third, from a straightforward property of constrained optimization, and the last, from the definition of Marshallian demand.

This result implies homogeneity of the indirect utility function in $\budget[\buyer]$:
\begin{align}
    \indirectutil[\buyer] (\price, \lambda \budget[\buyer]) = \util[\buyer] (\marshallian[\buyer] (\price, \lambda \budget[\buyer]))
    = \util[\buyer] (\lambda\marshallian[\buyer] (\price, \budget[\buyer])) 
    = \lambda \util[\buyer] (\marshallian[\buyer] (\price,  \budget[\buyer])) 
    = \indirectutil[\buyer] (\price, \lambda \budget[\buyer]) \enspace .
\end{align}
\noindent
The first and last equality follow from the definition of the indirect utility function,
the second equality, from the homogeneity of Marshallian demand \Cref{eq:homo-marshallian},
and the third equality, from the homogeneity of $\util[\buyer]$.
\fi
\end{proof}

\deriveexpend*
\begin{proof}[\Cref{deriv-indirect-util}]
We prove differentiability from first principles: 
\begin{align*}
    \lim_{h \to 0} \frac{\expend[\buyer](\price, \goalutil[\buyer]+h) - \expend[\buyer](\price, \goalutil[\buyer])}{h} &= \lim_{h \to 0} \frac{\expend[\buyer](\price, (1)(\goalutil[\buyer]+h)) - \expend[\buyer](\price, (1)\goalutil[\buyer])}{h} \\
    &= \lim_{h \to 0} \frac{\expend[\buyer](\price, 1)(\goalutil[\buyer] + h) - \expend[\buyer](\price, 1)(\goalutil[\buyer])}{h} \\
    &= \lim_{h \to 0} \frac{\expend[\buyer](\price, 1)(\goalutil[\buyer] + h - \goalutil[\buyer])}{h} \\
    &= \lim_{h \to 0} \frac{\expend[\buyer](\price, 1)(h)}{h} \\
    &= \expend[\buyer](\price, 1)
\end{align*}

\noindent
The first line follows from the definition of the derivative; the second line, by homogeneity of the expenditure function (\Cref{homo-expend}), since $\util[\buyer]$ is homogeneous; and the final line follows from the properties of limits.
The other two lines follow by simple algebra.

Hence, as $\expend[\buyer](\price, \goalutil[\buyer])$ is differentiable in $\goalutil[\buyer]$, its subdifferential is a singleton with $\subdiff[{\goalutil[\buyer]}] \expend[\buyer](\price, \goalutil[\buyer]) = \left\{\expend[\buyer](\price, 1) \right\}$.
The proof of the analogous result for the indirect utility function's derivative with respect to $\budget[\buyer]$ is similar .
\end{proof}

\inverseexpend*
\begin{proof}[\Cref{inverse-expend}]
By \Cref{derive-expend}, we know that $\expend[\buyer](\price, \goalutil[\buyer])$ is differentiable in $\goalutil[\buyer]$ and that $\subdiff[{\goalutil[\buyer]}] \expend[\buyer](\price, \goalutil[\buyer]) = \left\{ \expend[\buyer](\price, 1) \right\}$. Similarly, by \Cref{deriv-indirect-util}, we know that $\subdiff[{\budget[\buyer]}] \indirectutil[\buyer](\price, \budget[\buyer])$ is differentiable in $\budget[\buyer]$ and that $\subdiff[{\budget[\buyer]}] \indirectutil[\buyer](\price, \budget[\buyer]) = \left\{ \indirectutil[\buyer](\price, 1) \right\}$. Combining these facts yields:
\begin{align*}
    \subdiff[{\goalutil[\buyer]}] \expend[\buyer](\price, \goalutil[\buyer]) \cdot \subdiff[{\budget[\buyer]}] \indirectutil[\buyer](\price, \budget[\buyer]) &=  \expend[\buyer](\price, 1) \cdot  \indirectutil[\buyer](\price, 1)  && \text{(\Cref{deriv-indirect-util})} \\
    &= \expend[\buyer](\price, \indirectutil[\buyer](\price, 1))  && \text{(\Cref{homo-expend})} \\
    &=  1  && \text{(\Cref{expend-to-budget})}
\end{align*}

\noindent
Therefore, $\frac{1}{\partderiv[{\expend[\buyer](\price, \goalutil[\buyer])}][{\goalutil[\buyer]}]} = \partderiv[{\indirectutil[\buyer](\price, \budget[\buyer])}][{\budget[\buyer]}]$. Combining this conclusion with \Cref{deriv-indirect-util}, we obtain the result.
\end{proof}

\begin{lemma}
\label{differ-const}
Given a CCH Fisher market $(\util, \budget)$,
the dual of our convex program (\Cref{new-convex}) and that of Eisenberg Gale differ by a constant, namely $\sum_{\buyer \in \buyers} \left( \budget[\buyer] \log \budget[\buyer] -  \budget[\buyer] \right)$.
In particular,
\begin{align*}
&\min_{\price \in \R^\numgoods_+} \left\{\sum_{\good \in \goods} \price[\good] - \sum_{\buyer \in \buyers} \budget[\buyer] \log{ \left( \partial_{\goalutil[\buyer]}{\expend[\buyer] (\price, \goalutil[\buyer])} \right)} \right\}\\
&=
\min_{\price \in \R^\numgoods_+} \sum_{\good \in \goods} \price[\good] + \sum_{\buyer \in \buyers} \left( \budget[\buyer] \log{ \left( \indirectutil[\buyer](\price, \budget[\buyer]) \right)} - \budget[\buyer] \right)
- \sum_{\buyer \in \buyers} \left(  \budget[\buyer] \log \budget[\buyer] - \budget[\buyer] \right)
\end{align*}
\end{lemma}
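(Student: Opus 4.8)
The plan is to prove the identity by showing that the two objective functions coincide at every price $\price \in \R^\numgoods_+$ up to the additive constant $\sum_{\buyer \in \buyers}\bigl(\budget[\buyer]\log\budget[\buyer] - \budget[\buyer]\bigr)$. Since shifting an objective by a constant shifts its infimum by that same constant (and leaves the minimizing price set unchanged), this pointwise statement immediately yields the asserted equality of the two minima. Both objectives share the common term $\sum_{\good \in \goods}\price[\good]$ and decompose additively over buyers, so it suffices to establish, for each $\buyer \in \buyers$ and each $\price$, the per-buyer identity
\[
-\budget[\buyer]\log\!\left(\partial_{\goalutil[\buyer]}\expend[\buyer](\price,\goalutil[\buyer])\right) = \budget[\buyer]\log\!\left(\indirectutil[\buyer](\price,\budget[\buyer])\right) - \budget[\buyer] - \left(\budget[\buyer]\log\budget[\buyer] - \budget[\buyer]\right).
\]

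First I would simplify the right-hand side: the $-\budget[\buyer]$ and $+\budget[\buyer]$ cancel, leaving $\budget[\buyer]\log\bigl(\indirectutil[\buyer](\price,\budget[\buyer])\bigr) - \budget[\buyer]\log\budget[\buyer] = \budget[\buyer]\log\bigl(\indirectutil[\buyer](\price,\budget[\buyer])/\budget[\buyer]\bigr)$. Dividing by $\budget[\buyer] > 0$ and exponentiating, the per-buyer identity reduces to the purely multiplicative claim
\[
\frac{1}{\partial_{\goalutil[\buyer]}\expend[\buyer](\price,\goalutil[\buyer])} = \frac{\indirectutil[\buyer](\price,\budget[\buyer])}{\budget[\buyer]}.
\]

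To establish this I would chain three earlier results. By \Cref{inverse-expend}, the reciprocal of the marginal expenditure equals the marginal indirect utility, $\tfrac{1}{\partial_{\goalutil[\buyer]}\expend[\buyer](\price,\goalutil[\buyer])} = \partial_{\budget[\buyer]}\indirectutil[\buyer](\price,\budget[\buyer])$. Since $\util[\buyer]$ is CCH, \Cref{deriv-indirect-util} gives that $\indirectutil[\buyer]$ is differentiable in $\budget[\buyer]$ with $\partial_{\budget[\buyer]}\indirectutil[\buyer](\price,\budget[\buyer]) = \indirectutil[\buyer](\price,1)$, and homogeneity (\Cref{homo-indirect-util}) gives $\indirectutil[\buyer](\price,\budget[\buyer]) = \budget[\buyer]\,\indirectutil[\buyer](\price,1)$, so that $\indirectutil[\buyer](\price,1) = \indirectutil[\buyer](\price,\budget[\buyer])/\budget[\buyer]$. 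Composing these three equalities yields exactly the multiplicative claim, completing the reduction and hence the lemma.

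I do not expect a serious obstacle; the argument is a direct composition of the Shephard-duality reciprocal identity with the degree-one homogeneity of the indirect utility function. The only points requiring care are the standing assumption $\budget[\buyer] > 0$, which is needed to divide by $\budget[\buyer]$ and to form $\log\budget[\buyer]$ (a buyer with $\budget[\buyer] = 0$ contributes identically to both objectives and can be discarded), and checking that \Cref{inverse-expend}, \Cref{deriv-indirect-util}, and \Cref{homo-indirect-util} indeed apply under the CCH and local-nonsatiation hypotheses, so that the relevant derivatives are genuine singletons rather than set-valued subdifferentials and the reciprocal step is well defined.
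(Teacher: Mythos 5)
Your proposal is correct and follows essentially the same route as the paper's proof: both reduce the claim to the chain of identities $\nicefrac{1}{\partial_{\goalutil[\buyer]}\expend[\buyer](\price,\goalutil[\buyer])} = \partial_{\budget[\buyer]}\indirectutil[\buyer](\price,\budget[\buyer]) = \indirectutil[\buyer](\price,1) = \nicefrac{\indirectutil[\buyer](\price,\budget[\buyer])}{\budget[\buyer]}$ via \Cref{inverse-expend}, \Cref{deriv-indirect-util}, and \Cref{homo-indirect-util}, with the only (cosmetic) difference that you verify the identity pointwise per buyer while the paper rewrites the whole objective inside the $\min$.
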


\begin{proof}[\Cref{differ-const}]
\begin{align*}
    &\min_{\price \in \R^\numgoods_+} \sum_{\good \in \goods} \price[\good] + \sum_{\buyer \in \buyers} \left( \budget[\buyer] \log{ \left( \indirectutil[\buyer](\price, \budget[\buyer]) \right)} - \budget[\buyer] \right) \\
    &= \min_{\price \in \R^\numgoods_+} \sum_{\good \in \goods} \price[\good] + \sum_{\buyer \in \buyers} \budget[\buyer] \log{ \left(\budget[\buyer] \indirectutil[\buyer](\price, 1) \right)} - \sum_{\buyer \in \buyers} \budget[\buyer] && \text{(\Cref{homo-indirect-util})} \\
    &= \min_{\price \in \R^\numgoods_+} \left\{\sum_{\good \in \goods} \price[\good] + \sum_{\buyer \in \buyers}  \budget[\buyer] \log{ \left( \indirectutil[\buyer](\price, 1) \right)} \right\} + \sum_{\buyer \in \buyers} \budget[\buyer]\log \budget[\buyer] - \sum_{\buyer \in \buyers} \budget[\buyer] && \label{eq:indep-of--budget} \\
    &= \min_{\price \in \R^\numgoods_+} \left\{\sum_{\good \in \goods} \price[\good] - \sum_{\buyer \in \buyers}  \budget[\buyer] \log{ \left( \frac{1}{\indirectutil[\buyer](\price, 1)} \right)} \right\} + \sum_{\buyer \in \buyers} \budget[\buyer]\log \budget[\buyer] - \sum_{\buyer \in \buyers} \budget[\buyer] \\
    &= \min_{\price \in \R^\numgoods_+} \left\{\sum_{\good \in \goods} \price[\good] - \sum_{\buyer \in \buyers}  \budget[\buyer] \log{ \left( \expend[\buyer](\price, 1) \right)} \right\} + \sum_{\buyer \in \buyers} \budget[\buyer]\log \budget[\buyer] - \sum_{\buyer \in \buyers} \budget[\buyer] && \text{(\Cref{inverse-expend})} \\
    &= \min_{\price \in \R^\numgoods_+} \left\{\sum_{\good \in \goods} \price[\good] - \sum_{\buyer \in \buyers} \budget[\buyer] \log{ \left( \partial_{\goalutil[\buyer]}{\expend[\buyer] (\price, \goalutil[\buyer])} \right)} \right\} + \sum_{\buyer \in \buyers} \budget[\buyer]\log \budget[\buyer] - \sum_{\buyer \in \buyers} \budget[\buyer] && \text{(\Cref{derive-expend})}
\end{align*}
\end{proof}

\newconvex*

\begin{proof}[\Cref{new-convex}]
By Lemma~\ref{differ-const}, our dual and the Eisenberg-Gale dual differ by a constant, which is
independent of the decision variables $\price \in \R^\numgoods_+$.
Hence, the optimal prices $\price^*$ of our dual are the same as those of the Eisenberg-Gale dual, and thus correspond to equilibrium prices in the CCH Fisher market $(\util, \budget)$.
Finally, 
the objective function of our convex program's primal is:
\begin{align*}
    \sum_{\buyer \in \buyers}  \budget[\buyer] \log{\left(\util[\buyer] \left( \allocation[\buyer]\right) \right)} - \sum_{\buyer \in \buyers} \left( \budget[\buyer] \log \budget[\buyer] - \budget[\buyer] \right)
    = \sum_{\buyer \in \buyers}  \budget[\buyer] \log{\util[\buyer] \left( \frac{\allocation[\buyer]}{\budget[\buyer ]}\right)} + \sum_{\buyer \in \buyers} \budget[\buyer] \enspace .
\end{align*}
\end{proof}

\section{Danskin's Theorem Section \ref{sec:equiv} Omitted Proofs }\label{sec:Proofs5}

Danskin's theorem \cite{danskin1966thm} offers insights into optimization problems of the form:
    $\min_{\x \in X} f (\x, \price)$,
where $X \subset \R^\numgoods$ is compact and non-empty.
Among other things, Danskin's theorem allows us to compute the subdifferential of value of this optimization problem with respect to $\price$.
\begin{theorem}[Danskin's Theorem \cite{danskin1966thm}]
\label{Danskinsthm}
    Consider an optimization problem of the form:
    $\min_{\x \in X} f (\x, \price)$, where $X \subset \R^\numgoods$ is compact and non-empty.
    Suppose that $X$ is convex and that $f$ is concave in $\x$. Let $V(\price) = \min_{\x \in X} f (\x, \price)$ and 
    $X^*(\price) = \argmin_{\x \in X} f (\x, \price)$. Then the subdifferential of $V$ at $\widehat{\price}$ is given by
        $\subdiff[\price] V(\widehat{\price}) = \left\{  \grad[\price] f (\x^*(\widehat{\price}), \widehat{\price}) \mid \x^*(\widehat{\price}) \in X^*(\widehat{\price})\right\}$ .
\end{theorem}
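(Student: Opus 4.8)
The plan is to prove the two set inclusions separately, treating $V$ as a concave function of $\price$ (it is a pointwise minimum of the maps $\price \mapsto f(\x,\price)$, each of which is concave in $\price$ in the application, where $f(\x,\price) = \price \cdot \x$ is in fact linear) and reading $\subdiff[\price] V$ as the corresponding set of supergradients. Throughout I will use that $X$ compact and $f$ continuous guarantee that the minimum defining $V(\widehat{\price})$ is attained and $X^*(\widehat{\price})$ is nonempty, and that $\grad[\price] f$ is continuous jointly in $(\x,\price)$.

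\emph{Easy inclusion.} First I would show $\{\grad[\price] f(\x^*(\widehat{\price}),\widehat{\price}) : \x^*(\widehat{\price}) \in X^*(\widehat{\price})\} \subseteq \subdiff[\price] V(\widehat{\price})$. Fix any $\x^* \in X^*(\widehat{\price})$. For every $\price$ we have $V(\price) = \min_{\x \in X} f(\x,\price) \leq f(\x^*,\price)$, and since $f(\x^*,\cdot)$ is concave (affine) in $\price$, $f(\x^*,\price) \leq f(\x^*,\widehat{\price}) + \grad[\price] f(\x^*,\widehat{\price})^T(\price - \widehat{\price})$. Because $\x^*$ is optimal at $\widehat{\price}$, $f(\x^*,\widehat{\price}) = V(\widehat{\price})$, so chaining the two bounds gives $V(\price) \leq V(\widehat{\price}) + \grad[\price] f(\x^*,\widehat{\price})^T(\price - \widehat{\price})$, i.e.\ $\grad[\price] f(\x^*,\widehat{\price})$ is a supergradient of $V$ at $\widehat{\price}$. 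Since $\subdiff[\price] V(\widehat{\price})$ is convex, this inclusion extends to the convex hull $C := \mathrm{conv}\{\grad[\price] f(\x^*,\widehat{\price}) : \x^* \in X^*(\widehat{\price})\}$.

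\emph{Hard inclusion.} The reverse containment $\subdiff[\price] V(\widehat{\price}) \subseteq C$ is the crux, and I would obtain it by computing the one-sided directional derivative $V'(\widehat{\price};\bm{d}) = \lim_{t \downarrow 0} t^{-1}(V(\widehat{\price}+t\bm{d}) - V(\widehat{\price}))$ in two ways. On one hand, standard concave-analysis duality gives $V'(\widehat{\price};\bm{d}) = \min_{\subgrad \in \subdiff[\price] V(\widehat{\price})} \subgrad^T \bm{d}$. On the other hand, I claim $V'(\widehat{\price};\bm{d}) = \min_{\x^* \in X^*(\widehat{\price})} \grad[\price] f(\x^*,\widehat{\price})^T\bm{d} = \min_{c \in C} c^T\bm{d}$. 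The ``$\leq$'' half is immediate from the easy-inclusion bound evaluated along $\price = \widehat{\price} + t\bm{d}$; the ``$\geq$'' half is where the real work lies: choosing $\x_t \in X^*(\widehat{\price} + t\bm{d})$, a first-order/mean-value expansion yields $t^{-1}(V(\widehat{\price}+t\bm{d}) - V(\widehat{\price})) \geq \grad[\price] f(\x_t, \xi_t)^T\bm{d}$ for some $\xi_t$ on the segment, and I would pass to the limit along a subsequence using upper hemicontinuity of the argmin correspondence $X^*(\cdot)$ (a consequence of compactness of $X$ and continuity of $f$, via Berge's maximum theorem) together with joint continuity of $\grad[\price] f$ to force any limit point of $\x_t$ into $X^*(\widehat{\price})$. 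Since $C$ and $\subdiff[\price] V(\widehat{\price})$ are then nonempty compact convex sets with identical lower support functions $\bm{d} \mapsto \min_{c} c^T\bm{d}$, they coincide. Finally, in the application to Shephard's lemma $f(\x,\price) = \price \cdot \x$ is linear in $\x$, so $X^*(\widehat{\price})$ is a face of the convex set $X$ and $\grad[\price] f(\x^*,\widehat{\price}) = \x^*$; thus the image $\{\x^* : \x^* \in X^*(\widehat{\price})\}$ is already convex and equals $C$, recovering the hull-free statement exactly as written.

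The step I expect to be the main obstacle is the ``$\geq$'' inequality in the directional-derivative computation: it is the only place that genuinely uses compactness of $X$ and the regularity of the argmin correspondence, and it is easy to get wrong if one is not careful about extracting convergent subsequences of the near-optimal points $\x_t$ and controlling the gradient along them. Everything else is either the one-line supergradient inequality of the easy inclusion or routine support-function/separation bookkeeping.
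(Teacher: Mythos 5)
Your proposal is sound, but note that the paper does not actually prove this statement: Danskin's theorem is imported wholesale from \cite{danskin1966thm} and stated in the appendix only so that it can be invoked in the proof of the generalized Shephard's lemma. So there is no in-paper argument to compare against; what you have written is the standard textbook proof (easy supergradient inclusion, then the reverse inclusion by matching the one-sided directional derivative $V'(\widehat{\price};\bm{d})$ against the lower support function of the superdifferential, with Berge's maximum theorem supplying upper hemicontinuity of $X^*(\cdot)$), and the outline is correct. Two of your side remarks are worth keeping because they quietly repair imprecisions in the statement as the paper records it. First, the hypothesis ``$f$ concave in $\x$'' is not what makes the argument work; you correctly identify that what is needed is concavity (here linearity) of $f(\x,\cdot)$ in $\price$, so that $V$ is concave and $\subdiff[\price]V$ is the right object. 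Second, under the stated hypotheses the equality should really read $\subdiff[\price]V(\widehat{\price}) = \mathrm{conv}\{\grad[\price]f(\x^*,\widehat{\price}) \mid \x^* \in X^*(\widehat{\price})\}$ in general --- the argmin of a concave objective over a convex compact set need not be convex, nor need the gradient image be --- and your proof correctly establishes the hulled version first and then observes that for $f(\x,\price)=\price\cdot\x$ the set $X^*(\widehat{\price})$ is a face of $X$ and the gradient map is the identity, so the hull is superfluous in the one application (Shephard's lemma) the paper actually makes of the theorem. The only step I would ask you to write out in full is the passage to the limit in the ``$\geq$'' half: you should use that the concave difference quotient $t^{-1}(V(\widehat{\price}+t\bm{d})-V(\widehat{\price}))$ is monotone in $t$ so the limit exists, and then extract a convergent subsequence of $(\x_t,\xi_t)$ inside the compact set $X$ times the segment, exactly as you sketch.
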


\shepherd*
\begin{proof}[\cref{shepherd}]
Recall that $\expend[\buyer](\price, \goalutil[\buyer]) = \min_{\x \in \R^\numgoods_+ : \util[\buyer](\x) \geq \goalutil[\buyer]} \price \cdot \x$.
Without loss of generality, we can assume that consumption set is bounded from above, since utilities are assumed to represent locally non-satiated preferences, i.e., $ \min_{\x \in X : \util[\buyer](\x) \geq \goalutil[\buyer]} \price \cdot \x$ where $X \subset \R^\numgoods_+$ is compact.
Using Danskin's theorem:
\begin{align*}
    \subdiff[\price] \expend[\buyer](\price, \goalutil[\buyer]) &= \left\{  \grad[\price] \left( \price \cdot \x \right)(\x^*(\price, \goalutil[\buyer])) \mid \x^*(\price, \goalutil[\buyer]) \in \hicksian[\buyer](\price, \goalutil[\buyer])\right\} && \text{(Danskin's Thm)} \\
    &= \left\{  \x^*(\price, \goalutil[\buyer]) \mid \x^*(\price, \goalutil[\buyer]) \in \hicksian[\buyer](\price, \goalutil[\buyer])\right\} \\
    &= \hicksian[\buyer](\price, \goalutil[\buyer])
\end{align*}

\noindent
The first equality follows from Danskin's theorem, using the facts that the objective of the expenditure minimization problem is affine and the constraint set is compact.
The second equality follows by calculus, and the third, by the definition of Hicksian demand.
\end{proof}

\excessdemand*
\begin{proof}[\Cref{excess-demand}]

For all goods $\good \in \goods$, we have:
\begin{align*}
    &\subdiff[{\price[\good]}] \left(\sum_{\good \in \goods} \price[\good] - \sum_{\buyer \in \buyers} \budget[\buyer] \log{ \partial_{\goalutil[\buyer]}{\expend[\buyer] (\price, \goalutil[\buyer])}} \right) \\
    &=\{1\} - \subdiff[{\price[\good]}] \left(\sum_{\buyer \in \buyers} \budget[\buyer] \log{ \partial_{\goalutil[\buyer]}{\expend[\buyer] (\price, \goalutil[\buyer])}} \right) \\
    &= \{1\} - \sum_{\buyer \in \buyers} \subdiff[{\price[\good]}] \left( \budget[\buyer] \log{ \partial_{\goalutil[\buyer]}{\expend[\buyer] (\price, \goalutil[\buyer])}} \right) \\
    &=\{1\} - \sum_{\buyer \in \buyers} \marshallian[\buyer][\good](\price, \budget[\buyer]) && \text{(\Cref{lemma-deriv-marshallian})} \\
    &= -\excess[\good](\price)
\end{align*}

\end{proof}

\lemmaderivmarshallian*

\begin{proof}[\Cref{lemma-deriv-marshallian}]
Without loss of generality, we can assume $\util[\buyer]$ is homogeneous of degree 1. Then:
\begin{align*}
    \subdiff[\price] \left( 
    \budget[\buyer] \log \left(\partderiv[{\expend[\buyer](\price, \goalutil[\buyer])}][{\goalutil[\buyer]}] \right) \right)
    &= \left( \frac{\budget[\buyer]}{\partderiv[{\expend[\buyer](\price, \goalutil[\buyer])}][{\goalutil[\buyer]}]} \right) \subdiff[\price] \left( 
    \partderiv[{\expend[\buyer](\price, \goalutil[\buyer])}][{\goalutil[\buyer]}] \right) 
    \\
    &= \budget[\buyer] \left( \partderiv[{\indirectutil[\buyer](\price, \budget[\buyer])}][{\budget[\buyer]}] \right) \subdiff[\price] \left( \partderiv[{\expend[\buyer](\price, \goalutil[\buyer])}][{\goalutil[\buyer]}] \right) && \text{(\Cref{inverse-expend})} \\
    &= \budget[\buyer] \left( \partderiv[{\indirectutil[\buyer](\price, \budget[\buyer])}][{\budget[\buyer]}] \right) \subdiff[\price]
    \expend[\buyer](\price, 1) && \text{(\Cref{derive-expend})} \\
    &= \budget[\buyer] \left( \partderiv[{\indirectutil[\buyer](\price, \budget[\buyer])}][{\budget[\buyer]}] \right) \hicksian[\buyer](\price, 1) && \text{(Shephard's Lemma)} \\
    &= \budget[\buyer] \, \indirectutil[\buyer](\price, 1) \, \hicksian[\buyer](\price, 1) && \text{(\Cref{deriv-indirect-util})} \\
    &=  \indirectutil[\buyer](\price, \budget[\buyer]) \, \hicksian[\buyer](\price, 1) && \text{(\Cref{homo-indirect-util})} \\
    &=  \hicksian[\buyer]\left(\price, \indirectutil[\buyer](\price, \budget[\buyer]) \right) && \text{(\Cref{homo-expend})} \\
    &=  \marshallian[\buyer](\price, \budget[\buyer]) && \text{(\Cref{hicksian-marshallian})}
\end{align*}
\end{proof}


\end{document}